\Crefname{equation}{Eq.}{Eqs.}
	\newcommand{\authnote}[3]{{\color{#3} {\bf  \text{#1: }}#2}}
	\newcommand{\authnote}[3]{}
\newtheorem{theorem}{Theorem}
\newtheorem{definition}{Definition}
\newtheorem{corollary}{Corollary}[theorem]
\newtheorem{lemma}{Lemma}
\newcommand{\thm}[1]{\hyperref[thm:#1]{Theorem~\ref*{thm:#1}}}
\newcommand{\defn}[1]{\hyperref[defn:#1]{Definition~\ref*{defn:#1}}}
\newcommand{\lem}[1]{\hyperref[lem:#1]{Lemma~\ref*{lem:#1}}}
\newcommand{\prop}[1]{\hyperref[prop:#1]{Proposition~\ref*{prop:#1}}}
\newcommand{\fig}[1]{\hyperref[Fig:#1]{Figure~\ref*{Fig:#1}}}
\newcommand{\tab}[1]{\hyperref[tab:#1]{Table~\ref*{tab:#1}}}
\renewcommand{\sec}[1]{\hyperref[Sec:#1]{Section~\ref*{Sec:#1}}}
\newcommand{\append}[1]{\hyperref[App:#1]{Appendix~\ref*{App:#1}}}
\newcommand{\cor}[1]{\hyperref[cor:#1]{Corollary~\ref*{cor:#1}}}
\newcommand{\obs}[1]{\hyperref[obs:#1]{Observation~\ref*{obs:#1}}}
\newcommand{\cmark}{\ding{51}}%
\newcommand{\xmark}{\ding{55}}%
\definecolor{amethyst}{rgb}{0.6, 0.4, 0.8}
\newcommand{\norm}[1]{\left\lVert#1\right\rVert}
\newcommand{\vertiii}[1]{{\left\vert\kern-0.25ex\left\vert\kern-0.25ex\left\vert #1
		\right\vert\kern-0.25ex\right\vert\kern-0.25ex\right\vert}}
\newcommand{\bigO}[1]{\mathcal{O}\left( #1 \right)}
\newcommand{\bigOt}[1]{\widetilde{\mathcal{O}}\left( #1 \right)}	
\newcommand{\nrm}[1]{\left\| #1 \right\|}
\newcommand{\eps}{\varepsilon}
\newcommand{\Tr}{\mbox{\rm Tr}}
\newcommand{\tr}[1]{\Tr\left(#1\right)}
\DeclarePairedDelimiterX\ketbra[2]{| }{|}{#1 \delimsize\rangle\!\delimsize\langle #2}
\let\pra\@undefined                  % undefine \pra
\let\prb\@undefined                  % undefine \prb
\let\pre\@undefined                  % undefine \pre
\let\prl\@undefined                  % undefine \prl
\let\rmp\@undefined                  % undefine \rmp
\let\jmp\@undefined                  % undefine \jmp
\let\jmo\@undefined                  % undefine \jmo
\begin{document}

\title{A streamlined quantum algorithm for topological data analysis with exponentially fewer qubits}

\author{Sam McArdle}
\affiliation{AWS Center for Quantum Computing, Pasadena, CA 91125, USA}

\author{Andr\'{a}s Gily\'{e}n}
\affiliation{Alfr\'{e}d R\'{e}nyi Institute of Mathematics, Budapest, Hungary}

\author{Mario Berta}
\affiliation{Institute for Quantum Information, RWTH Aachen University, Aachen, Germany}
\affiliation{Department of Computing, Imperial College London, London, UK}

\date{30/3/26}

\begin{abstract}
Topological invariants of a dataset, such as the number of holes that survive from one length scale to another (persistent Betti numbers) can be used to analyze and classify data in machine learning applications. We present an improved quantum algorithm for computing persistent Betti numbers, and provide an end-to-end complexity analysis. Our approach provides large polynomial time improvements, and an exponential space saving, over existing quantum algorithms. Subject to gap dependencies, our algorithm obtains an almost quintic speedup in the number of datapoints over previously known rigorous classical algorithms for computing the persistent Betti numbers to constant additive error -- the salient task for applications. However, we also introduce a quantum-inspired classical power method with provable scaling only quadratically worse than the quantum algorithm. This gives a provable classical algorithm with scaling comparable to existing classical heuristics, subject to assumptions on the gap scaling. We discuss whether quantum algorithms can achieve an exponential speedup for tasks of practical interest, as claimed previously. We conclude that there is currently no evidence for this being the case.
\end{abstract}

\maketitle

\section{Introduction}\label{Sec:Intro}
Topological data analysis (TDA) is a method for extracting insights from large, high-dimensional, and noisy datasets~\cite{carlsson2020topologicalreview}. TDA extends methods from algebraic topology (using group theory to classify topological objects) to datapoints sampled from an underlying topological manifold. Topological features (such as the number of connected components, or holes of a given dimension) can be good identifiers for a dataset, as they are often robust to noise and perturbations accrued when collecting the data, unlike geometric features. Topological features may be used directly to interpret the dataset -- for example, to find holes in the coverage of sensors in a network~\cite{de2007coverage}, identify isolated areas in voter preference data~\cite{feng2021persistent}, or classify structure in the distribution of matter in the universe~\cite{pranav2017topology}. They can also be used as general identifiers to compare different datasets in situations where the topology does not necessarily have an obvious interpretation -- such as identifying participant groups in functional magnetic resonance imaging brain scans~\cite{rieck2020uncovering}, or detecting patterns in time series data~\cite{perea2015sliding} to predict financial crashes~\cite{gidea2018financialcrash}. Using topological features as input to machine learning models is growing increasingly popular within physics~\cite{leykam2022TDAphysicsreview} (such as the unsupervised detection of phase transitions in lattice models~\cite{sale2022latticemodelTDA,tirelli2021hubbardmodelTDA}) and other areas of science~\cite{hensel2021survey}. 

Classical algorithms for TDA proceed
through linear algebra. An approximation of the manifold is constructed from a set of simplices (vertices, line segments, triangles, and higher dimensional generalizations) built by connecting together points within a certain length scale of each other. We are interested in computing the persistent Betti number $\beta_k^{i,j}$, the number of $k$-dimensional holes that survive from scale $i$ to scale $j$, for a range of dimensions and length scales. The number of $k$-dimensional holes at a single scale is known as the Betti number $\beta_k^i$, and is a less informative quantity~\cite{neumann2019limitations}. A dataset consisting of $N$ points can contain $\binom{N}{k+1}$ $k$-dimensional simplices, which we represent by orthonormal basis vectors. Classical algorithms for $\beta_k^{i,j}$ (and $\beta_k^i$) scale polynomially in time and space with the size of the simplicial complex, which can make them too costly to apply to systems with many datapoints, or for high $k$. However, large $k$ values are likely also less prevalent in applications due to their reduced interpretability~\cite{leykam2022TDAphysicsreview}. 

\begin{table*}[!t]
    \centering
    \renewcommand{\arraystretch}{2.5}
    \begin{adjustbox}{max width=\textwidth}
    \begin{tabular}{|c|c|c|c|c|c|c|}
    \hline
         & Algorithm & $\beta_k^i$ & $\beta_k^{i,j}$ & \makecell{Operator encoding \\ the topology} & Gate count & \makecell{Space complexity} \\ \hline
        \multirow{6}{*}{Quantum} & LGZ~\cite{lloyd2016quantum} & \cmark  & \cmark\kern-2mm\xmark
        \footnote{It is suggested in Ref.~\cite{web:LloydTalk} that this approach can be used for computing persistent Betti numbers by exploiting the quantum Zeno effect. We have been unable to reproduce this line of reasoning, as we discuss in Appendix~\ref{AppSubSub:Zeno}.}
        & \makecell{Hermitian embedding \\ of boundary operators \\ $\partial_k^i$, $\partial_{k+1}^i$} & $\mathcal{O}\left( \frac{LN^3 \beta_k^i \binom{N}{k+1}}{\Delta^2   \mathrm{Min}\left( \Lambda_{\partial_k^i}, \Lambda_{\partial_{k+1}^i} \right)}  \right)$ & $\mathcal{O}(N^2)$ \\ \cline{2-7}
        & GK~\cite{gunn2019review} & \cmark  & \xmark & \makecell{$\bigoplus_k \partial_k + \partial_k^\dag $} & $\tilde{\mathcal{O}}\left( \frac{LN^2 k \sqrt{\beta_k^i \binom{N}{k+1}}}{\Delta \Lambda}  \right)$ & $\mathcal{O}(N)$ \\ \cline{2-7}
        & UAS+~\cite{ubaru2021quantum} & \cmark  & \xmark & \makecell{Combinatorial \\ Laplacian} & $\mathcal{O}\left( \frac{LN (\beta_k^i)^{1.5} \binom{N}{k+1}^{1.5}}{\Delta^3 \Lambda} \right)$\footnote{One gets this scaling via substituting $|S_k^i|$ by the upper bound $\binom{N}{k+1}$. } 
        & $\mathcal{O}(N)$ \\ \cline{2-7}
        & Hay~\cite{hayakawa2021quantum} & \cmark & \cmark & \makecell{Persistent \\ combinatorial \\ Laplacian} & $\tilde{\mathcal{O}}\left( \frac{L N^8 k^4  \beta_k^{i,j} \binom{N}{k+1}}{\Delta^2 \Lambda_1^2 \Lambda_2}  \right)$ & $\mathcal{O}(N)$ \\ \cline{2-7}
        & AMS~\cite{ameneyro2022quantum} & \cmark & \cmark\kern-2mm\xmark\footnote{We have been unable to verify the correctness of this approach, due to nuances associated with changing the basis from simplices in the complex at scale $i$ to those at scale $j$. We discuss this in more detail in Appendix~\ref{AppSubSub:RestrictedChainGroup}.} & \makecell{Hermitian embedding \\ of boundary operators} & Unspecified & $\mathcal{O}(N^2)$ \\ \cline{2-7}
        & BSG+~\cite{berry2022quantifying} & \cmark & \xmark & \makecell{Hermitian embedding \\ of boundary operators}  &   $\tilde{\mathcal{O}}\left( \frac{LN^{2} \sqrt{\beta_k^{i} \binom{N}{k+1}}}{\Delta   \Lambda}  \right)$ & $\mathcal{O}(N\log(N))$ \\ \cline{2-7}
        & This work & \cmark & \cmark & $\partial_k^i, \partial_{k+1}^j$ &   $\tilde{\mathcal{O}}\left( \frac{L(Nk)^{3/2} \sqrt{ \beta_k^{i,j} \binom{N}{k+1}}}{\Delta   \Lambda_{\Pi \Pi}^{0.5} \mathrm{Min}\left( \Lambda_{\partial_k^i}, \Lambda_{\partial_{k+1}^j} \right)}  \right)$ \footnote{We have converted the gate depth obtained for our method to a gate count by multiplying by the spatial complexity $k\log(N)$.} & $\mathcal{O}(k\log(N))$ \\ \hline \hline
        \multirow{4}{*}{Classical} & Textbook~\cite{Edelsbrunner2002} & \cmark  & \cmark & \makecell{Boundary \\ operator} & $\mathcal{O}\left( |S_{k+1}^j|^3\right)$ & $\mathcal{O}(|S_{k+1}^j|^2)$ \\ \cline{2-7}
        & \makecell{Optimised\\ textbook~\cite{milosavljevic2011zigzag,milosavljevic2010:inriaReport}} & \cmark  & \cmark & \makecell{Boundary \\ operator} & $\mathcal{O}\left( |S_{k+1}^j|^\omega \right)$ & $\mathcal{O}(|S_{k+1}^j|^2)$ \\ \cline{2-7}
        & \makecell{Heuristic \\ sparsification~\cite{mischaikow2013morse}} & \cmark  & \cmark & \makecell{Boundary \\ operator} & $\mathcal{O}\left(|S_{k+1}^j| + |{\bar{S}_{k+1}^j}|^\omega \right)$ & $\mathcal{O}\left(\mathrm{Max}(|S_{k+1}^j|, |{\bar{S}_{k+1}^j}|^2)\right)$ \\ \cline{2-7}
        & \makecell{Power \\ method~\cite{friedman1998computing}} & \cmark  & \xmark & \makecell{Combinatorial \\ Laplacian} & $\tilde{\mathcal{O}}\left(\frac{L|S_k^i| (k^2 \beta_k^i + k (\beta_k^i)^2)}{\Lambda}  \right)$ & $\mathcal{O}\big{(}|S_k^i|(k + \beta_k^i) \big{)}$ \\ \cline{2-7}
        & \makecell{Quantum-inspired \\ power method \\ (This work)} & \cmark  & \cmark & \makecell{Boundary \\ Operator} & $\mathcal{O} \left( L\beta_k^{i,j} N S \frac{\log^2\left(\frac{1}{\epsilon}\right)}{\mathrm{Min}\left(\Lambda_{\partial_k^i}, \Lambda_{\partial_{k+1}^j}\right) \Lambda_{\Pi \Pi}^{0.5}} + S (\beta_k^{i,j})^2 + (\beta_k^{i,j})^3 \right)$ & $\mathcal{O}(S(N+\beta_k^{i,j}))$ \\ \cline{1-7}
    \end{tabular}
    \end{adjustbox}
    \caption{A comparison of quantum and classical algorithms for topological data analysis. We present the complexities to estimate the quantities $\beta_k^i, \beta_k^{i,j}$ to additive error $\Delta$, at $L$ different (pairs of) length scales. $|S_k^i|$ denotes the number of $k$-simplices in the complex at scale $i$. For classical algorithms, we do not include the cost $N \sum_{j=0}^k |S_j^i|$ of finding the $k$-simplices in the complex at scale $i$, as this is typically subleading for small $k$, and is usually ignored in existing classical analyses. We also do not explicitly track logarithmic factors in the classical complexities, associated with updating simplices, etc. The complexities in Refs.~\cite{lloyd2016quantum,hayakawa2021quantum,ameneyro2022quantum} were not explicitly stated in terms of the parameters that we use in this paper, hence the above bounds are our best estimates based on the subroutines used in those works. Initial quantum algorithms were only able to compute regular Betti numbers, not persistent Betti numbers. For the quantum algorithms we present the complexity for the regime which we find practically most relevant, where $k \ll N$, and $|S_k^i|$ is approximately $\binom{N}{k+1}$. The quantity $\Lambda$ denotes the smallest non-zero singular value of the matrix used for encoding the topology ($\partial_k^i, \partial_{k+1}^i$, or the combinatorial Laplacian). The gap $\Lambda_{\Pi\Pi}$ appearing in our work is defined in Sec.~\ref{Subsec:SubspaceProj}, and equals $1$ when computing the non-persistent $\beta_k^i$ values.
   	The quantity $\bar{S}_k^i$ denotes the number of $k$-simplices in a sparsified complex, and $\omega \in[2,2.372)$ is the matrix multiplication exponent. We define the parameter $S := |S_{k-1}^i| + |S_{k}^j| + |S_{k+1}^j|$
	} \label{tab:PriorAlgorithmComp}
	\renewcommand{\arraystretch}{1}
\end{table*}

A quantum algorithm for computing Betti numbers has been proposed~\cite{lloyd2016quantum}, experimentally demonstrated as a proof-of-principle~\cite{huang2018demonstration, akhalwaya2022TowardsNisqTDA}, refined~\cite{gunn2019review, gyurik2020towards, ubaru2021quantum}, and recently extended to the more informative persistent Betti numbers~\cite{hayakawa2021quantum} -- with claims of an exponential speedup over classical algorithms. The quantum algorithm exploits the ability of $N$ qubits to represent and perform linear algebra in a vector space of dimension $2^N$, in some cases with cost polynomial in $N$. However, the quantum algorithm actually returns the (persistent) Betti number, normalized by the number of $k$-simplices in the complex. While tasks closely related to normalized Betti number estimation have been shown to be DQC1-hard~\cite{gyurik2020towards,cade2021complexity} (note that the task of estimating normalized (persistent) Betti numbers of typically considered clique complexes has not yet been shown to be DQC1-hard~\cite{cade2021complexity}), there is currently no evidence for a regime with practical use cases. When using quantum algorithms to compute the salient quantity $\beta_k^{i,j}$, the exponential advantage is lost (a similar issue afflicts quantum algorithms for approximating the Jones polynomial~\cite{aharonov2006polynomial, shor2007estimating} and other \#P problems~\cite{bordewich2009approximate}). Therefore, quantum algorithms for TDA should be optimised, if they are to achieve significant polynomial advantage over classical approaches for applications of interest.

\section{Overview of results}\label{Subsec:ResultsOverview}

We present and analyze a streamlined quantum algorithm for estimating persistent Betti numbers. 
We reduce the problem to estimating the normalized rank of a projector that encodes the relevant topological features of the data, and show how this problem can be efficiently solved using quantum singular value transformation (QSVT)~\cite{gilyen2019quantum}. QSVT provides efficient means for applying a polynomial function to the singular values of a matrix given by a unitary block encoding. 

We show how to construct block encodings of operators that represent the topology of a simplicial complex, and how to use QSVT to transform these into projectors onto the relevant subspaces.\footnote{We were motivated by Ref.~\cite{hayakawa2021quantum}, which also used QSVT for computing persistent Betti numbers. Our approach is similar at a high level, but we work with different and more elementary operators to encode the topology. We provide an explicit comparison between our approach and the algorithm of Ref.~\cite{hayakawa2021quantum} in Sec.~\ref{Subsec:QuantumComplexity}.} We consider two possible ways of mapping simplices to qubits; a direct approach (taken by previous quantum algorithms) that uses $N$ qubits for $N$ datapoints, and a compact approach, introduced herein, that uses $(k+1)\log(N)$ qubits to store a $k$-simplex state. The compact mapping provides an exponential space saving over classical and quantum algorithms for $k = \bigO{\mathrm{polylog}(N)}$.

For a dense simplicial complex constructed from $N$ points in $\mathbb{R}^d$, such that the number of $k$-simplices scales roughly as $\binom{N}{k+1}$, we have the following main result:\footnote{As noted in Appendix~\ref{AppSec:OverallComplexity} the following complexity is achieved when $k(\log(d)\log(b) + b) < N$, where $b$ is the number of bits required to describe the coordinates.} (see Appendix~\ref{AppSec:OverallComplexity} for details)

\begin{theorem}\label{theorem:overallcomplexity}
(Persistent Betti number estimation) For the above parameters, we can estimate $\beta_k^{i,j}$ to additive error $\Delta$, with success probability greater than $(1 - \eta)$ using $\bigO{\log\left(\frac{1}{\eta} \right)}$ repetitions of a quantum circuit that acts on $\bigO{\min\{k\log(N), N\}}$ qubits, and has depth
\begin{equation}
    \tilde{\mathcal{O}}\left( \frac{N^{3/2} \sqrt{k \beta_k^{i,j} \binom{N}{k+1}}}{\Delta   \Lambda_{\Pi\Pi}^{0.5} \mathrm{Min}\left(\Lambda_{\partial_k^i}, \Lambda_{\partial_{k+1}^j} \right)} \right),
\end{equation}
(if using the circuit acting on $N$ qubits, then the depth is reduced by a factor of $\sqrt{k}$) where $\Lambda_{\partial_k^i}, \Lambda_{\partial_{k+1}^j}$ denote the size of the smallest non-zero singular value of the specified boundary operators, and $\Lambda_{\Pi\Pi}$ is another gap parameter defined in Sec.~\ref{Subsec:SubspaceProj}, which has value $1$ if $i=j$.
\end{theorem}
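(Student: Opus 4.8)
The plan is to reduce persistent Betti number estimation to estimating the normalized rank (trace) of a single orthogonal projector $\Pi_{i,j}$ whose range has dimension exactly $\beta_k^{i,j}$, acting on the $D$-dimensional chain space with $D=\binom{N}{k+1}$. Recall that $\beta_k^{i,j}=\dim Z_k^i-\dim(Z_k^i\cap B_k^j)$, where $Z_k^i=\ker\partial_k^i$ are the $i$-scale cycles and $B_k^j=\mathrm{im}\,\partial_{k+1}^j$ the $j$-scale boundaries. I would take $\Pi_{i,j}=P_{Z_k^i}-P_{Z_k^i\cap B_k^j}$, so that $\mathrm{Tr}(\Pi_{i,j})=\beta_k^{i,j}$ and $\mathrm{Tr}(\Pi_{i,j})/D$ is the quantity ultimately estimated. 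First I would verify this identity from the standard definition of persistent homology, and check that the projector onto the intersection is exactly the eigenvalue-one eigenspace of $P_{Z_k^i}P_{B_k^j}P_{Z_k^i}$, whose spectral gap below $1$ is the parameter $\Lambda_{\Pi\Pi}$ (equal to $1$ when $i=j$, since then $B_k^j\subseteq Z_k^i$, the product is already a projector, and no alternating-projection step is needed).

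Next I would build unitary block encodings of the boundary operators $\partial_k^i$ and $\partial_{k+1}^j$ in either the compact $(k+1)\log N$-qubit encoding or the direct $N$-qubit encoding, whichever is smaller, yielding the qubit count $\bigO{\min\{k\log N,N\}}$. The boundary map sends a $k$-simplex to the signed sum of its $k+1$ faces; restricting to the complex at a given scale requires a membership oracle that checks pairwise distances, which I would implement coherently at cost $\tildecO{N}$ per block-encoding query, and I would absorb the operator norm $\|\partial\|=\bigO{\sqrt N}$ on a dense complex into the subnormalization $\alpha=\bigO{\sqrt N}$ of the block encoding.

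I would then apply QSVT to these block encodings to synthesize the required projectors. A polynomial approximating the threshold function that is $1$ below the gap and $0$ above it converts the block encoding of $\partial_k^i$ into a block encoding of $P_{Z_k^i}$; because QSVT acts on the rescaled singular values, resolving the gap $\Lambda_{\partial_k^i}$ after subnormalization needs degree $\tildecO{\sqrt N/\Lambda_{\partial_k^i}}$, and likewise for $\partial_{k+1}^j$. Composing $P_{Z_k^i}$ with $P_{B_k^j}$ and applying a further QSVT step to isolate the eigenvalue-one space of their product contributes the $1/\Lambda_{\Pi\Pi}$ factor. Multiplying the $\tildecO{N}$ query cost, the $\sqrt N$ subnormalization, and the inverse-gap degrees gives a circuit of depth $\tildecO{N^{3/2}/(\Lambda_{\Pi\Pi}\,\mathrm{Min}(\Lambda_{\partial_k^i},\Lambda_{\partial_{k+1}^j}))}$ for one coherent application of $\Pi_{i,j}$.

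Finally I would estimate $\mathrm{Tr}(\Pi_{i,j})/D$ by preparing the maximally mixed state over the $D$ simplices (a maximally entangled state across the simplex register and a reference register, at cost $\tildecO{k\log N}$), applying the block-encoded $\Pi_{i,j}$, and running amplitude estimation on the success flag, whose acceptance probability is exactly $\beta_k^{i,j}/D$. To recover $\beta_k^{i,j}$ to additive error $\Delta$ I need this probability to precision $\Delta/D$; since $\beta_k^{i,j}=\bigO{\mathrm{polylog}(D)}$, the amplitude is of order $\sqrt{\beta_k^{i,j}/D}$ and amplitude estimation costs $\tildecO{\sqrt D/\Delta}$, which multiplies the single-application depth to give the stated overall depth, with $\tildecO{\log(\sqrt D/\Delta)\log(1/\eta)}$ median-amplified repetitions for confidence $1-\eta$. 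The hard part is the machinery for the non-commuting, different-scale subspaces $Z_k^i$ and $B_k^j$: establishing that the product-of-projectors construction yields a clean eigenvalue-one eigenspace with a well-defined gap $\Lambda_{\Pi\Pi}$, and propagating the polynomial-approximation errors through the composed QSVT transformations so that the final normalized-rank estimate still has additive error $\Delta$ without the gaps or the error terms compounding.
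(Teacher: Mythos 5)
Your proposal follows the same core route as the paper's proof: block-encode $\partial_k^i$ and $(\partial_{k+1}^j)^\dagger$ with subnormalization $\mathcal{O}(\sqrt{N})$ and $\tilde{\mathcal{O}}(N)$ membership-oracle cost per query, apply QSVT threshold polynomials of degree $\tilde{\mathcal{O}}(\sqrt{N}/\Lambda_{\partial})$ to get the kernel/image projectors, isolate the intersection by a second thresholding of their product at cost $1/\Lambda_{\Pi\Pi}$, and run amplitude estimation at amplitude precision $\sim \Delta/\sqrt{\smash{\binom{N}{k+1}}}$ with binary search plus median amplification -- which reproduces the stated depth, repetition count, and qubit count. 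Where you genuinely differ is the bookkeeping of the final estimation. The paper's proof (Corollary 1, Appendix B) solves \emph{three} instances of normalized projector rank estimation -- $|S_k^i|/\binom{N}{k+1}$, $\dim\mathrm{Ker}(\partial_k^i)/|S_k^i|$, and $\dim(\mathrm{Ker}\cap\mathrm{Im})/|S_k^i|$ -- against the purified maximally mixed state over simplices \emph{present in the complex} (prepared by fixed-point amplitude amplification through the membership oracle), then combines the three estimates by classical error propagation. You instead estimate the single observable $\Pi_{\mathrm{Ker}}-\Pi_{\mathrm{Ker}\cap\mathrm{Im}}$ (a genuine projector since $\mathrm{Ker}\cap\mathrm{Im}\subseteq\mathrm{Ker}$, with trace exactly $\beta_k^{i,j}$) against the maximally mixed state over \emph{all} $\binom{N}{k+1}$ simplices; this is the framing the paper's main text states but its appendix does not follow. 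Your version buys three things: it eliminates the $|S_k^i|/\binom{N}{k+1}$ instance and the restricted state preparation, it replaces three-way error propagation by a single estimate, and its precision requirement depends only on $\beta_k^{i,j}$ being $\mathcal{O}(\mathrm{polylog})$ rather than on $\dim\mathrm{Ker}(\partial_k^i)$ itself being small (the paper's choice $\delta_2=\delta_3\sim\Delta/\sqrt{|S_k^i|}$ implicitly needs the latter). The price is an LCU combination (constant subnormalization) and one point that needs care: the QSVT-produced kernel projector must \emph{annihilate} simplices absent from the complex rather than map their zero singular values to one, or else evaluating against the uniform state over all simplices inflates the trace by $\binom{N}{k+1}-|S_k^i|$; this is guaranteed by the paper's construction because the encodings are bracketed by the membership projectors $\Pi_{m_k^i}$, so your argument should make that bracketing explicit. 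With that noted, both routes yield the same asymptotics.
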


We compare our approach to existing quantum algorithms in Table~\ref{tab:PriorAlgorithmComp}. Only the more recent approaches have been able to compute both regular and persistent Betti numbers -- the latter being the key quantity of interest for topological data analysis~\cite{neumann2019limitations}. Prior works took more circuitous routes to computing (persistent) Betti numbers -- either by implementing the relevant projections in less efficient ways (e.g., quantum phase estimation)~\cite{lloyd2016quantum, gunn2019review, ameneyro2022quantum}, encoding the topology in more complex operators~\cite{lloyd2016quantum, gunn2019review, hayakawa2021quantum, ameneyro2022quantum}, or using incoherent, asymptotically less efficient approaches to estimate subspace dimensions~\cite{lloyd2016quantum, ubaru2021quantum, hayakawa2021quantum, ameneyro2022quantum, akhalwaya2022TowardsNisqTDA}. In particular, compared to the existing quantum algorithm~\cite{hayakawa2021quantum} for computing $\beta_k^{i,j}$, our approach provides an exponential reduction in the number of qubits required for $k = \mathcal{O} \left( \mathrm{polylog}(N) \right)$, as well as a time complexity improvement of $\mathcal{O} \left( N^{6.5} \sqrt{\binom{N}{k+1}}\Delta^{-1} \right)$ (assuming similar overheads from the respective gaps $\Lambda$ of the different operators used for encoding the topology).

For dense complexes, our quantum algorithm achieves a polynomial speedup over classical algorithms for the practically relevant task of computing the actual $\beta_k^{i,j}$ values, subject to the dependence of the gap parameters on $N$. The speedup could be bigger for larger values of $k$, but is less prominent when compared to heuristic methods that sparsify the complex. We additionally present a quantum-inspired classical algorithm based on the power method, which scales only quadratically worse in the number of simplices than our quantum algorithm, with a similar gap dependence. Hence, we expect no better than a quadratic speedup for our quantum algorithm, for the task of computing the persistent Betti number to additive error. Our quantum algorithm achieves an exponential space saving compared to classical approaches for the practically relevant case of dense complexes in the small $k$ regime. An extended comparison between our quantum algorithm and classical approaches can be found in Sec.~\ref{Subsec:QuantumSpeedups}.

The rest of this manuscript is laid out as follows. In Sec.~\ref{Sec:MathsBackground} we provide a self-contained introduction to the aspects of topological data analysis required to understand our quantum algorithm (we provide a more extended and pedagogical introduction in Appendix~\ref{AppSec:MathsBackgroundPedagogy}). We then introduce the quantum algorithm used for computing persistent Betti numbers in Sec.~\ref{Sec:QuantumAlgorithm}. In Sec.~\ref{Sec:ResourceEstimates} we outline the costs of the individual building blocks of our quantum algorithm, with detailed discussion in the Appendices. In Sec.~\ref{Sec:Classical} we review existing classical algorithms for computing persistent Betti numbers, and introduce our quantum-inspired classical algorithm that scales only quadratically worse than the quantum algorithm. In Sec.~\ref{Sec:Discussion} we discuss the complexity of our quantum algorithm, and the prospects for quantum speedups for practical problems of interest. We conclude in Sec.~\ref{Sec:Conclusion}.

\subsection{Recent developments}
Shortly after the first version of this work was released on arXiv, a number of related results were released on arXiv: 
\begin{itemize}
    \item Refs.~\cite{crichigno2022clique,schmidhuber2022complexity,king2023promise} investigated TDA from a complexity theoretic standpoint. Ref.~\cite{schmidhuber2022complexity} showed that determining if the Betti number of a (clique-dense) clique complex is non-zero is NP-hard in general. Refs.~\cite{crichigno2022clique,king2023promise} proved a stronger result, showing that this problem is QMA$_1$-hard. These works rigorously show that quantum algorithms should not be expected to provide exponential speedups for (persistent) Betti number estimation.
    \item Ref.~\cite{berry2022quantifying} introduced optimizations of the quantum algorithm for (non-persistent) Betti numbers, including fault-tolerant overheads. This work also presented a family of graphs with (analytically computable) large Betti numbers, that result in a superpolynomial speedup of the quantum algorithm over textbook approaches for relative error estimates.
    \item Ref.~\cite{apers2022simple} presented a simple randomized classical algorithm for estimating normalized Betti numbers to additive error. The algorithm scales as
    \begin{equation*}
    \left( \frac{N}{\lambda_{\mathrm{max}}} \right)^{\bigO{\frac{1}{\sqrt{\Lambda}}\log\left(\frac{1}{\Delta}\right)}}\cdot \mathrm{poly}(N)
    \end{equation*}
    where $\lambda_{\mathrm{max}}$ is the largest eigenvalue, and assuming that we can efficiently sample and check $k$-simplices. When $k = \Omega(N)$, the algorithm runs in polynomial time for clique complexes with constant gap $\Lambda$ and error $\Delta = \Omega(1/\mathrm{poly}(N))$ (or $\Delta$ constant and $\Lambda = \Omega(1/\log^2(N))$). These are more restrictive conditions than quantum algorithms (which can simultaneously have both $\Lambda, \Delta = \Omega(1/poly(N))$). A related randomized classical algorithm was proposed independently in Ref.~\cite{berry2022quantifying}, the complexity of which depends on the mixing time of a Markov chain that enters due to the use of importance sampling, which might be hard to bound in practical use cases.
\end{itemize}

\section{Background on topological data analysis}\label{Sec:MathsBackground}

\begin{figure}[!ht]
    \centering
    \includegraphics[draft=false]{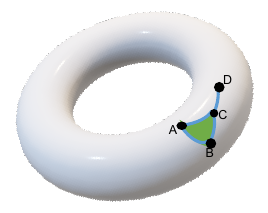}
    \caption{An example subset of datapoints arising from an underlying toroidal manifold. The datapoints ($0$-simplices $A, B, C, D$) are ordered alphabetically. After defining a length scale within which to connect datapoints, we add the relevant simplices. We show $1$-simplices $AB, BC, AC, CD$ (blue) and a $2$-simplex $ABC$ (green) that form elements of the resulting simplicial complex for this dataset. Simplices inherit the ordering of the vertices. We ascribe an orientation to the simplices based on the ordering of the vertices. An odd permutation of the vertices changes the orientation of the simplex; i.e. $AC = -CA$.}
    \label{fig:Torus}
\end{figure}

We seek to compute $\beta_k^{i,j}$ for a sufficient number of dimensions $k$ and scales $i,j$ so that we can capture the main topological features of the given dataset. As illustrated in Fig.~\ref{fig:Torus}, to understand the topology at scale $i$ we connect datapoints within distance $\mu_i$, generating a graph from the dataset. We use the so-called \emph{clique complex} of the graph, where simplices correspond to the cliques in the graph: a $(k+1)$-clique defines a $k$-simplex. 
This way every $k+1$ mutually connected vertices induce a $k$-simplex, which we can think of as the convex hull of the corresponding data points. The set of simplices and all of their constituent faces (e.g., the $2$-simplex $ABC$ has $1$-simplices $AB, BC, AC$ and $0$-simplices $A,B,C$ as faces) at scale $i$ is referred to as the simplicial complex $S^i$, while $S_k^i$ denotes the $k$-simplices in $S^i$. Accordingly, the number of simplices in the complex at scale $i$ is denoted by $|S^i|$, and the number of $k$-simplices by $|S_k^i|$.

This simplicial complex is known as a Vietoris-Rips complex, and is one possible choice~\cite{otter2017roadmap} for approximating the underlying topological space. The maximum number of $k$-simplices $|S_k^i|$ is $\binom{N}{k+1}$. Mapping the simplices to orthonormal basis vectors enables the use of linear algebraic tools for understanding the topology. 

The goal is to identify holes in the complex; a hole is a region of empty space, demarcated by its boundary. Studying these boundaries is crucial for finding holes, motivating the definition of a boundary operator. The $k$-th boundary operator maps $k$-simplices to their oriented boundaries. The action of the boundary operator $\partial$ on a $k$-simplex is
\begin{align}\label{Eq:BoundaryOp}
    \partial [v_0, ..., v_k] = \sum_{l=0}^k (-1)^l [v_0, ... \hat{v}_l, ... v_k],
\end{align}
where $v_0 ... v_k$ are the ordered vertices in the $k$-simplex, and $\hat{v}_l$ means the vertex is excluded from the simplex. We denote by $\partial_k^i\colon \langle S_k^i \rangle \rightarrow \langle S_{k-1}^i\rangle$ the boundary operator on the complex at length scale $i$ restricted to the subspaces $\langle S_k^i \rangle$ and $\langle S_{k-1}^i\rangle$ spanned by the basis vectors corresponding to $k$ and $k-1$ simplices present at scale $i$. For example, $\partial_1^i [CD] = D - C$, and $\partial_2^i [ABC] = BC - AC + AB$. 

We can form `chains' of $k$-simplices by taking linear combinations of their corresponding vectors. The boundary operator extends linearly onto chains of $k$-simplices. 
When applying the boundary operator to a closed chain of $k$-simplices (known as a $k$-cycle), each $(k-1)$-face appears twice, each time with opposite signs. For example,
\begin{equation}
    \partial_1^i [AB + BC - AC] = B - A + C - B - C + A = 0.
\end{equation}
More formally, all $k$-cycles are in the kernel of $\partial_k^i$. A $k$-cycle can surround either empty space, or will correspond to the boundary of a $(k+1)$-chain in the complex (the example above corresponds to this latter case, as this $k$-cycle is given by $\partial_2^i [ABC]$). The $k$-cycles corresponding to the boundary of a $(k+1)$-chain in the complex are thus in the image of $\partial_{k+1}^i$. 

To count holes in the complex at scale $i$, one can consider all $k$-cycles, and remove those that are the boundaries of $(k+1)$-chains in the complex. Accordingly, the $k$-th Betti number at scale $i$ is defined~\cite{hatcher2005algebraic} as 
\begin{align}\label{eq:PlainBetti}
\beta_k^i = \mathrm{dim}\left(\mathrm{Ker}(\partial_k^i)\right) - \mathrm{dim}\left(\mathrm{Im}(\partial_{k+1}^i)\right),
\end{align} 
where the first term counts the number of `hole-like' objects, and the latter removes those that are boundaries of $(k+1)$-simplices, leaving the number of true holes in the complex. Note that since $\partial_\ell^i$ is restricted to the subspaces $\langle S_\ell^i \rangle$ and $\langle S_{\ell-1}^i\rangle$ we have $\mathrm{Im}(\partial_{k+1}^i) \subseteq \mathrm{Ker}(\partial_k^i)\subseteq\langle S_k^i \rangle$.

Computing the Betti numbers at different scales is not sufficient to determine the persistence of topological features, as the Betti number does not uniquely identify the holes~\cite{neumann2019limitations}. In order to determine how many holes at scale $i$ are still present at scale $j$, one can consider the simplicial complexes formed at both scales ($S^i, S^j$), and modify Eq.~\eqref{eq:PlainBetti} to give the persistent Betti numbers~\cite{Edelsbrunner2002} as
\begin{align}\label{eq:PersistentBetti}
    \beta_k^{i,j} = \mathrm{dim}\left( \mathrm{Ker}(\partial_k^i) \right) - \mathrm{dim}\left( \mathrm{Ker}(\partial_k^i) \cap \mathrm{Im}(\partial_{k+1}^j) \right).
\end{align}
The first term in this expression again counts all hole-like objects at scale $i$. The second term removes all holes that were present at scale $i$, but have been `filled-in' by linear combinations of $(k+1)$-simplices at scale $j$ (which automatically includes any $(k+1)$-simplices at scale $i$ that fill-in holes at scale $i$). As above $\mathrm{Ker}(\partial_k^i)\subseteq\langle S_k^i \rangle$ and $\mathrm{Im}(\partial_{k+1}^j)\subseteq\langle S_k^j \rangle$, while $\langle S_k^i \rangle \subseteq \langle S_k^j \rangle$.

Once the persistent Betti numbers have been computed for all pairs of scales, they are typically summarized in a persistence diagram, that shows when topological features are created and destroyed. Different datasets can be compared by computing the similarities between these diagrams, either using  stable distance measures, or vectorization of the diagrams~\cite{leykam2022TDAphysicsreview,carlsson2020topologicalreview,otter2017roadmap}.

\section{Quantum algorithm for persistent Betti numbers}\label{Sec:QuantumAlgorithm}

\begin{figure*}
    \centering
    \includegraphics[width=0.9\linewidth]{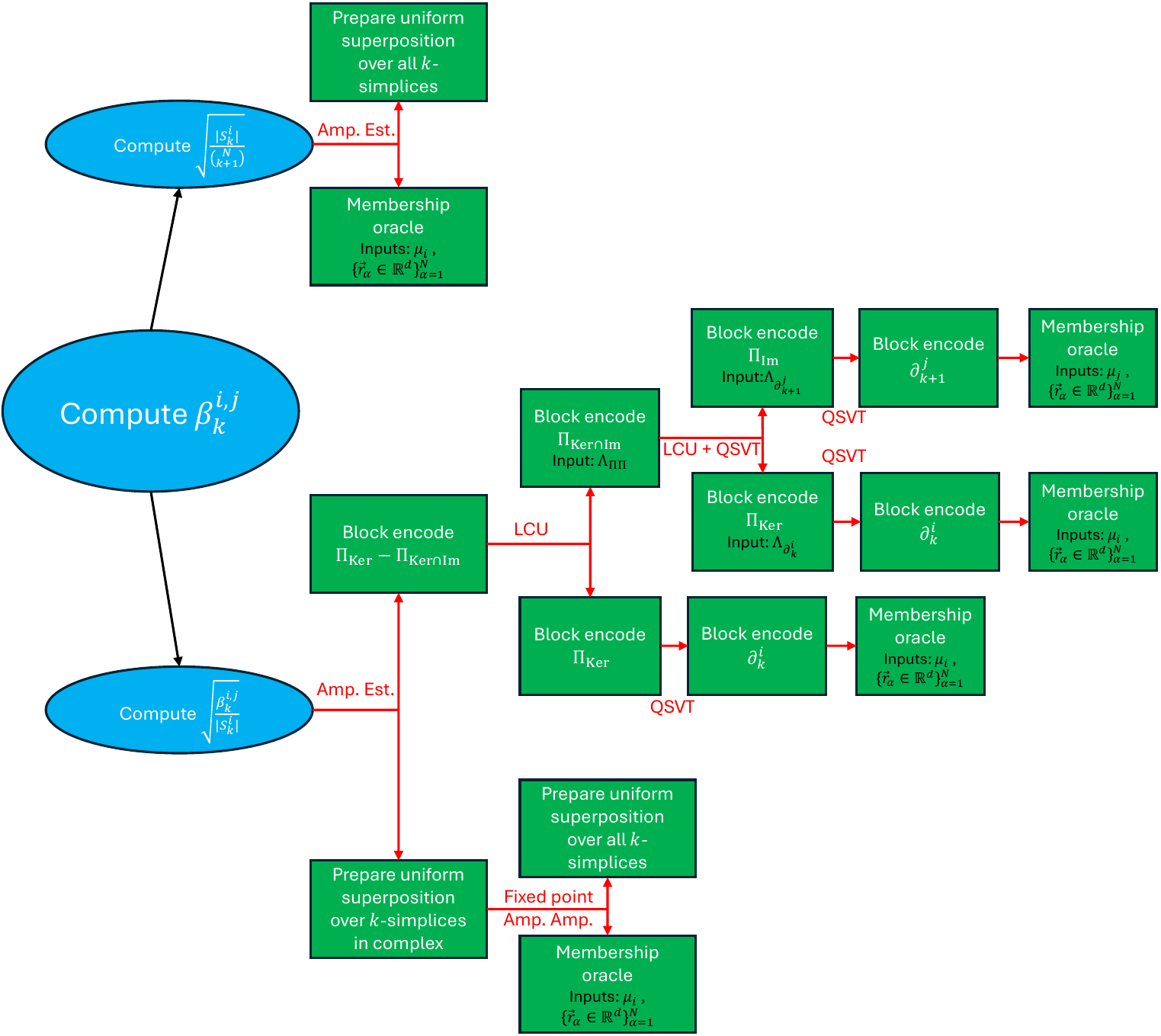}
    \caption{A flowchart illustrating the calls to made by the algorithm to various quantum and classical subroutines. Blue ovals denote classical outputs, and green boxes denote quantum subroutines. Black arrows denote classical postprocessing, while red arrows indicate coherent calls to a quantum subroutine. Some of the key quantum subroutines are named in red (Amp. Est./Amp. denotes amplitude estimation/amplification, and LCU denotes linear combination of unitaries). Quantum subroutines have their classical data inputs specified in black, with the exception of ubiquitous dependencies on the values of $N, k, \Delta$.}
    \label{fig:FlowDiagram}
\end{figure*}

In contrast to the approach taken herein, prior quantum algorithms did not directly use the formula in Eq.~(\ref{eq:PersistentBetti}) for computing (persistent) Betti numbers, for a number of reasons. One such reason is that the boundary operator is not Hermitian, so one cannot directly use traditional approaches such as quantum phase estimation. Therefore, earlier works mostly computed (persistent) Betti numbers by utilizing the (persistent) combinatorial Laplacian or related operators (see Appendix~\ref{AppSec:MathsBackgroundPedagogy}), that either made it difficult to extend methods from Betti numbers to persistent Betti numbers~\cite{lloyd2016quantum,gunn2019review,gyurik2020towards, ubaru2021quantum} or resulted in higher computational complexity~\cite{hayakawa2021quantum}.\\

We reduce the problem of estimating persistent Betti numbers to that of estimating the ratio of the ranks of two orthogonal projectors. We present a quantum native approach in the QSVT framework for solving this problem. Consider the orthogonal projectors $\Pi \preceq \widetilde{\Pi}$\footnote{where $X \preceq Y$ denotes that $Y-X$ is positive semidefinite.}. If we treat $\Pi$ as an observable and $\rho:=\widetilde{\Pi}/\mathrm{rank}(\widetilde{\Pi})$ as a quantum state, then the ratio of the ranks $\frac{\mathrm{rank}(\Pi)}{\mathrm{rank}(\widetilde{\Pi})}$ equals $\Pi$'s expectation value when evaluated on the mixed state $\rho$. We can speed up the estimation of this expectation value using amplitude estimation if we replace the mixed state $\rho$ by its purification $\ket{\psi}$. For example, when $\widetilde{\Pi}=I$, we choose $\ket{\psi}$ to be the maximally entangled state. We require a state preparation unitary $V_{\psi}$ for $\ket{\psi}$ and a block encoding~\cite{gilyen2019quantum} $V_{\Pi}$ of $\Pi$. This leads to the following result (we provide a full version in Appendix~\ref{AppSub:DimensionEstProof}).

\begin{theorem}\label{Theorem:SubspaceDimensionEst}
(Normalized projector rank estimation) Given the above ingredients, we can estimate $\sqrt{\frac{\mathrm{rank}(\Pi)}{\mathrm{rank}(\widetilde{\Pi})}}$ to additive error $\delta$ with success probability $\geq 1- \eta$ using $\bigO{\log(\eta^{-1})}$ incoherent repetitions of a quantum circuit, which makes $\bigO{\frac{1}{\delta}}$ calls to $V_{\psi}$ and $V_{\Pi}$ (and their inverses), and uses $\bigO{\frac{a}{\delta}}$ additional single- and two-qubit gates, where $a$ is an upper bound on the number of qubits on which $V_{\psi}$ and $V_{\Pi}$ act.
\end{theorem}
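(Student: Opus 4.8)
The plan is to recognize that the square root to be estimated is literally the norm of a projected state, and then to estimate that amplitude by building a qubitized walk operator out of the two available unitaries and reading off its eigenphase. First I would identify the target as an amplitude. Writing $\ket{\psi} = V_\psi\ket{0}$ for the purification of $\rho = \widetilde{\Pi}/\mathrm{rank}(\widetilde{\Pi})$, with $\Pi$ acting on the system register and the identity on the purifying register, the containment $\Pi\subseteq\widetilde{\Pi}$ gives $\Pi\widetilde{\Pi}=\Pi$, so that
\begin{equation}
\bra{\psi}(\Pi\otimes I)\ket{\psi}=\mathrm{Tr}(\Pi\rho)=\frac{\mathrm{Tr}(\Pi\widetilde{\Pi})}{\mathrm{rank}(\widetilde{\Pi})}=\frac{\mathrm{rank}(\Pi)}{\mathrm{rank}(\widetilde{\Pi})}.
\end{equation}
Hence the quantity to be estimated is exactly the amplitude $a:=\norm{(\Pi\otimes I)\ket{\psi}}=\sqrt{\mathrm{rank}(\Pi)/\mathrm{rank}(\widetilde{\Pi})}$, which I write as $a=\sin\theta$ for a principal angle $\theta$ between $\ket{\psi}$ and $\mathrm{Im}(\Pi)$.

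Next I would assemble a qubitized walk from $V_\psi$ and $V_\Pi$. The reflection $R_\psi=I-2\ket{\psi}\bra{\psi}=V_\psi(I-2\ket{0}\bra{0})V_\psi^\dagger$ uses one call each to $V_\psi$ and $V_\psi^\dagger$ around a reflection about $\ket{0}$, while the reflection $R_\Pi=I-2\Pi$ is obtained from the projected unitary encoding $V_\Pi$ by reflecting about the encoding ancillas, costing one call each to $V_\Pi$ and $V_\Pi^\dagger$. On the two-dimensional invariant subspace $\spn\{\Pi\ket{\psi},\,(I-\Pi)\ket{\psi}\}$ the product $W=R_\Pi R_\psi$ acts as a rotation by $2\theta$, so its eigenvalues are $e^{\pm 2i\theta}$; this is the qubitization structure that encodes $a=\sin\theta$ in a single eigenphase.

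I would then feed $W$ into the QSVT singular value estimation routine~\cite{gilyen2019quantum}, exploiting the equivalence between the walk eigenphase $\theta$ and the singular value $a=\sin\theta$. Because the derivative of $\sin$ is bounded by $1$, resolving $\theta$ to additive precision $\delta$ resolves $a$ to additive precision $\delta$; this costs $\bigO{1/\delta}$ applications of $W$, hence $\bigO{1/\delta}$ calls to $V_\psi$, $V_\Pi$ and their inverses, together with $\bigOt{1/\delta}$ auxiliary two-qubit gates from the estimation circuitry. A single run succeeds with constant probability, and taking the median of $\bigO{\log(1/\eta)}$ incoherent repetitions boosts this to $\geq 1-\eta$ while preserving the additive-error guarantee, which yields the three claimed resource counts.

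The hard part is not the asymptotics but the rigorous error bookkeeping, which is precisely why I would route the argument through singular value estimation rather than textbook amplitude estimation, since the former comes with composable, explicit error bounds that can be threaded into the downstream estimate of $\beta_k^{i,j}$. Two points need care: (i) verifying that $R_\Pi$, built from the possibly only approximate projected unitary encoding $V_\Pi$, closes the two-dimensional qubitization block, so that the walk eigenphases faithfully encode $a$ rather than leaking amplitude out of the block; and (ii) controlling the polynomial-approximation error of the QSVT estimator uniformly over the spectrum, which is hardest near the endpoints $a\to 0,1$, so that the finite-precision eigenphase readout delivers a clean additive error $\delta$ on $a$. Combining these two error sources into a single bound is the main technical obstacle.
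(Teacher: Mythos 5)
Your proposal is correct, and its opening step (the trace identity $\bra{\psi}(\Pi\otimes I)\ket{\psi}=\mathrm{rank}(\Pi)/\mathrm{rank}(\widetilde{\Pi})$, using $\Pi\widetilde{\Pi}=\Pi$) is a cleaner route to the same fact the paper establishes via the explicit decomposition $\ket{\psi}=\sqrt{d_g/d}\,\ket{\psi_g}+\ket{\phi^\perp}$ of its purification lemma (Lemma~\ref{Lemma:Purification}). Where you genuinely diverge is the estimation step. You implement amplitude estimation canonically: build the Grover/qubitized walk $W=R_\Pi R_\psi$ from two reflections, invoke Jordan's lemma to get eigenphases $e^{\pm 2i\theta}$ with $a=\sin\theta$, read the phase coherently via singular value/phase estimation, and take a median. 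The paper's formal proof (Appendix~\ref{AppSub:DimensionEstProof}) deliberately avoids phase estimation: it multiplies the projected unitary encodings $V_\Pi V_\psi V_\psi^\dagger$ to obtain an encoding of the rank-one operator $\sqrt{d_g/d}\,\ket{\psi_g}\bra{\psi}$, applies an even step-function polynomial of degree $\mathcal{O}\left(\delta^{-1}\log(\epsilon_f^{-1})\right)$ via QSVT at a tunable threshold $c$, and locates the lone singular value by a Lin--Tong-style \emph{incoherent} binary search over $c$ with Chernoff-boosted majority voting at each step. Both give the same asymptotics (the binary search costs an extra $\log(1/\delta)$ factor in repetitions, absorbed into the $\tilde{\mathcal{O}}$), but the paper's route is chosen precisely because the errors of the faulty ingredients ($V_\psi$, $V_\Pi$, and the C$_\Pi$NOT gates) compose through Lemma~\ref{Lemma:ErrorCpiNOT} and the robustness-of-QSVT lemma into the explicit $\epsilon$-dependence of the formal statement --- which is exactly the piece you flag as ``the main technical obstacle'' (leakage out of the two-dimensional Jordan blocks when $R_\Pi$ is only approximate) and leave unresolved. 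Your walk-based route also implicitly requires controlled access to $W$ for the eigenphase readout (avoidable by controlling only the internal phase flips, but this needs to be said), whereas the paper's threshold test uses only the encodings and their projector-controlled NOTs. For the informal statement as given, your argument suffices; for the error-tracked formal version that feeds into Corollary~\ref{Coro:PersistentBetti}, the thresholding route is what makes the bookkeeping go through.
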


\begin{proof}
We employ amplitude estimation with precision $\delta$ to the following process: prepare $\ket{\psi}$ and measure $\Pi$. We repeat this process $\mathcal{O}\left( \log(\eta^{-1})\right)$ times and take the median of the estimates, which boosts the success probability exponentially due to the Chernoff bound.
\end{proof}

We compute (persistent) Betti numbers by the above subroutine via estimating the normalized rank of
\begin{equation}
    \Pi_\beta := \Pi_{\mathrm{Ker}(\partial_k^i)} - \Pi_{\mathrm{Ker}(\partial_k^i) \cap \mathrm{Im}(\partial_{k+1}^j)} \preceq \Pi_{\langle S_k^i \rangle}
\end{equation}
to sufficient precision (see Appendix~\ref{AppSub:CoroPersistentProof}).

To obtain unitary block-encodings $V_{\Pi_{\mathrm{Ker}}}$ (of $\Pi_{\mathrm{Ker}(\partial_k^i)}$) and  $V_{\Pi_{\mathrm{Im}}}$ (of $\Pi_{\mathrm{Im}(\partial_{k+1}^j)}$) we apply approximate threshold functions to $\partial_k^i$ and $(\partial_{k+1}^j)^\dagger$ via QSVT, which introduces a reciprocal dependence on the smallest non-zero singular values $\Lambda_{\partial_k^i}$, $\Lambda_{\partial_{k+1}^j}$ of $\partial_k^i$ and $\partial_{k+1}^j$ respectively. 

In general $\mathrm{Ker}(\partial_k^i) \cap \mathrm{Im}(\partial_{k+1}^j)$ is spanned by those singular vectors of $\Pi_{\mathrm{Ker}(\partial_k^i)}\cdot \Pi_{\mathrm{Im}(\partial_{k+1}^j)}$ that have singular value $1$. We can (approximately) implement a block encoding of $\Pi_{\mathrm{Ker}(\partial_k^i) \cap \mathrm{Im}(\partial_{k+1}^j)}$ by once again applying a threshold function via QSVT to $\Pi_{\mathrm{Ker}(\partial_k^i)}\cdot \Pi_{\mathrm{Im}(\partial_{k+1}^j)}$ (but now with a threshold of approximately $1$). The required sharpness of the threshold is determined by the quantity $\Lambda_{\Pi \Pi}$, the deviation of the largest $\neq 1$ singular value of $\Pi_{\mathrm{Ker}(\partial_k^i)}\cdot \Pi_{\mathrm{Im}(\partial_{k+1}^j)}$ from 1. This results in an additional factor of $\Lambda_{\Pi\Pi}^{-0.5}$ (see \cite[Lemma 35]{gilyen2018QSingValTransfArXiv} for an explanation of the quadratic improvement) in the complexity of the algorithm. For ordinary Betti numbers (i.e., $i=j$) we have $\Lambda_{\Pi\Pi} = 1$ since
$\mathrm{Im}(\partial_{k+1}^i)\subseteq \mathrm{Ker}(\partial_k^i)$ and thus $\Pi_{\mathrm{Ker}(\partial_k^i)}\cdot \Pi_{\mathrm{Im}(\partial_{k+1}^{j=i})}=\Pi_{\mathrm{Im}(\partial_{k+1}^i)}$.

In Fig.~\ref{fig:FlowDiagram} we present a flow diagram for our quantum algorithm, showing calls to the main subroutines of the algorithm. This is complemented by the minimal pseudocode in Algorithm~\ref{Alg:MembershipOracle} and Algorithm~\ref{Alg:PersistentBetti}. In order to determine the overall resource costs for estimating persistent Betti numbers using our quantum algorithm, we need to determine the cost of these subroutines. In the following section, we discuss how to implement the block encodings listed above. These require a number of ingredients, whose costs we establish for both the direct and compact mappings:
\begin{itemize}
    \item Implement the `membership oracle' (Algorithm~\ref{Alg:MembershipOracle}) that determines if a given $k$-simplex is present in the complex at scale $i$, using a sequence of elementary operations, including those used for querying a (quantum) lookup-table that stores the coordinates of the data points.
    \item Obtain block encodings of $\partial_k^i$, $\partial_{k+1}^j$ using elementary operations and the membership oracle.
    \item Obtain block encodings of $\Pi_{\mathrm{Ker}}$, $\Pi_{\mathrm{Im}}$, $\Pi_{\mathrm{Ker}(\partial_k^i) \cap \mathrm{Im}(\partial_{k+1}^j)}$, and $\Pi_\beta$ using QSVT applied to block encodings of $\partial_k^i$, $\partial_{k+1}^j$ and block encoding manipulations (linear combinations of block encodings and products of block encodings~\cite{gilyen2018QSingValTransfArXiv}).
\end{itemize}

\section{Resource costs of building blocks}\label{Sec:ResourceEstimates}

A fundamental quantum subroutine required by the algorithm is the membership oracle, for which we provide pseudocode in Algorithm~\ref{Alg:MembershipOracle}.

\begin{algorithm}[H]
\caption{Membership Oracle $\hat{O}_{m_k^i}$}\label{Alg:MembershipOracle}
\begin{algorithmic}[1]
\Require{$k$-simplex $s_k$, length-scale $\mu_i$}
\Ensure{$|s_k\rangle |a\rangle\mapsto |s_k \rangle |a \oplus (s_k \in S_k^i)\rangle$}

\State Load point coordinates $\vec{r}_j$ for all vertices $j\in s_k$
\State Compute all $\binom{k+1}{2}$ pairwise distances
\State If all distances $\le \mu_i$, then flip membership ancilla
\State Uncompute distances and unload point coordinates
\State \Return updated state
\end{algorithmic}
\end{algorithm}

Below, we provide minimal pseudocode for the full quantum algorithm, in Algorithm~\ref{Alg:PersistentBetti}.

\begin{algorithm}[H]
\caption{Estimate Persistent Betti Number $\hat{\beta}_k^{i,j}$}\label{Alg:PersistentBetti}
\begin{algorithmic}[1]
\Require{$N$ datapoints in $\mathbb{R}^d$; feature dimension $k$; scales $i,j$ with lengths $\mu_i,\mu_j$; target error $\Delta$}
\Ensure{$\hat{\beta}_k^{i,j} = \beta_k^{i,j} \pm \Delta$}
\Statex
\State \textbf{Assumptions:} Gap parameters $\Lambda_{\partial_k^i}$, $\Lambda_{\partial_{k+1}^j}$, $\Lambda_{\Pi\Pi}$

\Statex
\State \textbf{Step 1: Estimate} 
\[
\hat{X} = \sqrt{\frac{|S_k^i|}{\binom{N}{k+1}}} \pm \delta_1,
\quad
\delta_1 = 
\frac{\Delta}{4\beta_k^{i,j}}
\sqrt{\frac{|S_k^i|}{\binom{N}{k+1}}}
\]
\State $\hat{X} \gets \textsc{AmpEst}\left(\Pi_{S_k^i},\,|\psi_{S_k}\rangle,\,O(\delta_1)\right)$
\State \hspace{1em} \textit{// Block-encode $\Pi_{S_k^i}$ using the membership oracle $\hat{O}_{m_k^i}$}
\State \hspace{1em} \textit{// Prepare $|\psi_{S_k}\rangle$ as a uniform superposition state over all $k$-simplices}

\Statex
\State \textbf{Step 2: Estimate}
\[
\hat{Y} = \sqrt{\frac{\beta_k^{i,j}}{|S_k^i|}} \pm \delta_2,
\quad
\delta_2 = \frac{\Delta}{4\sqrt{|S_k^i|\,\beta_k^{i,j}}}
\]
\State $\hat{Y} \gets \textsc{AmpEst}\left(\Pi_{\mathrm{Ker}} - \Pi_{\mathrm{Ker}\cap\mathrm{Im}},\,|\psi_{S_k^i}\rangle,\,O(\delta_2)\right)$
\State \hspace{1em} \textit{// Use QSVT + LCU to block-encode $\Pi_{\mathrm{Ker}} - \Pi_{\mathrm{Ker}\cap\mathrm{Im}}$, with $\tilde{\mathcal{O}}\left(\frac{\sqrt{Nk}}{\Lambda_{\Pi \Pi}^{0.5} \min(\Lambda_{\partial_k^i}, \Lambda_{\partial_{k+1}^j})} \right)$ calls to $\hat{O}_{m_k^i}$}
\State \hspace{1em} \textit{// Prepare $|\psi_{S_k^i}\rangle$ using fixed-point amplitude amplification, which makes $\tilde{\mathcal{O}}\left(\sqrt{\binom{N}{k+1}/ |S_k^i|} \right)$ calls $\hat{O}_{m_k^i}$}

\Statex
\State \textbf{Step 3: Output estimate}
\[
\hat{\beta}_k^{i,j}
  = \binom{N}{k+1}\,\hat{X}^2\,\hat{Y}^2
\]
\State \Return $\hat{\beta}_k^{i,j}$
\end{algorithmic}
\end{algorithm}

\subsection{Mapping simplices to qubits}
We consider two approaches to map simplices to quantum states: the direct mapping using $\mathcal{O}(N)$ qubits for a $k$-simplex on $N$ datapoints, and the compact mapping, introduced herein, using $\mathcal{O}(k\log(N))$ qubits.

The direct mapping, considered in previous quantum algorithms~\cite{lloyd2016quantum}, encodes $k$-simplices as Hamming weight-$(k+1)$ computational basis states of $N$ qubits. Each qubit corresponds to a datapoint, and its value is $\ket{1}$ if and only if the vertex corresponding to the datapoint is incident to the given simplex.\footnote{Here we mean that the vertex is part of the defining $(k+1)$-clique in the corresponding graph, rather than the datapoint being in the convex hull of the points geometrically corresponding to the simplex.}

The compact mapping encodes a $k$-simplex built from an $N$-point dataset using $(k+1)\lceil \log(N+1) \rceil$ qubits. Each of the $(k+1)$ registers represents a vertex present in the simplex. The vertices are enumerated in binary, starting from $1$. The all-$0$ state $\ket{\bar{0}}$ denotes the absence of a vertex, and is required for implementing the boundary operator. The ordering of the vertices in the state is the same as the ordering chosen for the datapoints.

\begin{table}[!ht]
    \centering
    \begin{tabular}{|c|c|c|c|} \hline
        $k$ & Simplex & Direct & Compact  \\ \hline
        0 & $A$ & $\ket{1000000}$ & $\ket{001}$ \\
        1 & $BD$ & $\ket{0101000}$ & $\ket{010}\ket{100}$ \\
        2 & $CEG$ & $\ket{0010101}$ & $\ket{011}\ket{101}\ket{111}$ \\ \hline
    \end{tabular}
    \caption{Example mappings from simplices to qubits for simplices built from a vertex set $\{A, ... , G\}$.}
    \label{tab:SimplexMappings}
\end{table}

For $N$ datapoints, there are $\binom{N}{k+1}$ possible $k$-simplices. Classically storing generic linear combinations of these requires $\Omega \binom{N}{k+1}$ memory, while a quantum register can be put into a superposition of all possible $k$-simplex states. The compact mapping can thus provide an exponential reduction in spatial resources over classical approaches, for dense complexes. The compact mapping is also an exponential improvement over the direct mapping when $k = \mathcal{O} \left( \mathrm{polylog}(N) \right)$.

\subsection{Membership oracle cost}\label{Subsec:MembershipOracle}
In order to prepare $\ket{\psi_{S_k^i}}$ (a purification of $\Pi_{\langle S_k^i \rangle}/|S_k^i|$) and to restrict the boundary operators to act only on simplices present in the complex at a given length scale, the algorithm makes calls to a membership oracle $O_{m_k^i}$. The membership oracle determines if a given $k$-simplex is present in the complex at scale $i$ (or not) based on the positions of the vertices $\{\vec{r}_\alpha\}$, and the length scale $\mu_i$ considered. The membership oracle acts as
\begin{equation}
O_{m_k^i} \ket{s_k} \ket{a} = \begin{cases}
			\ket{s_k} \ket{a \oplus 1} & \text{if $s_k \in S_k^{i}$}  \\
            \ket{s_k} \ket{a \oplus 0} & \text{if $s_k \notin S_k^{i}$} 
\end{cases}
\end{equation}

We explicitly show how to implement the membership oracle for both the direct and compact mappings in Appendix~\ref{AppSub:MembershipOracles}. It is often claimed that the quantum algorithm for TDA does not require a quantum lookup-table. While it is possible to implement the algorithm without it~\cite{ubaru2021quantum}, typical formulations do in fact use quantum table lookups~\cite{lloyd2016quantum}. However, in contrast to other quantum machine learning algorithms, the quantum lookup-table used in TDA is not exponentially larger than the number of qubits used for storing the main register. We only require the ability to read from the quantum lookup-table, not to write to it.

\begin{table}[!ht]
    \centering
    \begin{tabular}{|c|c|c|} \hline
          Load & Space & Depth \\ \hline
        Direct & $N$ & $N\log(N)$ \\ \hline
         \makecell{Compact \\(Slow)} & \makecell{$k\log(N)$ \\ $+ kdb_d^2$} & \makecell{$N \log(db_d)$ \\ $+ k(\log(d)\log(b_d) + b_d) $} \\ \hline
          \makecell{Compact\\(Fast)} & \makecell{$N k d b_d$ \\ $+ k d b_d^2$} & \makecell{$\log(N)$ \\ $+ k(\log(d)\log(b_d) + b_d) $} \\ \hline
    \end{tabular}
    \caption{A comparison of the asymptotic time and space complexity of implementing the membership oracle in the two encodings, using the optimal memory models for each. The datapoints are in $\mathbb{R}^d$ and each coordinate is represented with $b$ bits, chosen sufficiently large to suppress under/overflow errors in the distance calculation. These complexities are derived in Appendix~\ref{AppSub:MembershipOracles}.}
    \label{tab:MembershipOracleComp}
\end{table}

\begin{table*}[t]
    \centering
\begin{tabular}{|c|c|c|} \hline
            & $(\alpha, m, \epsilon)$ & Gate depth  \\ \hline
    Compact & $(\sqrt{(N+1)(k+1)}, \log(N+1) + \log(k+1) + 1, 0)$ & $\mathcal{O}(k\log\log(N+1))$ \& $1\times O_{m_k^i}$ \\ \hline
    Direct & $(\sqrt{N}, 2, 0)$ & $\mathcal{O}(\log(N))$ \& $1\times O_{m_k^i}, O_{m_{k-1}^i}$ \\ \hline
\end{tabular}
    \caption{Parameters for the block-encoding $V_{\partial_k^i}$ of $\partial_k^i$ in the compact and direct mappings. The parameters for $V_{\partial_{k+1}^j}$ follow from replacing $k \rightarrow k+1$ and $i \rightarrow j$. The implementation of the membership oracles differ for the compact and direct encodings, as discussed in Sec.~\ref{Subsec:MembershipOracle}.}
    \label{tab:BoundaryOperatorCosts}
\end{table*}

Our membership oracle for direct mapped simplices follows the approach to clique finding in Ref.~\cite{metwalli2020cliquefinding}, and uses a slow-load quantum lookup-table that can be implemented with a purely classical memory by iterating a (classical) list of present edges, and applying a Toffoli gate on the corresponding vertices to increment a counter. The counter is used for verifying that all edges of the simplex are present.

Our membership oracle for compact mapped simplices also makes use of quantum lookup-tables~\cite{hann2021qram}; either performing fast loads using many ancilla qubits or, as with the direct mapping, slow loads with few ancilla qubits and a classical memory. We coherently load the coordinates of the vertices defining the simplex, and calculate the distance between each pair of them. Comparing the distances against the length scale $\mu_i$ flags edges that are not present and yields a membership oracle implementation. The compact membership oracle also verifies that the vertices are stored in increasing order for consistency.

TDA is widely applied to high dimensional data. In order to reduce the cost of the membership oracle in the compact mapping, we can implement a classical pre-processing step. We can use the Johnson–Lindenstrauss lemma to embed the high-dimensional vertices into a lower dimensional space, while approximately maintaining the distances between vertices. The dimension $d$ can be reduced to $\bigO{\log(N)/\epsilon_{JL}^2}$ to ensure the distances between all the points stay accurate to a factor of $(1\pm \epsilon_{JL})$.

\subsection{Boundary operator cost}\label{Subsec:BoundaryOpCost}
In this section we discuss the resource costs to construct block encodings of the boundary operator $\partial_k^i$ and $\partial_{k+1}^j$ according to \Cref{Eq:BoundaryOp}.

\begin{definition}\label{def:blockDef}
	We say that a unitary matrix $U$ is an $(\alpha,(a,b),\eps)$-block-encoding of the matrix $A$ if
	\begin{equation}\label{eq:blockDef}
		\nrm{A-\alpha (\bra{0}^{\otimes a}\otimes I)U(\ket{0}^{\otimes b}\otimes I)}\leq \eps.
	\end{equation}
	We also use this notion in cases where $(\bra{0}^{\otimes a}\otimes I)U(\ket{0}^{\otimes b}\otimes I)$ acts on larger subspaces than $A$ itself, in which in case, in \Cref{eq:blockDef}, by $A$ we mean its trivial embedding\footnote{By the trivial embedding we mean extending $A$ with $0$ matrix elements, so that the non-zero singular values and the corresponding singular-vector pairs are unchanged.} into these larger subspaces. Setting $m=\max\{a,b\}$ we might call it an $(\alpha,m,\eps)$-block-encoding or $m$-qubit block-encoding for brevity. 
\end{definition}

We construct a unitary $V_{\partial_k^i}$ such that $(\bra{0}^{\otimes a}\otimes I) V_{\partial_k^i} (\ket{0}^{\otimes b}\otimes I)= \frac{1}{\alpha} \partial_k^i $ (and analogously for $\partial_{k+1}^j$).\footnote{Literally implementing the restriction of $\partial_k^i$ to the subspaces $\langle S_k^i \rangle$ and $\langle S_{k+1}^i \rangle$ is technically difficult. Instead we implement $\bar{\partial}_k^i$ which is an extension of $\partial_k^i$ onto the larger space defined by all $(k+1)\log(N)$ qubit states of the compact encoding (respectively $N$ qubit states of the direct encoding), extending $\partial_k^i$ trivially to the additional dimension so that $\mathrm{Im}(\partial_k^i)=\mathrm{Im}(\bar{\partial}_k^i)\subseteq \langle S_k^i \rangle$ and $\mathrm{Im}((\partial_k^i)^\dagger)=\mathrm{Im}((\bar{\partial}_k^i)^\dagger)\subseteq \langle S_{k+1}^i \rangle$, which is sufficient for our purposes.} The costs are summarized in Table~\ref{tab:BoundaryOperatorCosts}, and are derived in Appendix~\ref{AppSub:BoundaryOp}.

Our compact mapped approach proceeds by coherently swapping the vertex to be deleted into the final position of the register, and then setting it to $\ket{\bar{0}}$ to represent the absence of a vertex. 

Our direct mapped approach relies on a previously observed correspondence between the boundary operator and second quantized fermionic operators: $\partial + \partial^\dag = \sum_i a_i + a_i^\dag $~\cite{cade2021complexity, akhalwaya2022efficient, kerenidis2022quantum, ubaru2021quantum}. We considered existing approaches for implementing block-encodings of such operators~\cite{kerenidis2022quantum, wan2021exponentially}. The chosen approach~\cite{kerenidis2022quantum} for the direct mapping is a $(\sqrt{N}, 2, 0)$ block-encoding of $\partial_k^i$ that can be implemented with $\mathcal{O}(\log(N))$ gate depth and two calls to the membership oracle. This can be compared with the approach of Ref.~\cite{hayakawa2021quantum} that implemented a $(Nk, \mathcal{O}(\log(N), 0)$ block-encoding of $\partial_k^i$ requiring $\Omega(N)$ depth and one call to the membership oracle.

\subsection{Subspace projector cost}\label{Subsec:SubspaceProj}
As discussed in Sec.~\ref{Sec:QuantumAlgorithm} we use unitary block encodings $V_{\Pi_{\mathrm{Ker}}}$ (of $\Pi_{\mathrm{Ker}(\partial_k^i)}$) and $V_{\Pi_{\mathrm{Im}}}$ (of $\Pi_{\mathrm{Im}(\partial_{k+1}^j)}$), which are in turn used for building $V_{\Pi_{\mathrm{Ker}(\partial_k^i) \cap \mathrm{Im}(\partial_{k+1}^j)}}$ (of $ \Pi_{\mathrm{Ker} \cap \mathrm{Im} }$). In this section we present an outline of the methods used for building these block encodings, and state their cost in Table~\ref{tab:ProjectorEncodingsKer} and Table~\ref{tab:ProjectorEncodingsKerIm}. We refer to Appendix~\ref{AppSub:SubspaceProjectors} for details.

\begin{table}[!ht]
    \centering
    \begin{tabular}{|c|c|c|} \hline
          & $(\alpha, m, \epsilon)$ & Costs ($\mathcal{O}$) \\ \hline
         C & \makecell{$(1,$ \\ $ \log\left((N\!+\!1)(k\!+\!1)\right) \!+\! 2, $ \\ $ \epsilon_k) $} & \makecell{$\frac{\sqrt{(N+1)(k+1)}}{\Lambda_{\partial_k^i}} \log\left(\frac{1}{\epsilon_k}\right) $ \\ $\times(V_{\partial_k^i} \!,\! V_{\partial_k^i}^\dag)$} \\ \hline
        D & $(1, 3, \epsilon_k) $ & \makecell{$\frac{\sqrt{N}}{\Lambda_{\partial_k^i}} \log\left(\frac{1}{\epsilon_k}\right) $ \\ $\times(V_{\partial_k^i} \!,\! V_{\partial_k^i}^\dag)$} \\ \hline
    \end{tabular}
    \caption{A summary of the costs to implement the block encoding $V_{\Pi_{\mathrm{Ker}}}$ for the compact (C) and direct (D) mappings.}
    \label{tab:ProjectorEncodingsKer}
\end{table}

\begin{table*}[t]
    \centering
    \begin{tabular}{|c|c|c|} \hline
          & $(\alpha, m, \epsilon)$ & Costs \\ \hline
         Compact & \makecell{$\left(2, \log\left((N+1)^2(k+1)(k+2) \right) + 6, \epsilon_{p} + \epsilon_{k} + \frac{8}{\Lambda_{\Pi\Pi}^{0.5}}\log\left( \epsilon_{p}^{-1} \right) \sqrt{\epsilon_{k} + \epsilon_{i}} \right)$} &
         \multirow{2}{*}{\makecell{$\mathcal{O}\left( \frac{1}{\Lambda_{\Pi\Pi}^{0.5}} \log\left(\frac{1}{\epsilon_{p}}\right) \right) \times$ \\ $V_{\Pi_{\mathrm{Ker}}}$, $V_{\Pi_{\mathrm{Ker}}}^\dag$, $V_{\Pi_{\mathrm{Im}}}$, $V_{\Pi_{\mathrm{Im}}}^\dag$}}
         \\ \cline{1-2}
        Direct & $\left(2, 8, \epsilon_{p} + \epsilon_{k} + \frac{8}{\Lambda_{\Pi\Pi}^{0.5}}\log\left( \epsilon_{p}^{-1} \right) \sqrt{\epsilon_{k} + \epsilon_{i}} \right) $ &  \\ \hline
    \end{tabular}
    \caption{A summary of the costs to implement the block encoding $V_{\Pi_\beta}$ of $ \Pi_\beta := \Pi_{\mathrm{Ker}(\partial_k^i)} - \Pi_{\mathrm{Ker}(\partial_k^i) \cap \mathrm{Im}(\partial_{k+1}^j)}$.
    }
    \label{tab:ProjectorEncodingsKerIm}
\end{table*}

The QSVT circuit for implementing $V_{\Pi_{\mathrm{Ker}}}$ uses singular value threshold projectors~\cite{gilyen2019quantum}, i.e., it (approximately) transforms $\partial_{k}^i$ into the projector $\Pi_{\mathrm{Ker}}$ by mapping $0$ singular values to $1$ and non-zero singular values (that are at least $\Lambda$) to approximately $0$. A technical detail is that we need to use an even-degree polynomial~\cite{gilyen2019quantum} ensuring that the left and right singular vectors coincide, so that we end up with an (approximate) orthonormal projector. Singular value threshold projectors are a generalization of eigenvalue threshold projectors that are used for example in Ref.~\cite{lin2020near}, for applying a filter function that projects out eigenvectors above a given threshold. In Ref.~\cite{lin2020near}, to prepare the ground state, the threshold is set lower than the gap $\Lambda$ between the ground and first excited state. In our application we can use the polynomial approximations $P(x)$ of the threshold function used in either of \cite{gilyen2019quantum,lin2020near}, but we need to carefully track the propagation of errors stemming from the approximation error~$\epsilon_k$. 

We can implement $V_{\Pi_{\mathrm{Im}}}$ similarly by observing that $\mathrm{Im}(\partial_{k+1}^j)$ is the orthogonal complement of $\mathrm{Ker}((\partial_{k+1}^j)^\dagger)$. We can thus use the above construction to implement $V_{\Pi_{\mathrm{Im}}}$ by applying QSVT to $(\partial_{k+1}^j)^\dagger$ with the polynomial $1-P(x)$ (as opposed to $P(x)$ above).

Given the above methods for implementing $V_{\Pi_{\mathrm{Ker}}}$ and $V_{\Pi_{\mathrm{Im}}}$, we can implement $V_{\Pi_{\mathrm{Ker}(\partial_k^i) \cap \mathrm{Im}(\partial_{k+1}^j)}}$. We cannot simply take a product of the two projectors, as in general they do not commute with each other (because of new simplices that enter the complex at scale $j$). 
However, we can implement $\Pi_{\mathrm{Ker}(\partial_k^i) \cap \mathrm{Im}(\partial_{k+1}^j)}$ by first taking the product of the block-encoded matrices $\Pi_{\mathrm{Ker}}$ and $\Pi_{\mathrm{Im}}$~\cite{gilyen2019quantum}, and then applying QSVT to the resulting block encoding of $\Pi_{\mathrm{Ker}}\cdot \Pi_{\mathrm{Im}}$. 
We apply a function that sends all $\neq 1$ singular values to $0$. In reality, we can only send singular values below a threshold $(1 - \Lambda_{\Pi\Pi})$ to zero, and we choose $\Lambda_{\Pi\Pi}$ so that all $\neq 1$ singular values of $\Pi_{\mathrm{Ker}}\cdot \Pi_{\mathrm{Im}}$ are below $(1 - \Lambda_{\Pi\Pi})$. Such a thresholding costs $\mathcal{O}\left( \Lambda_{\Pi\Pi}^{-0.5} \log\left(\epsilon_{p}^{-1} \right) \right)$ (see \cite[Lemma 35]{gilyen2018QSingValTransfArXiv} for an explanation of the quadratic improvement) calls to $V_{\Pi_{\mathrm{Ker}}}$ and $V_{\Pi_{\mathrm{Im}}}$ to implement $V_{\Pi_{\mathrm{Ker}(\partial_k^i) \cap \mathrm{Im}(\partial_{k+1}^j)}}$ with precision $\epsilon_{p}$.

\subsection{State preparation cost}\label{Subsec:StatePrep}
In this section we describe how to implement $V_{\psi}$ preparing $\ket{\psi_{S_k^i}}$ from the all-$0$ state, where $\ket{\psi_{S_k^i}}$ is a purified maximally mixed state over $k$-simplices in the complex at scale $i$. We discuss our approach in more detail in Appendix~\ref{AppSub:StatePrep}.

We use (fixed-point) amplitude amplification to construct an approximate version of $\widetilde{V}_{\psi}$ that prepares an $\epsilon_\psi$-approximation of $\ket{\psi_{S_k^i}}$. 
For both compact and direct encodings the algorithm uses $\bigOt{\sqrt{\frac{\binom{N}{k+1}}{|S_{k}^i|}}\log\left(\frac{1}{\epsilon_\psi}\right)}$ calls to the membership oracle $O_{m_{k}^i}$, and to an operator $U_{\mathrm{uni}}$ (and its inverse) that prepares a purification of the maximally mixed state over all possible $k$-simplices.

In the direct mapping, $k$-simplices are Hamming weight $(k+1)$ computational basis states on $N$ qubits. We can implement $U_{\mathrm{uni}}$ using the approaches in Ref.~\cite{bartschi2019Dicke,bartschi2022DickeImproved} for preparing Dicke states, the most efficient of which has $\mathcal{O}(k\log(N))$ depth. 

For the compact mapping, we can implement $U_{\mathrm{uni}}$ (to construct a uniform superposition of $k$-simplices, with correctly ordered vertices) through a multi-step process. We first place all vertex registers into an equal superposition. We then use exact amplitude amplification to convert this state to a superposition with no repeated vertices (but ignoring the ordering of vertices). We can use exact amplitude amplification, because we can classically exactly compute the probability $\binom{N}{k+1}\cdot (k+1)!/N^{k+1}$ that we do not get repeated vertices. Finally, we can use a reversible quantum sorting network~\cite{gilyen2014MScThesis}, to obtain the desired equal superposition over all $k$-simplices, with their vertices correctly ordered.\footnote{This process can be made even more efficient by performing reversible sorting only after fixed-point amplitude amplification has been performed to filter out simplices that are not in the complex at the chosen length scale.} In the practically relevant regime when $k\leq \sqrt{N}$ this requires
\begin{align}
    \bigO{\log(N) + k\log(k) \log\log(N)}
\end{align}
circuit depth and $\mathcal{O}\left( k\log(N)  \right)$ ancilla qubits.

\section{Quantum-inspired classical algorithm for TDA}\label{Sec:Classical}

In this section we review existing classical algorithms for topological data analysis, and present a classical algorithm based on the power method, inspired by our quantum algorithm, for computing persistent Betti numbers.
There are a number of classical algorithms for persistent Betti numbers. The most widely used approach~\cite{Edelsbrunner2002} (referred to as the `textbook', `standard', or `column' algorithm) proceeds by `reducing' an $\left(|S_{k}^j|+|S_{k+1}^j|\right) \times \left(|S_{k-1}^j|+|S_{k}^j|\right)$ boundary matrix to identify pairs of simplices that create and (later) destroy a $k$-dimensional topological feature. The algorithm requires only a single repetition to compute the persistent Betti numbers at all length scales. Define a shorthand notation $|S_{k, k+1}^j| := |S_{k}^j|+|S_{k+1}^j|$. This algorithm has worst-case runtime of $\mathcal{O}(|S_{k, k+1}^j|^3)$~\cite{morozov2005persistence}, although the runtime can approach linear in practice, due to sparsity in the complex~\cite{milosavljevic2011zigzag}, and its constant factors have been heavily optimised~\cite{chen2011persistent, de2011dualities}. The complexity can be improved to $\mathcal{O}(|S_{k, k+1}^j|^\omega)$~\cite{milosavljevic2011zigzag, milosavljevic2010:inriaReport} (where $\omega \in[2,2.372)$ is the exponent for matrix multiplication).

Classical techniques have also been introduced that prune simplices from the complex. While it is NP-hard to find the maximally pruned complex~\cite{joswig2006computing}, a number of heuristic approaches have been developed~\cite{mrozek2008homology, zomorodian2010tidy, barmak2012strong, mischaikow2013morse, dlotko2014simplification, boissonnat2018strong} (and see Ref.~\cite{otter2017roadmap} Secs. 3.1 and 5.2.6). If the number of simplices in the reduced complex is $|\bar{S}|$, the algorithm of Ref.~\cite{mischaikow2013morse} reduces the complexity of computing persistent Betti numbers from $\mathcal{O}(|S_{k,k+1}^j|^\omega)$ to $\mathcal{O}(|S_{k,k+1}^j| + |\bar{S}|^\omega)$. When $|\bar{S}| \ll |S_{k,k+1}^j|$ the algorithm runs approximately linearly in the number of simplices in the original complex.

An alternative approach, which has parallels with the quantum algorithm for TDA, is to use the power method to find the dimension of the kernel of the combinatorial Laplacian matrix~\cite{friedman1998computing}. This approach was previously only applicable to computing Betti numbers. Like quantum algorithms for TDA, this approach also has a runtime that depends on the gap between the ground and first excited state of the operator used to encode the topology of the complex. This approach has a time complexity of $\tilde{\mathcal{O}}\left( \frac{|S_k^i|(k^2 \beta_k^i + k (\beta_k^i)^2)}{\Lambda}\right)$ and a spatial complexity of $\mathcal{O}\left( |S_k^i|(k+\beta_k^i) \right)$~\cite{friedman1998computing}. This complexity provides an apples-to-apples comparison for using quantum algorithms to compute regular Betti numbers. Even in the dense complex regime of $|S_k^i| \sim \binom{N}{k+1}$, the quantum algorithm only achieves an approximately quadratic speedup in $N$. With the recent introduction of the persistent combinatorial Laplacian~\cite{wang2020persistent, memoli2020persistent, wang2021hermes}, the power method could also be used for computing persistent Betti numbers. However, the asymptotic cost of building the persistent combinatorial Laplacian is the same as that of diagonalizing it~\cite{memoli2020persistent}, so there would likely be no benefit from performing the power method with this object. 

Inspired by our quantum algorithm for the computation of persistent Betti numbers, we present a classical power method for this same problem. At a high level, our algorithm constructs a polynomial approximation to the kernel projector. Consider a univariate polynomial $p(x)=a_3x^3+a_2x^2+a_1x+a_0$. We can rewrite the polynomial in the following way: $x(x(a_3x+a_2)+a_1)+a_0=(xy+a_0)\circ(xy+a_1)\circ(a_3x+a_2)$, where the composition happens in variable $y$. Similarly in general we have $a_d x^d+\ldots+a_2x^2+a_1x+a_0=(xy+a_0)\circ(xy+a_1)\circ\cdots\circ(a_d x+a_{d-1})$. This is known as Horner's method for evaluating polynomials. Now if $X$ is a matrix, and we want to compute $p(X)v$ we can do so by only using $d$ subsequent matrix vector products by evaluating $(X+a_0 I)\circ(X+a_1 I)\circ\cdots\circ(a_d X+a_{d-1} I)v$ from right to left. In our case, we have singular value transformations, so the above expression is modified to $(X+a_0 I)\circ(X^\dagger+a_1 I)\circ\cdots\circ(a_d X+a_{d-1} I)v$ with alternating $X$ vs. $X^\dagger$. In order to realize expressions like $p(q(X))$, it is sufficient to replace $X$ by $q(X)$ above, and implement each $q(X)$ analogously.
Using the above construction, we can implement a kernel projector for a sparse matrix using a sequence of matrix-vector products. For an $\epsilon$-accurate approximation of the projector $\Pi_{\mathrm{Ker}(\partial_k^i) \cap \mathrm{Im}(\partial_{k+1}^j)}$ we require an $\mathcal{O} \left( \frac{\log^2\left(\frac{1}{\epsilon}\right) }{\mathrm{Min}\left(\Lambda_{\partial_k^i}, \Lambda_{\partial_{k+1}^j}\right) \Lambda_{\Pi \Pi}^{0.5}}  \right)$ degree polynomial in the matrices $\partial_k^i, \partial_{k+1}^j$. A reasonable concern would be that the required polynomial could have exponentially large coefficients (in terms of the degree) in front of its monomials, and therefore that naive evaluation could yield large errors. It has been shown that this method of applying matrix polynomials is numerically stable, and so this is not the case~\cite{aurentz2019StableMatrixPolynomial}.

The matrix $\partial_k^i$ has column sparsity of $(k+1)$ and a row sparsity of $(N-k)$. Hence, the costs of each matrix-vector product are: $\partial_k^i : \mathcal{O}\left(S_{k-1}^i (N-k) \right)$, $\left(\partial_k^i\right)^\dag : \mathcal{O}\left(S_{k}^i (k+1) \right)$, $\partial_{k+1}^j : \mathcal{O}\left(S_{k}^j (N-k-1) \right)$, $\left(\partial_{k+1}^j\right)^\dag : \mathcal{O}\left(S_{k+1}^j (k+2) \right)$. Note that we are storing a compressed vector with length $S := |S_{k-1}^i| + |S_{k}^j| + |S_{k+1}^j|$, which only stores the relevant simplices present in the complex at the given length scales. We choose an unnormalized initial random vector for the simplices in $S_k^i$ (e.g. all entries $v_\alpha \in \pm 1$ if $\alpha \in S_k^i$, $v_\alpha = 0$ otherwise) and apply the projector $\Pi_{\mathrm{Ker}(\partial_k^i) \cap \mathrm{Im}(\partial_{k+1}^j)}$ through the repeated matrix-vector multiplication described above. 
By picking a series of random initial states, applying the approximate projector, and performing Gram-Schmidt orthogonalization of the resulting vectors, we can compute the rank of $\Pi_{\mathrm{Ker}(\partial_k^i) \cap \mathrm{Im}(\partial_{k+1}^j)}$, which is equal to the persistent Betti number $\beta_k^{i,j}$. This process uses $\beta_k^{i,j}$ random initial vectors, leading to a cost of 
\begin{equation}
    \mathcal{O} \left( \beta_k^{i,j} N S \frac{\log^2\left(\frac{1}{\epsilon}\right) }{\mathrm{Min}\left(\Lambda_{\partial_k^i}, \Lambda_{\partial_{k+1}^j}\right) \Lambda_{\Pi \Pi}^{0.5}}  \right)
\end{equation}
plus an additive cost of $\mathcal{O}\left(S \left(\beta_k^{i,j}\right)^2  + \left(\beta_k^{i,j}\right)^3 \right)$ to perform Gram-Schmidt orthogonalization and computation of the eigenvalues in the subspace, and an additive cost of $\mathcal{O}\left(N \sum_{\ell=0}^{k+1} |S_\ell^j|  \right)$ to find the $(k-1),k,(k+1)$-simplices in the complex at scales $i,j$\footnote{This additional cost is also present in the classical algorithms introduced above, but is typically neglected as it is subleading for small $k$.}. We observe that for small $\beta_k^{i,j}$, and $S \sim \binom{N}{k+1}$, the complexity in $\beta_k^{i,j}, N, k$ is only quadratically worse than our quantum algorithm, while the complexity in the gap parameters is the same. The classical algorithm also returns the vectors representing the persistent holes. Moreover, this algorithm has a better asymptotic scaling in $S$ than existing classical approaches~\cite{morozov2005persistence,memoli2020persistent} (although the error and gap dependence are worse, and the constant factors may be high).   
For the task of estimating normalized persistent Betti numbers, we can consider a similar approach to that of Ref.~\cite{apers2022simple}, which is similar to performing a path integral. We would start from a single simplex (a computational basis state) and approximate the projector through the sequence of matrix vector multiplications. Averaging over a number of initial states returns an estimate of the normalized persistent Betti number. As this approach performs sparse-matrix-sparse-vector multiplication, it is possible to realize each multiplication with complexity independent of the number of simplices $S$. However, as the sparsity grows with each multiplication, we are limited in the number of steps we can take, which limits the accuracy of the method. We leave careful analysis of the complexity of this approach to future work.

\section{Discussion}\label{Sec:Discussion}

\subsection{Quantum complexity}\label{Subsec:QuantumComplexity}
In Appendix~\ref{AppSec:OverallComplexity} we determine the overall complexity of our quantum algorithm for computing persistent Betti numbers, using the costs of the building blocks introduced in the previous sections. Using the direct mapping, we show in Appendix~\ref{AppSubSec:OverallComplexityDirect} that the algorithm has complexity given by the expression in Table~\ref{tab:QuantumComplexities}, to estimate the persistent Betti number to additive error $\Delta$ with success probability $ \geq 1 - \eta$. The quantum circuit acts on $\mathcal{O}(N)$ qubits. Taking $|S_k^i| \sim \binom{N}{k+1}$, hiding logarithmic factors with the big-$\tilde{\mathcal{O}}$ notation, and accounting for parallel implementation of gates recovers Theorem~\ref{theorem:overallcomplexity}. We can compare this scaling to the corresponding complexity of the quantum algorithm for computing persistent Betti numbers in Ref.~\cite{hayakawa2021quantum}, which we also show in Table~\ref{tab:QuantumComplexities}. That work used QSVT to construct the persistent combinatorial Laplacian (see Appendix~\ref{AppSec:MathsBackgroundPedagogy} for background), and to project onto its kernel, which encodes the persistent Betti numbers~\cite{memoli2020persistent}. We see that our approach provides a large polynomial improvement in $|S_k^i|, N, k$ over the prior state-of-the-art.

\begin{table*}[t]
    \centering
    \begin{tabular}{|c|c|} \hline
        Approach & Gate count \\ \hline
         This work (Direct mapping) & $ \tilde{\mathcal{O}}\left( \frac{N^2 \log(N)\sqrt{|S_k^i| \beta_k^{i,j}}}{\Delta} \log\left( \frac{1}{\eta} \right) \times \left( \sqrt{\frac{\binom{N}{k+1}}{|S_k^i|}}  + \frac{\sqrt{N}}{\Lambda_{\Pi\Pi}^{0.5} \mathrm{Min}\left(\Lambda_{\partial_k^i}, \Lambda_{\partial_{k+1}^j} \right)} \right) \right)$ \\ \hline
         Ref.~\cite{hayakawa2021quantum} (Direct mapping) &         $\tilde{\mathcal{O}}\left( \frac{N^2 |S_k^i| \beta_k^{i,j}}{\Delta^2}\times \left( \sqrt{\frac{\binom{N}{k+1}}{|S_k^i|}}  + \frac{N^6 k^4}{\Lambda_1^2 \Lambda_2} \right) \right)$ \\ \hline
    \end{tabular}
    \caption{A comparison of the asymptotic complexity of our algorithm using the direct encoding, and the approach of Ref.~\cite{hayakawa2021quantum}. We have converted our gate depth estimates into gate counts by pessimistically multiplying by a factor of $N$. The overall complexity of the approach of Ref.~\cite{hayakawa2021quantum} is not explicitly stated in that work, and so we have used our best estimates based on the subroutines given. The gap parameters $\Lambda_1, \Lambda_2$ refer to the gaps of a submatrix of $\partial_{k+1}^j \left(\partial_{k+1}^j\right)^\dag$, and of the persistent combinatorial Laplacian, respectively. }
    \label{tab:QuantumComplexities}
\end{table*}

Both of these quantum algorithms have dependencies on two different gap parameters; in our case, $\Lambda_{\Pi \Pi},  \mathrm{Min}\left(\Lambda_{\partial_k^i}, \Lambda_{\partial_{k+1}^j} \right)$ and in the case of Ref.~\cite{hayakawa2021quantum} $\Lambda_1, \Lambda_2$ (where $\Lambda_1$ is the gap of a submatrix of $\partial_{k+1}^j \left(\partial_{k+1}^j\right)^\dag$, and $\Lambda_2$ is the gap of the persistent combinatorial Laplacian). The gaps $\mathrm{Min}\left(\Lambda_{\partial_k^i}, \Lambda_{\partial_{k+1}^j} \right)$ and $\Lambda_2$ arise due to the need to project into the relevant kernel/image spaces, and their non-persistent analogues are also present when computing regular Betti numbers. However, the gaps $\Lambda_{\Pi\Pi}$ and $\Lambda_1$ have a more subtle origin, that appears to be specific to the problem of computing persistent Betti numbers. In Ref.~\cite{hayakawa2021quantum}, this gap arises from performing a change of basis necessary to build the persistent combinatorial Laplacian (see Appendix~\ref{AppSub:BettiNuances} for further discussion). In our algorithm, this gap arises from transforming the projector $\Pi_{\mathrm{Ker(\partial_k^i)}} \cdot \Pi_{\mathrm{Im(\partial_{k+1}^j)}}$   to $\Pi_{\mathrm{Ker} \cap \mathrm{Im}}$ by applying QSVT. This is necessary because $\Pi_{\mathrm{Ker(\partial_k^i)}}, \Pi_{\mathrm{Im(\partial_{k+1}^j)}}$ do not commute in general. In both cases, this additional complexity arises due to the new simplices that are present at stage $j$, that are not present at stage $i$. This makes it necessary to perform an effective change of basis, to ensure only the relevant simplices in $j$ that can fill holes in $i$ are considered. It remains an open question as to how these gap parameters scale as a function of $N, k$. A number of conjectures about the scaling of the gap of the combinatorial Laplacian were given in Ref.~\cite{friedman1998computing}, but we are unaware of any proofs or further evidence towards these conjectures, or extensions to persistent operators. The scaling of the gap parameters will determine the viability of quantum algorithms for topological data analysis, and so it is critical to establish how these gaps scale for problems of interest.\\

As shown in Theorem~\ref{theorem:overallcomplexity} and Appendix~\ref{AppSubSec:OverallComplexityCompact}, the corresponding expression for the compact mapping (with a slow-load lookup-table implementation of the membership oracle) is similar to that of our direct mapped algorithm. However, the quantum circuit acts on only $\mathcal{O}\big{(} k \log(N) \big{)}$ qubits, providing an exponential reduction in the number of qubits required compared to the prior state-of-the-art~\cite{hayakawa2021quantum}, in addition to a polynomial improvement in gate complexity. This reduction in spatial complexity is significant; in practical regimes of interest, we can expect $k \approx 3$, $N \approx 10^6$, in which case (neglecting constant factors and subleading terms) the number of logical qubits is reduced from $10^6$ to around $80$.

\subsection{Prospects for quantum advantage}\label{Subsec:QuantumSpeedups}
We can consider the regimes in which quantum algorithms may provide a speedup over the classical approaches discussed in Sec.~\ref{Sec:Classical}, including our new quantum-inspired approach. For $k$ constant, both the quantum and classical algorithms run in polynomial time, preventing quantum algorithms from achieving an exponential speedup. For $k$ scaling with $N$, and a clique dense complex, all known classical algorithms run in superpolynomial time. This is likely to be a fundamental result; for clique complexes determining if $\beta_k^i$ is non-zero is QMA$_1$-hard~\cite{crichigno2022clique,king2023promise}, and computing $\beta_k^i$ is $\#$P-hard~\cite{schmidhuber2022complexity}.

When our task is to estimate $\beta_k^{i,j}$ to constant additive error $\Delta$, all known quantum algorithms similarly scale superpolynomially (for $k$ scaling with $N$). The limitation of quantum algorithms stems from their competing subroutines. The algorithms need to efficiently find simplices in the complex, so $|S_k^i|$ should only be polynomially smaller than $\binom{N}{k+1}$. On the other hand, the algorithms compute $\beta_k^{i,j}/|S_k^i|$ to additive error $\delta$ with an overhead of $poly(\delta^{-1})$. Converting the estimate of $\beta_k^{i,j}/|S_k^i|$ to an estimate of $\beta_k^{i,j}$ reintroduces the superpolynomial factor $|S_k^i|$, and eliminates the previously claimed~\cite{lloyd2016quantum, ubaru2021quantum,hayakawa2021quantum,ameneyro2022quantum} exponential speedup when one needs to compute these quantities to constant additive error. A similar normalization issue arises in quantum algorithms for the approximating the Jones polynomial~\cite{aharonov2006polynomial, shor2007estimating} and other \#P problems~\cite{bordewich2009approximate}.

We can assess the potential polynomial speedup of the quantum algorithm over classical algorithms. Let us consider $k=3$, which is sufficiently low-dimensional to be practically relevant, but is large enough so that it is already challenging to compute classically. We will assume that the gap parameters $\Lambda$ are $\Omega(1/polylog(N))$, and note that $1/\Lambda \sim \mathcal{O}(poly(N))$ will reduce the polynomial speedup claimed below. Assuming a dense complex with $|S_{k+1}^j| \sim \binom{N}{k+1}$ this results in a worst-case scaling of approximately $N^{10}$ classically~\cite{milosavljevic2011zigzag}. Our quantum algorithm scales as $  N^{3.5} $, with the additional dependencies on the precision and gap parameters noted in Table~\ref{tab:PriorAlgorithmComp}. Subject to these dependencies, this is an almost cubic speedup in $N$. For larger values of $k$, subject to the dependencies on the gap parameters, the quantum speedup approaches quintic: $\binom{N}{k+1}^{0.5}$ vs $\binom{N}{k+1}^{\omega}$. Nevertheless, in practice the classical algorithm often scales as $\mathcal{O}\left( |S_{k+1}^j| \right)$~\cite{milosavljevic2011zigzag}, or this scaling can be achieved using heuristic sparsification methods~\cite{mischaikow2013morse}. Moreover, our quantum-inspired power method also scales linearly in the number of simplices, with an identical gap dependence as the quantum algorithm. Hence, even if existing classical heuristics do not work, the quantum algorithm is limited to at most a  quadratic speedup over classical approaches. 

Based on the discussion above, the only prospect for large polynomial (or even superpolynomial) quantum speedups appears to be the case where $\beta_k^{i,j}$ is sufficiently large (compared to $|S_k^i|$) that estimating $\beta_k^{i,j}/|S_k^i|$ to relative error (equivalent to estimating $\beta_k^{i,j}$ to relative error) is a meaningful task. In Ref.~\cite{cade2021complexity} it was shown that estimating the normalized quasi-Betti numbers (which accounts for miscounting low-lying but non-zero singular values) of general cohomology groups is DQC1-hard. The hardness of estimating normalized (persistent) Betti numbers of a clique complex, subject to a gap assumption, which is the problem solved by existing quantum algorithms, (or alternatively, the hardness of estimating the normalized quasi-Betti numbers of a clique complex if we do not demand a gap) has not been established (see Ref.~\cite{cade2021complexity}~Sec.~1.1). Even if DQC1-hardness can be proven for this problem, there is currently no evidence that real-world datasets possess sufficiently large (persistent) Betti numbers that such a calculation would be useful. For example, the largest known bounds on $\beta_k^i$ of a Vietoris-Rips complex are $\mathcal{O}(N^k)$~\cite{goff2011extremal} (though this bound is obtained inductively, and so may be loose). That work also provides a construction with $\beta_k = \Omega(N^{(k+1)/2})$. As discussed in Ref.~\cite{schmidhuber2022complexity, berry2022quantifying} this leads to an approximately quartic speedup for the quantum algorithm, compared to classical algorithms scaling as $\mathcal{O}\left( \binom{N}{k+1} \right)$. For complexes constructed from points sampled randomly from an underlying probability distribution on $\mathbb{R}^d$, the Betti numbers are much smaller; $\beta_k^i \in \mathcal{O}(N)$, on average~\cite{Kahle2011RandomComplexes,bobrowski2014RandomComplexesSurvey}, which precludes large polynomial speedups, even for relative error estimates. We note that a family of graphs (not constructed from a Vietoris-Rips complex) were recently presented in Ref.~\cite{berry2022quantifying} which have a sufficiently large Betti numbers and gap that quantum algorithms can achieve superpolynomial advantage for relative error Betti number estimation. Nevertheless, as these graphs are artificially constructed, such that their Betti numbers are known \textit{a priori}, it is unlikely that similar speedups will be present in real-world datasets. 

As discussed earlier, randomized classical algorithms were recently introduced~\cite{berry2022quantifying,apers2022simple} that can also compute normalized (non-persistent) Betti numbers in polynomial time. These algorithms have worse gap/precision dependence than the quantum algorithm, but are efficient in the number of simplices. Hence, this further restricts the potential regime of advantage of quantum algorithms for topological data analysis.

\section{Conclusion}\label{Sec:Conclusion}
We have developed a streamlined quantum algorithm for computing persistent Betti numbers, and analyzed its complexity. Our algorithm uses the QSVT framework to tackle the problem in a quantum native way, which enables improved complexities compared to existing quantum algorithms. We introduce a new compact mapping from simplices to qubits, that provides an exponential reduction in spatial resources for $ k = \mathcal{O} \left( \mathrm{polylog}(N) \right)$. By compiling all steps of our algorithm down to primitive gates, we can provide a fair comparison to classical algorithms. We see that when the caveats of all known quantum algorithms are taken into account, they can achieve, at best, a gap-dependent polynomial speedup over classical algorithms for the salient problem of estimating low-dimensional persistent Betti numbers to constant additive error -- in contrast to previous claims. The polynomial speedup of the quantum algorithm is contingent upon the as yet unknown dependence of the gap parameters on $N$. Moreover, we introduced a quantum-inspired classical algorithm that has better asymptotic dependence on the number of simplices than existing classical algorithms, and which limits the asymptotic speedup of our quantum algorithm to quadratic.

The compact mapping introduced in this work provides an exponential space saving over existing quantum and classical methods for the practically relevant regime of $k = \mathcal{O} \left( \mathrm{polylog}(N) \right)$. This exponential reduction in logical qubits over existing quantum algorithms is significant, particularly for the foreseeable future when logical qubit count will be a valuable resource. In fact, our algorithm achieves an exponential space saving over all classical algorithms for computing persistent Betti numbers\footnote{We note that recent work~\cite{apers2022simple} provides a classical algorithm for estimating normalized Betti numbers with $\mathcal{O}(N)$ classical space complexity, the same as the classical memory required for our quantum algorithm. However, such an algorithm has not yet been formulated for persistent Betti numbers.}.  

Our algorithm provides an efficient and universal lens through which to investigate topological data analysis on quantum computers. Future work could determine its practicality by considering concrete resource estimates for problems of interest. Quantum advantage would be made more likely by identifying scenarios where high-dimensional persistent Betti numbers are of interest, or where computing the normalized persistent Betti numbers would be meaningful. \\ 

\section{Acknowledgements}
We thank Alex~Dalzell and Eric~Peterson for productive discussions on various aspects of this work, and Fernando Brand\~ao for discussions and support throughout the project. A.G. acknowledges funding from the AWS Center for Quantum Computing. M.B. is supported by the EPSRC (Grant number EP/W032643/1).

% \bibliographystyle{unsrtnat}
% % \bibliographystyle{plainnat}
% \bibliography{Copy_QTDA,Copy_Bibliography}

\appendix
\onecolumngrid

\section{Pedagogical introduction to topological data analysis}\label{AppSec:MathsBackgroundPedagogy}

In this section, we provide a more pedagogical introduction to the mathematical background and notation used in topological data analysis, to build on the introduction given in the main text. This area is known as simplicial persistent homology. We will first discuss the case of computing regular Betti numbers, before moving on to persistent Betti numbers. We refer readers to Ref.~\cite{carlsson_vejdemo-johansson_2021} for a more detailed and rigorous treatment of this area. For pedagogical introductions to algebraic topology and computing persistent homology, we recommend the following review~\cite{otter2017roadmap}, notes~\cite{web:TopazNotes}, popular science article~\cite{feng2021Connecting}, and video lectures~\cite{web:RieckLecture, web:FengLecture}. We will only focus on the computation of persistent homology for point-cloud data, and do not consider extensions to graphs or pixel-based data~\cite{otter2017roadmap}.

\subsection{Computing Betti numbers}\label{AppSubSec:Betti}

\begin{center}
\begin{tabular}{c|c}
    Symbol & Meaning \\ \hline
     $\sigma_k, \tau_k, s_k$ & Individual $k$-simplices \\
     $S^i$ & Simplicial complex at scale $i$ defined by length scale $\mu_i$ \\
     $S_k^i$ & The set of $k$-simplices in $S^i$ \\
     $C_k(S^i)$ & The $k$-th chain group of $S_k^i$ (here, over the coefficients $\{0, \pm1 \}$) \\
     $\partial_k^i$ & The $k$-th boundary operator restricted to $k$-simplices in the complex at scale $i$ \\
     $\mathrm{H}_k^i$ & The $k$-th Homology group (i.e. group of $k$-holes) at scale $i$ \\
     $\beta_k^i$ & The $k$-th Betti number (i.e. number of $k$-dimensional holes) at scale $i$ \\
     $\Delta_k^i$ & The $k$-th combinatorial Laplacian at scale $i$ \\
\end{tabular}
\end{center}

We will assume that our data consists of $N$ points distributed in $\mathbb{R}^d$, sampled from an underlying topological manifold -- for example, $N=5$ in $\mathbb{R}^{2}$:\\
\begin{center}
\begin{tikzpicture}
[align=center,node distance=3cm]
\filldraw [black]  (0,0) circle (3pt);
\node [left] at (0,0) {A};

\filldraw [black]  (3,0) circle (3pt); 
\node [right] at (3,0) {B};

\filldraw [black]  (3,-2) circle (3pt); 
\node [right] at (3,-2) {C};

\filldraw [black]  (0,-2) circle (3pt);
\node [left] at (0,-2) {D};

\filldraw [black]  (1.5,-3) circle (3pt); 
\node [below] at (1.5,-3) {E};
\end{tikzpicture}
\end{center}

These datapoints will also be referred to as vertices, or 0-simplices. The datapoints have an associated ordering; for example, we choose to order the datapoints alphabetically by their labels. We define a length scale $\mu$, and connect vertices by an edge if they are within $\mu$ of each other. For example, at two values of $\mu$, we obtain:\\
\begin{center}
\begin{tikzpicture}
[align=center,node distance=3cm]

\node [above] at (1.5, 0.5) {$\mu_1$};

\filldraw [black]  (0,0) circle (3pt);
\node [left] at (0,0) {A};

\filldraw [black]  (3,0) circle (3pt); 
\node [right] at (3,0) {B};

\filldraw [black]  (3,-2) circle (3pt); 
\node [right] at (3,-2) {C};

\filldraw [black]  (0,-2) circle (3pt);
\node [left] at (0,-2) {D};

\filldraw [black]  (1.5,-3) circle (3pt); 
\node [below] at (1.5,-3) {E};

\draw (0,0) --(0,-2);
\draw (3,0) --(3,-2);
\draw (0,-2) --(1.5,-3);
\draw (3,-2) --(1.5,-3);

\node [above] at (11.5, 0.5) {$\mu_2$};

\filldraw [black]  (10,0) circle (3pt);
\node [left] at (10,0) {A};

\filldraw [black]  (13,0) circle (3pt); 
\node [right] at (13,0) {B};

\filldraw [black]  (13,-2) circle (3pt); 
\node [right] at (13,-2) {C};

\filldraw [black]  (10,-2) circle (3pt);
\node [left] at (10,-2) {D};

\filldraw [black]  (11.5,-3) circle (3pt); 
\node [below] at (11.5,-3) {E};

\draw (10,0) --(10,-2);
\draw (13,0) --(13,-2);
\draw (10,-2) --(11.5,-3);
\draw (13,-2) --(11.5,-3);
\draw (10,0) -- (13,0);
\draw (10,-2) -- (13,-2);

\end{tikzpicture}
\end{center}

This creates a graph for each value of $\mu$. We use the so-called \textit{clique complex} of the graph, where simplices correspond to the cliques in the graph: a $(k+1)$-clique defines a $k$-simplex. This way every $(k+1)$ mutually connected vertices induce a $k$-simplex, which we can think of as the convex hull of the corresponding data points. For example:\\

\begin{center}
\begin{tabular}{|c|c|c|}
    \hline
     Geometric realisation & Simplex & Examples \\ \hline
    \begin{tikzpicture}
     \filldraw [black]  (0,0) circle (2pt);
     \end{tikzpicture}
     & 0-Simplex  & $A, B$  \\ \hline
    \begin{tikzpicture}
     \filldraw [black]  (0,0) circle (2pt);
     \filldraw [black]  (0.5,0) circle (2pt);
     \draw (0,0) -- (0.5,0);
     \end{tikzpicture}
     & 1-Simplex  & $BC, CD$  \\ \hline
    \begin{tikzpicture}
     \filldraw [black]  (0,0) circle (2pt);
     \filldraw [black]  (0.5,0) circle (2pt);
     \filldraw [black]  (0.25,0.433) circle (2pt);
     \draw [black, fill=gray] (0,0) -- (0.5,0) -- (0.25,0.433) -- (0,0);
     \end{tikzpicture}
     & 2-Simplex  & $CDE$  \\ \hline
\end{tabular}
\end{center}

A $k$-simplex has cardinality $k+1$ (it contains $k+1$ vertices), and is of dimension $k$. For two simplices $\tau, \sigma$, if $\tau \subset \sigma$, then we say $\tau$ is a face of $\sigma$. For example, the 1-simplex $CD$ is a face of the 2-simplex $CDE$. Simplices inherit the ordering of the vertices. We ascribe an orientation to the simplices based on the ordering of the vertices. An odd permutation of the vertices changes the orientation of the simplex; for example, $DA = -AD$. A set of simplices (and all of their constituent faces) is referred to as a simplicial complex $S$. The dimension of a simplicial complex is the highest dimension of a simplex in the complex. We define the number of simplices in the complex as $|S|$, and the number of $k$-simplices in the complex as $|S_k|$. As the graphs above only contain vertices and edges (0 and 1-simplices), they are 1-dimensional simplicial complexes, known as the 1-skeleton of the complex. We can then add to the complex all $k$-simplices represented by $(k+1)$-cliques in the graph:\\

\begin{center}
\begin{tikzpicture}
[align=center,node distance=3cm]

\node [above] at (1.5, 0.5) {$S^{\mu_1} = \{A, B, C, D, E, AD, BC, DE, CE \}$ \\
$|S^{\mu_1}_0| = 5$, $|S^{\mu_1}_1| = 4$, $|S^{\mu_1}_2| = 0$};

\filldraw [black]  (0,0) circle (3pt);
\node [left] at (0,0) {A};

\filldraw [black]  (3,0) circle (3pt); 
\node [right] at (3,0) {B};

\filldraw [black]  (3,-2) circle (3pt); 
\node [right] at (3,-2) {C};

\filldraw [black]  (0,-2) circle (3pt);
\node [left] at (0,-2) {D};

\filldraw [black]  (1.5,-3) circle (3pt); 
\node [below] at (1.5,-3) {E};

\draw (0,0) --(0,-2);
\draw (3,0) --(3,-2);
\draw (0,-2) --(1.5,-3);
\draw (3,-2) --(1.5,-3);

\node [above] at (11.5, 0.5) {$S^{\mu_2} = \{A, B, C, D, E, AD, BC, DE, CE, AB, CD, CDE \}$\\
$|S^{\mu_2}_0| = 5$, $|S^{\mu_2}_1| = 6$, $|S^{\mu_2}_2| = 1$};

\filldraw [black]  (10,0) circle (3pt);
\node [left] at (10,0) {A};

\filldraw [black]  (13,0) circle (3pt); 
\node [right] at (13,0) {B};

\filldraw [black]  (13,-2) circle (3pt); 
\node [right] at (13,-2) {C};

\filldraw [black]  (10,-2) circle (3pt);
\node [left] at (10,-2) {D};

\filldraw [black]  (11.5,-3) circle (3pt); 
\node [below] at (11.5,-3) {E};

\draw (10,0) --(10,-2);
\draw (13,0) --(13,-2);
\draw (10,0) -- (13,0);

\draw [black, fill=gray] (10,-2) -- (11.5,-3) -- (13,-2);

\end{tikzpicture}
\end{center}

This complex is known as a Vietoris-Rips complex, and provides an approximation to the underlying topological space whose homology we wish to compute. We refer the reader to Ref.~\cite{otter2017roadmap} for discussion of other types of simplicial complexes that can be used to approximate the topological space. The maximum number of $k$-simplices is $\binom{N}{k+1}$. We represent the $k$-simplices by orthonormal basis vectors.

Under addition with field coefficients, these vectors form an abelian group. We define the $k$-th chain group of the complex $C_k(S)$, over field coefficients $\{-1, 0, 1\}$ (i.e. the group $\mathbb{Z}_3$ defined over addition modulo 3). 

The group elements are linear combinations of $k$-simplices with coefficients $\pm 1$, referred to as $k$-chains. For example:\\
\begin{align}
    AB + CD &\in C_1(S^{\mu_1}) \\
    AB + BC + CD - AD &\in C_1(S^{\mu_2}) \nonumber
\end{align}
The simplicial complex that we have constructed acts as a scaffold that approximates the underlying topological space of interest. Promoting the simplices in the complex to vectors in a vector space enables the use of linear algebraic tools to perform the computation of simplicial homology. We are trying to identify holes in the complex; a hole is a region of empty space, demarcated by its boundary. As studying these boundaries is crucial for finding holes, we define a boundary operator.

The boundary operator $\partial$ is a linear map which sends $k$-chains to their oriented boundaries. The action of the boundary operator on a $k$-simplex in the complex at the given scale $i$ is
\begin{align}
    \partial [v_0, ..., v_k] = \sum_{l=0}^k (-1)^l [v_0, ... \hat{v}_l, ... v_k]
\end{align}
where $v_0 ... v_k$ are the ordered vertices in the $k$-simplex, and $\hat{v}_l$ means the vertex is excluded from the simplex. We denote by $\partial_k^i\colon \langle S_k^i \rangle \rightarrow \langle S_{k-1}^i\rangle$ the boundary operator on the complex at length scale $i$ restricted to the subspaces $\langle S_k^i \rangle$ and $\langle S_{k-1}^i\rangle$ spanned by the basis vectors corresponding to $k$ and $k-1$ simplices present at scale $i$.

The boundary operator has the following actions on simplices in our example complex
\begin{align}
    \partial_1 [AB] &= B - A \\
    \partial_2 [CDE] &= DE - CE + CD \nonumber
\end{align}

A closed cycle of $k$-simplices in the complex can surround either empty space, or can surround a chain of $(k+1)$-simplices in the complex. The former case corresponds to a hole, while the latter case corresponds to a hole that has been filled-in by $(k+1)$-simplices. An example of the former case in our example complex is the $1$-cycle $AB + BC + CD - AD$ (where the minus sign appears due to $DA = -AD$). An example of the latter case in our example complex is the $1$-cycle $DE - CE + CD = \partial_2 [CDE]$. Linear combinations of these two types of $k$-cycles are also valid $k$-cycles. All $k$-cycles are themselves boundaryless (when applying the boundary operator to a closed cycle of $k$-simplices, each $(k-1)$-face appears twice, each time with opposite signs). As a result, $k$-cycles are objects in the kernel of $\partial$. 

To identify the holes in the complex, we consider all possible $k$-cycles, and remove those that are the $k$-boundaries of a $(k+1)$-chain in the complex. These $k$-boundaries are the $k$-chains in the image of $\partial$. As both the $k$-cycles and $k$-boundaries form subgroups of $C_k(S^i)$, we can remove the filled-in holes using quotient groups (this also eliminates the issue of linear combinations of $k$-holes and $k$-boundaries being valid $k$-cycles - such objects are said to be homologous to the hole in question - and either the hole, or a homologous $k$-cycle can be chosen to represent the hole). The $k$-th homology group is defined as the group of $k$-holes in the complex
\begin{equation}
    \mathrm{H}_k^i = \mathrm{Ker}(\partial_k^i)~/~ \mathrm{Im}(\partial_{k+1}^i)
\end{equation}
and the $k$-th Betti number is the rank of the $k$-th homology group 
\begin{align}
\beta_k^i &= \mathrm{rank}\left( \mathrm{H}_k^i \right) \\
&= \mathrm{dim}\left(\mathrm{Ker}(\partial_k^i)\right) - \mathrm{dim}\left(\mathrm{Im}(\partial_{k+1}^i)\right). \nonumber
\end{align} 
$\beta_0^i$ gives the number of connected components in $S^i$, $\beta_1^i$ gives the number of holes, $\beta_2^i$ gives the number of voids, and so on. For our example complex above, $\beta_0(S^{\mu_2}) = 1$, and $\beta_1(S^{\mu_2}) = 1$, so the complex has one connected component, and a single 1-dimensional hole. \\

As discussed in Ref.~\cite{lim2020hodge}, it can be undesirable to have a number of equivalent homologous choices for the representative of the hole. The `harmonic representative' is a chain that is orthogonal to the chains in $\mathrm{Im}(\partial_{k+1}^i)$. The group of harmonic representatives is given by $\mathrm{Ker}(\Delta_k^i)$, where
\begin{equation}
    \Delta_k^i = \partial_{k+1}^i (\partial_{k+1}^i)^\dag + (\partial_k^i)^\dag \partial_k^i
\end{equation}
is known as the $k$-th combinatorial Laplacian of the simplicial complex, which is a higher-order analogue of the graph Laplacian. It can be shown that~\cite{lim2020hodge}
\begin{equation}
    \beta_k^i = \mathrm{dim}\left(\mathrm{Ker}(\Delta_k^i)\right).
\end{equation}
This formulation of the problem is widely used in a number of quantum algorithms~\cite{lloyd2016quantum,gunn2019review,gyurik2020towards, ubaru2021quantum}. Note that the operator $(\partial_k^i)^\dag$ maps $C_{k-1}(S^i) \rightarrow C_k(S^i)$. The combinatorial Laplacian is Hermitian, $(k+1)$-sparse (see Ref.~\cite{goldberg2002combinatorial}, Theorem 3.3.4),
and its entries are bounded by $N$~\cite{gyurik2020towards}. As discussed above, the harmonic representatives are not necessarily in the `obvious' form. For our example complex, we find that $\mathrm{Ker}\left(\Delta_1(S^{\mu_2})\right) = -3(AB +BC +CD -AD) + \partial_2(CDE)$, rather than the obvious choice $(AB +BC +CD -AD)$.
Some of the quantum algorithms~\cite{lloyd2016quantum,gunn2019review} also consider a `$k$-th Dirac operator' that when squared, embeds the $k$-th combinatorial Laplacian in one of its sub-blocks.\\

The above procedures can be used for computing the Betti numbers for a fixed length scale $\mu_i$ that determines the simplicial complex at scale $i$. Unfortunately, it is not obvious \textit{a priori} how the length scale should be chosen. Moreover, computing the Betti number for different values of $\mu$ does not provide information about the persistence of topological features beyond $k=0$. As discussed in Ref.~\cite{neumann2019limitations}, the Betti numbers do not uniquely identify topological features. For example, if there was one $1$-hole created at scale $\mu_1$, which is destroyed at scale $\mu_2$, at the same time as a different $1$-hole is created, then we would find $\beta_1(S^{\mu_1}) = \beta_1(S^{\mu_2}) = 1$, from which we would incorrectly infer that the topological feature created at $\mu_1$ persists at length scale $\mu_2$. In the following subsection we discuss how the methods introduced above can be extended to compute persistent Betti numbers.

\subsection{Computing persistent Betti numbers}\label{AppSubSec:PersistentBetti}

\begin{center}
\begin{tabular}{c|c}
    Symbol & Meaning \\ \hline
     $\mathrm{H}^{i,j}_k$ & The $(j-i)$-th persistent $k$-th Homology group \\
     $\beta^{i,j}_k$ & The $(j-i)$-th persistent $k$-th Betti number (i.e. \# $k$-holes present at $i$ still present at $j$) \\
     $C_{k+1}^{i,j}(S^j)$ & The subgroup of $(k+1)$-chains in $C_{k+1}(S^j)$ mapped to $k$-chains in $C_k(S^i)$ by $\partial_{k+1}^j$. \\
     $\partial_{k+1}^{i,j}$ & The boundary operator $\partial_{k+1}$ restricted to elements of $C_{k+1}^{i,j}(S^j)$. \\
     $\Delta^{i,j}_k$ & The $(j-i)$-th persistent $k$-th combinatorial Laplacian
\end{tabular}
\end{center}

In order to compute the persistent homology of the complex $S$, we consider a nested sequence of increasingly dense simplicial complexes -- known as a filtered simplicial complex
\begin{equation}
    S^0 \subset S^1 \subset ... \subset S^L = S
\end{equation}
This filtration is determined by the scale $\mu_i$; as $\mu_i$ is increased additional simplices enter the complex. If two simplices enter the complex at the same time, then we consider the lower dimensional simplex to enter first. If the two simplices are of the same dimension, an order can be assigned arbitrarily. For example, we have the following filtration:\\

\begin{center}
\begin{tikzpicture}
[align=center,node distance=3cm]

\node at (4.5, -1) {$\subset$};
\node at (10.5, -1) {$\subset$};

\node [above] at (1.5, 0.5) {$S^{0} = \{A, B, C, D, E \}$ };

\filldraw [black]  (0,0) circle (3pt);
\node [left] at (0,0) {A};

\filldraw [black]  (3,0) circle (3pt); 
\node [right] at (3,0) {B};

\filldraw [black]  (3,-2) circle (3pt); 
\node [right] at (3,-2) {C};

\filldraw [black]  (0,-2) circle (3pt);
\node [left] at (0,-2) {D};

\filldraw [black]  (1.5,-3) circle (3pt); 
\node [below] at (1.5,-3) {E};

\node [above] at (7.5, 0.5) {$S^{1} = \{A, B, C, D, E,$\\
$AD, BC, CE, DE\}$ };

\filldraw [black]  (6,0) circle (3pt);
\node [left] at (6,0) {A};

\filldraw [black]  (9,0) circle (3pt); 
\node [right] at (9,0) {B};

\filldraw [black]  (9,-2) circle (3pt); 
\node [right] at (9,-2) {C};

\filldraw [black]  (6,-2) circle (3pt);
\node [left] at (6,-2) {D};

\filldraw [black]  (7.5,-3) circle (3pt); 
\node [below] at (7.5,-3) {E};

\draw (6, 0) -- (6, -2);
\draw (9, 0) -- (9, -2);
\draw (6, -2) -- (7.5, -3);
\draw (9, -2) -- (7.5, -3);

\node [above] at (13.5, 0.5) {$S^{2} = \{A, B, C, D, E,$ \\
$AD, BC, DE, CE, AB, CD, CDE \}$};

\filldraw [black]  (12,0) circle (3pt);
\node [left] at (12,0) {A};

\filldraw [black]  (15,0) circle (3pt); 
\node [right] at (15,0) {B};

\filldraw [black]  (15,-2) circle (3pt); 
\node [right] at (15,-2) {C};

\filldraw [black]  (12,-2) circle (3pt);
\node [left] at (12,-2) {D};

\filldraw [black]  (13.5,-3) circle (3pt); 
\node [below] at (13.5,-3) {E};

\draw (12,0) --(12,-2);
\draw (15,0) --(15,-2);
\draw (12,0) -- (15,0);

\draw [black, fill=gray] (12,-2) -- (13.5,-3) -- (15,-2);

\end{tikzpicture}
\end{center}

As above, we denote the $k$-th boundary operator restricted to $k$-simplices in $S^i$ as $\partial^i_k$. The $k$-holes in complex $S^i$ may become `filled in' by new $(k+1)$-simplices present in the later complex $S^j$. There may also be new $k$-holes in $S^j$ that were not in $S^i$. We can count the holes in $S^i$ that are still present in $S^j$ by:
\begin{enumerate}
    \item Counting all the $k$-holes and $k$-boundaries in $S^i$ (i.e. the $k$-cycles $\mathrm{Ker}(\partial_k^i)$)
    \item Removing those $k$-cycles in $S^i$ that are $k$-boundaries in $S^j$ (this is given by $\mathrm{Ker}(\partial_k^i) \cap \mathrm{Im}(\partial_{k+1}^j)$~\cite{Edelsbrunner2002})
\end{enumerate}
We do not have to worry about double-counting the $k$-boundaries in $S^i$ and $S^j$, as all $k$-boundaries in $S^i$ are also $k$-boundaries in $S^j$, because all simplices in $S^i$ are also in $S^j$. Formally, we can define the $(j-i)$-th persistent $k$-th homology group as~\cite{Edelsbrunner2002}
\begin{equation}
    \mathrm{H}_k^{i,j} := \mathrm{Ker}(\partial_k^i)~/~\left(\mathrm{Ker}(\partial_k^i) \cap \mathrm{Im}(\partial_{k+1}^j) \right)
\end{equation}
The $(j-i)$-th persistent $k$-th Betti number is the rank of this group,
\begin{align}\label{Appeq:PersistentBetti}
    \beta_k^{i,j} &= \mathrm{rank}\left(\mathrm{H}_k^{i,j} \right) \\
    &= \mathrm{dim}\left( \mathrm{Ker}(\partial_k^i) \right) - \mathrm{dim}\left( \mathrm{Ker}(\partial_k^i) \cap \mathrm{Im}(\partial_{k+1}^j) \right) \nonumber.
\end{align}
As discussed in the main text, our quantum algorithm uses this expression for computing $\beta_k^{i,j}$, by estimating the size of these subgroups. \\

Recently, the task of computing persistent Betti numbers has been reformulated in terms of a persistent combinatorial Laplacian~\cite{wang2020persistent, memoli2020persistent}. Although our quantum algorithm does not proceed through this approach, the approach of Ref.~\cite{hayakawa2021quantum} does, and it is instructive to compare this approach with our own. One first defines a subgroup $C^{i,j}_{k+1}(S^j)$ containing $(k+1)$-chains in the complex at scale $S^j$ whose images under the boundary operator $\partial^j_{k+1}$ are $k$-chains at scale $S^i$. Formally,
\begin{equation}
    C^{i,j}_{k+1}(S^j) = \{c \in C_{k+1}(S^j) : \partial^j_{k+1}(c) \in C_k(S^i) \}.
\end{equation}
Intuitively, the chains in $C^{i,j}_{k+1}(S^j)$ are mapped to $k$-cycles in $C_k(S^i)$, which can be divided into the $k$-boundaries already present in $C_k(S^i)$, and a subset of the $k$-holes in $C_k(S^i)$. In other words, the additional $(k+1)$-chains in $C^{i,j}_{k+1}(S^j)$ (beyond those already present in $C_{k+1}(S^i)$) have the action of filling-in $k$-holes in $C_k(S^i)$. Defining $\partial^{i,j}_{k+1}$ as $\partial$ restricted to the chains in $C^{i,j}_{k+1}(S^j)$ it follows that~\cite{memoli2020persistent}
\begin{equation}
    \mathrm{Im}(\partial^{i,j}_{k+1}) \cong \mathrm{Ker}(\partial_k^i) \cap \mathrm{Im}(\partial_{k+1}^j),
\end{equation}
In Refs.~\cite{wang2020persistent, memoli2020persistent} the $(j-i)$-th persistent $k$-th combinatorial Laplacian is defined as
\begin{equation}\label{Eq:PersistentLaplacian}
    \Delta^{i,j}_{k} = \partial^{i,j}_{k+1} \circ (\partial^{i,j}_{k+1})^\dag + (\partial^{i}_k)^\dag \partial^{i}_k.
\end{equation}
It is shown in Ref.~\cite{memoli2020persistent} that $\beta_k^{i,j} = \mathrm{dim}\left(\mathrm{Ker}(\Delta^{i,j}_k)\right)$. We have used $\circ$ in Eq.~(\ref{Eq:PersistentLaplacian}) to highlight that the expression of $\partial^{i,j}_{k+1}$ in terms of matrices is more complex than expressing $\partial_k^i$, as discussed in Ref.~\cite{memoli2020persistent}. We elaborate more on this point in Appendix~\ref{AppSub:BettiNuances}. The reason for this complexity is that the restricted boundary operator $\partial^{i,j}_{k+1}$ must be expressed in a basis of the chains in $C^{i,j}_{k+1}(S^j)$, rather than in the basis of $S_{k+1}^j$. This change of basis adds complexity to both classical~\cite{memoli2020persistent} and quantum algorithms~\cite{hayakawa2021quantum}. A similar phenomenon underpins the additional round of QSVT required in our algorithm to convert $\Pi_{\mathrm{Ker}} \Pi_{\mathrm{Im}}$ to $\Pi_{\mathrm{Ker} \cap \mathrm{Im}}$, discussed in Sec.~\ref{Subsec:SubspaceProj}.

% -------------------------------------------------------------------------------------------------

\section{Proofs of Theorems, Corollaries and Lemmas}\label{App:TheoremProofs}

\subsection{Theorem~\ref{Theorem:SubspaceDimensionEst}: Formal version and proof}\label{AppSub:DimensionEstProof}
We will prove a formal version of Theorem~\ref{Theorem:SubspaceDimensionEst}, which we state here:

\begin{theorem}\label{Theorem:SubspaceDimensionEstFormal}
	(Normalized projector rank estimation) Given an $(\alpha,a,\delta/4)$-block-encoding $V_{\Pi}$ of $\Pi$ and an approximate state preparation unitary $V_{\psi}\colon \ket{0}^{\otimes b}\mapsto \ket{\tilde{\psi}}$ such that $\nrm{\ket{\tilde{\psi}}-\ket{\psi}}\leq \delta/4$ and $\ket{\psi}$ is a purification of a state proportional to a projector $\widetilde{\Pi}\succeq \Pi$, we can estimate the quantity $\sqrt{\frac{\mathrm{rank}(\Pi)}{\mathrm{rank}(\widetilde{\Pi})}}$ to additive error $\delta$ with success probability $\geq 1- \eta$ using $\bigO{\log(\eta^{-1})}$ incoherent repetitions of a quantum circuit, which makes $\bigO{\frac{\alpha}{\delta}}$ calls to controlled- $V_{\psi}$ and $V_{\Pi}$ (and their controlled inverses), and uses $\bigO{(a+b)\frac{\alpha}{\delta}}$ additional two-qubit gates.\footnote{In case $\alpha \gg 1$ and the cost of implementing $V_{\Pi}$ is less than $\alpha$ times more costly than implementing $V_{\psi}$ then it could be worth reducing the number of uses of $V_{\psi}$ to $\bigO{\frac{1}{\delta}}$ by pre-amplifying the block-encoding $V_{\Pi}$ by uniform eigen(singular)-value amplification~\cite{low2017HamSimUnifAmp,gilyen2019quantum}.}
\end{theorem}

\begin{proof}
	Let us call the purifying system $B$ such that $\Tr_B(\ketbra{\tilde{\psi}}{\tilde{\psi}})=\frac{\widetilde{\Pi}}{\mathrm{rank}(\widetilde{\Pi})}$. 
	We consider the following process: prepare the state $\ket{\tilde{\psi}}$, apply $V_{\Pi}\otimes I_B$ and measure the first $a$ qubits. The probability that we get the all-$0$ outcome is $\nrm{(\bra{0}^{\otimes a}\otimes I)(V_{\Pi}\otimes I_B)(\ket{0}^{\otimes a}\otimes \ket{\tilde{\psi}})}^2$. 
	We employ amplitude estimation with precision $\delta/(2\alpha)$ to estimate the square root of the above probability 
	with success probability at least $2/3$. We repeat this process $\mathcal{O}\left( \log(\eta^{-1})\right)$ times and take the median of the estimates, boosting the success probability of obtaining such a precise estimate to at least $1-\eta$. 
	This estimate, which we denote by $\hat{x}$, suffices for our purpose since 	
	\begin{align*}
		\sqrt{\frac{\mathrm{rank}(\Pi)}{\mathrm{rank}(\widetilde{\Pi})}}
		=\!\sqrt{\frac{\tr{\Pi}}{\mathrm{rank}(\widetilde{\Pi})}}\!
		=\sqrt{\tr{\!\Pi\frac{\widetilde{\Pi}}{\mathrm{rank}(\widetilde{\Pi})}\!}}\!
		=\sqrt{\bra{\psi}(\Pi\otimes\!I_B\!)\ket{\psi}}\! = \alpha \nrm{(\bra{0}^{\otimes a}\otimes I)(V_{\Pi}\otimes I_B)(\ket{0}^{\otimes a}\otimes \ket{\tilde{\psi}})},
	\end{align*}
	where the final equality only holds when $V_\Pi$ is an $(\alpha, a, 0)$ block-encoding of $\Pi$ and $\nrm{\ket{\tilde{\psi}}-\ket{\psi}} = 0$. Taking into account the errors in $V_\Pi$ and $\ket{\tilde{\psi}}$ yields
	\begin{align*}
		&\sqrt{\frac{\mathrm{rank}(\Pi)}{\mathrm{rank}(\widetilde{\Pi})}}-\nrm{(\Pi\otimes I_B)\ket{\tilde{\psi}}}\\		
		%&\sqrt{\bra{\psi}(\Pi\otimes I_B)\ket{\psi}}-\nrm{(\Pi\otimes I_B)\ket{\tilde{\psi}}}\\
		&=\nrm{(\Pi\otimes I_B)\ket{\psi}}-\nrm{(\Pi\otimes I_B)\ket{\tilde{\psi}}}\\
		&\leq \nrm{\Pi\otimes I_B}\nrm{\ket{\psi}-\ket{\tilde{\psi}}}\leq \frac{\delta}{4},
	\end{align*}	
	and
	\begin{align*}
		\nrm{(\Pi\otimes I_B)\ket{\tilde{\psi}}}-\kern-12.2mm&\kern12.2mm\alpha\nrm{(\bra{0}^{\otimes a}\otimes I)(V_{\Pi}\otimes I_B)(\ket{0}^{\otimes a}\otimes \ket{\tilde{\psi}})}\\
		\leq&\nrm{\Pi\otimes I_B-\alpha(\bra{0}^{\otimes a}\otimes I)(V_{\Pi}\otimes I_B)(\ket{0}^{\otimes a}\otimes I)}\\
		=&\nrm{\Pi-\alpha(\bra{0}^{\otimes a}\otimes I)V_{\Pi}(\ket{0}^{\otimes a}\otimes I)}\leq \frac{\delta}{4}.
	\end{align*}
	Hence, including these errors and the uncertainty in $\hat{x}$ gives:
	\begin{align*}
	    & \left| \sqrt{\frac{\mathrm{rank}(\Pi)}{\mathrm{rank}(\widetilde{\Pi})}} - \alpha \hat{x} \right| \\
	    &\leq \left| \sqrt{\frac{\mathrm{rank}(\Pi)}{\mathrm{rank}(\widetilde{\Pi})}}-\nrm{(\Pi\otimes I_B)\ket{\tilde{\psi}}} \right| +  \bigg{|} \nrm{(\Pi\otimes I_B)\ket{\tilde{\psi}}}- \alpha \hat{x} \bigg{|} \\
	    &\leq \frac{\delta}{4} + \bigg{|} \nrm{(\Pi\otimes I_B)\ket{\tilde{\psi}}} - \alpha\nrm{(\bra{0}^{\otimes a}\otimes I)(V_{\Pi}\otimes I_B)(\ket{0}^{\otimes a}\otimes \ket{\tilde{\psi}})} \bigg{|} + \alpha \bigg{|} \nrm{(\bra{0}^{\otimes a}\otimes I)(V_{\Pi}\otimes I_B)(\ket{0}^{\otimes a}\otimes \ket{\tilde{\psi}})} -  \hat{x} \bigg{|} \\
	    &\leq \frac{\delta}{4} + \frac{\delta}{4} + \alpha\frac{\delta}{2\alpha} \\
	    &= \delta
	\end{align*}
	Each elementary iteration step in amplitude estimation requires using controlled- $V_{\Pi}$, $V_{\psi}$ and their (controlled) inverses, together with reflection operators around the initial state $\ket{0}^{\otimes (a+b)}$ and the qubits $\ket{0}^{\otimes a}$ deciding the measurement outcome. This gives the claimed gate complexity bounds on the procedure, since the cost of the final Fourier transform is dominated by that of implementing the $\bigO{\alpha/\delta}$ iteration steps. 
\end{proof}

If amplitude estimation via textbook quantum phase estimation is used~\cite{brassard2002AmpAndEst}, we require $n = \mathcal{O}(\log(1/\delta))$ ancilla qubits, which can be a significant overhead if (exponentially) high precision is desired. Instead, one can use single ancilla variants of amplitude estimation, which can have better performance in practice~\cite{Rall2023amplitudeestimation}.

\subsection{Corollary~\ref{Coro:PersistentBetti}: Applying Theorem~\ref{Theorem:SubspaceDimensionEst} to estimate persistent Betti numbers} \label{AppSub:CoroPersistentProof}

As discussed in the main text, we can estimate persistent Betti numbers using the following expression
\begin{align}
    \beta_k^{i,j} = \mathrm{dim}\left( \mathrm{Ker}(\partial_k^i) \right) - \mathrm{dim}\left( \mathrm{Ker}(\partial_k^i) \cap \mathrm{Im}(\partial_{k+1}^j) \right).
\end{align}
We can estimate this quantity, normalized by the number of $k$-simplices in the complex, using the quantum algorithm of Theorem~\ref{Theorem:SubspaceDimensionEst}. Hence, by rescaling the estimation error, we obtain the following corollary of (the formal version of) Theorem~\ref{Theorem:SubspaceDimensionEst}. As $\beta_k^{i,j}$ takes integer values, we can also consider the case $\beta_k^{i,j}=0$. In this instance, we obtain a decision problem, where the goal is to distinguish between $\beta_k^{i,j}=0$ and $\beta_k^{i,j} \neq 0$ with high certainty. In this case, we can solve the problem outlined below for $\beta_k^{i,j}=1$, setting the precision sufficiently small (e.g., $\Delta=1/3$) to distinguish $\beta_k^{i,j}=0$ (running amplitude estimation and obtaining an estimate $ < 2/3$) from $\beta_k^{i,j}=1$.

\begin{corollary}\label{Coro:PersistentBetti}
To estimate $\beta_k^{i,j}$ to additive error $\Delta$ with success probability $\geq 1- \eta$, we solve two instances of normalized projector rank estimation. The first encodes the value of $|S_k^i|/\binom{N}{k+1}$, and the second encodes the value of $\beta_k^{i,j}/|S_k^i|$. Each instance uses $\mathcal{O}\left( \log(\eta^{-1})\right)$ repetitions of a quantum circuit that makes $\mathcal{O}\left( \frac{\alpha_x}{\delta_x} \right)$ calls to the state preparation unitaries $V_{\psi_x}$ (acting on $b_x$ qubits) and the $(\alpha_x, a_x, \delta_x/4)$ block-encodings $V_{\Pi_x}$ of projectors $\Pi_x$ shown in Table~\ref{tab:CorollaryInstanceData}. Each circuit also uses $\bigO{(a_x + b_x)\frac{\alpha_x}{\delta_x}}$ additional two-qubit gates. 
\end{corollary}

\begin{table*}[ht]
    \centering
    \begin{tabular}{c|c|c}
        Instance $x$  & 1 & 2 \\ \hline
        Estimates & $X:= \sqrt{|S_k^i|/\binom{N}{k+1}}$ & $Y:=\sqrt{\beta_k^{i,j}/|S_k^i|}$ \\ \hline
        $\delta_x$ & $\frac{\Delta}{4\beta_k^{i,j}} \sqrt{\frac{|S_k^i|}{\binom{N}{k+1}}}$ & $\frac{\Delta}{4\sqrt{ |S_k^i| \beta_k^{i,j} }}$ \\ \hline
        $\Pi_x$ & \makecell{$\Pi_{S_k^i}$: Projects onto \\ $k$-simplices in complex at scale $i$} & \makecell{$\Pi_{\mathrm{Ker}} - \Pi_{\mathrm{Ker} \cap \mathrm{Im} }$: Projects onto \\ $\mathrm{Ker}(\partial_k^i) - \mathrm{Ker}(\partial_k^i) \cap \mathrm{Im}(\partial_{k+1}^j)$} \\ \hline
        $\ket{\psi_x}$ & \makecell{$\ket{\psi_{S_k}}$: Uniform superposition of \\ all possible $k$-simplices} & \makecell{$\ket{\psi_{S_k^i}}$: Purification of maximally mixed state \\ over all $k$-simplices in complex at scale $i$ }
    \end{tabular}
    \caption{Details of the error parameters and block encoded operators for each of the instances of normalized projector rank estimation used to estimate the persistent Betti number to additive error $\Delta$ using the algorithm of Theorem~\ref{Theorem:SubspaceDimensionEst}.}
    \label{tab:CorollaryInstanceData}
\end{table*}

\begin{proof}
To prove this corollary, we define the following two random variables, which each instance of normalized projector rank estimation will estimate:
\begin{align}
    X &:= \sqrt{\frac{\mathrm{Rank}\left(\Pi_{S_k^i} \right)}{\binom{N}{k+1}}} \\
    Y &:= \sqrt{\frac{\mathrm{Rank}\left(\Pi_{\mathrm{Ker}} - \Pi_{\mathrm{Ker} \cap \mathrm{Im}}\right)}{\mathrm{Rank}\left(\Pi_{S_k^i} \right)}}
\end{align}
We have that $\mathrm{Rank}\left(\Pi_{\mathrm{Ker}} - \Pi_{\mathrm{Ker} \cap \mathrm{Im}}\right) = \beta_k^{i,j}$, $\mathrm{Rank}\left(\Pi_{S_k^i} \right) = |S_k^i|$. Hence
\begin{equation}
    \beta_k^{i,j} = \binom{N}{k+1} X^2 Y^2.
\end{equation}
We define the estimation errors as $\delta_\beta, \delta_x, \delta_y$, respectively. The error in $\beta_k^{i,j}$ is given by
\begin{align}
    \delta_\beta \leq 2 \binom{N}{k+1}(\delta_x XY^2 + \delta_y X^2 Y).
\end{align}
To achieve an error of $\delta_\beta = \Delta$, we set each term above equal to $\frac{\Delta}{2}$. Hence
\begin{align}
    \delta_x &= \frac{\Delta}{4 \beta_k^{i,j}} \sqrt{\frac{|S_k^i|}{\binom{N}{k+1}}} \\
    \delta_y &= \frac{\Delta}{4 \sqrt{\beta_k^{i,j} |S_k^i|}}.
\end{align}

It remains to show that the specified state preparation unitaries and projector block-encodings are sufficient to encode the random variables $X,Y$. We focus first on $X$, which encodes the number of $k$-simplices in the complex at scale $i$, $|S_k^i|$. The state used is the uniform superposition over all possible $k$-simplices
\begin{align}
    \ket{\psi_{S_k}} &= \frac{1}{\sqrt{\binom{N}{k+1}}} \sum_{s_k} \ket{s_k} \\ \nonumber
    &= \frac{1}{\sqrt{\binom{N}{k+1}}} \sum_{s_k \in S_k^i} \ket{s_k} + \frac{1}{\sqrt{\binom{N}{k+1}}} \sum_{s'_k \notin S_k^i} \ket{s'_k}.
\end{align}
A purification is not required in this case, because we can prepare the uniform superposition in the basis in which the projector $\Pi_{S_{k}^i}$ is diagonal. The projector $\Pi_{S_{k}^i}$ projects onto $k$-simplices in the complex at scale $i$
\begin{equation}
    \Pi_{S_{k}^i} = \sum_{s_k \in S_k^i} \ket{s_k}\bra{s_k}.
\end{equation}
We can verify that
\begin{align*}
    \left( \frac{1}{\sqrt{\binom{N}{k+1}}} \sum_{s_k \in S_k^i} \bra{s_k} + \frac{1}{\sqrt{\binom{N}{k+1}}} \sum_{s'_k \notin S_k^i} \bra{s'_k} \right) & \left( \sum_{\sigma_k \in S_k^i} \ket{\sigma_k}\bra{\sigma_k} \right) \left( \frac{1}{\sqrt{\binom{N}{k+1}}} \sum_{\tau_k \in S_k^i} \ket{\tau_k} + \frac{1}{\sqrt{\binom{N}{k+1}}} \sum_{\tau'_k \notin S_k^i} \ket{\tau'_k} \right) \\
    &= \frac{1}{\binom{N}{k+1}} \sum_{s_k, \sigma_k, \tau_k \in S_k^i} \delta_{s_k, \sigma_k} \delta_{\tau_k, \sigma_k} \\
    &= \frac{1}{\binom{N}{k+1}} \sum_{\sigma_k\in S_k^i} 1 \\
    &= \frac{|S_k^i|}{\binom{N}{k+1}}
\end{align*}
as required.\\

For the random variable $Y$ which encodes $\beta_k^{i,j}/|S_k^i|$, we do require the purified initial state, as this is more simple than trying to prepare a superposition over the basis in which $\Pi_{\mathrm{Ker}} - \Pi_{\mathrm{Ker} \cap \mathrm{Im}}$ is diagonal. The initial state is a purified maximally mixed state over $k$-simplices in the complex at scale $i$
\begin{equation}
    \ket{\psi_{S_k^i}} := \frac{1}{\sqrt{|S_k^i|}} \sum_{s_k \in S_k^i} \ket{s_k} \ket{s_k}.
\end{equation}
The orthonormal basis $\{\ket{s_k \in S_k^i} \}$ is unitarily equivalent to the orthonormal basis $\{\ket{\pi_i} \}$ in which $\Pi_{\mathrm{Ker}} - \Pi_{\mathrm{Ker} \cap \mathrm{Im}}$ is diagonal
\begin{equation}
    \Pi_{\mathrm{Ker}} - \Pi_{\mathrm{Ker} \cap \mathrm{Im}} := \sum_{\alpha=1}^{\beta_k^{i,j}} \ket{\pi_\alpha}\bra{\pi_\alpha}.
\end{equation}
The basis $\{\ket{\pi_i} \}$ has dimension $|S_k^i| \geq \beta_k^{i,j}$, and includes all states in the image of $\Pi_{\mathrm{Ker}} - \Pi_{\mathrm{Ker} \cap \mathrm{Im}}$, and additional orthonormal basis states that are in its kernel. We thus have that
\begin{equation}
    \ket{\psi_{S_k^i}} = \frac{1}{\sqrt{|S_k^i|}} \sum_{x=1}^{|S_k^i|} \sum_{\alpha, \beta =1}^{|S_k^i|}  U_{\alpha x} \ket{\pi_\alpha} U_{\beta x} \ket{\pi_\beta}.
\end{equation}
It can be verified that tracing out the second register yields $\widetilde{\Pi} = \frac{1}{|S_k^i|} \sum_{\alpha=1}^{|S_k^i|} \ket{\pi_\alpha}\bra{\pi_\alpha}$. As a result
\begin{align*}
    & \bra{\psi_{S_k^i}} \left(\Pi_{\mathrm{Ker}} - \Pi_{\mathrm{Ker} \cap \mathrm{Im}}\right) \otimes I_B \ket{\psi_{S_k^i}} \\
    &= \tr{\left(\frac{1}{|S_k^i|} \sum_{\alpha=1}^{|S_k^i|} \ket{\pi_\alpha}\bra{\pi_\alpha} \right) \left( \sum_{\gamma=1}^{\beta_k^{i,j}} \ket{\pi_\gamma}\bra{\pi_\gamma} \right) } \\
    &= \frac{1}{|S_k^i|}  \sum_{\alpha=1}^{|S_k^i|}  \sum_{\gamma=1}^{\beta_k^{i,j}} \delta_{\alpha, \gamma} \\
    &= \frac{\beta_k^{i,j}}{|S_k^i|}
\end{align*}
as required. 

The number of calls required to the specified block-encodings and state preparation unitaries follow from Theorem~\ref{Theorem:SubspaceDimensionEst} and the above error bounds.  
\end{proof}

%-------------------------------------------------------------------------------------------------

\section{Resource costs}\label{AppSec:ResourceEstimates}

In this section we determine the resource costs of the individual building blocks of our algorithm.

\subsection{Membership oracle construction}\label{AppSub:MembershipOracles}
As discussed in the main text, the algorithm makes regular calls to a membership oracle $O_{m_k}^i$ which determines if a simplex is present in the complex at scale $i$ (or not) based on the positions of the vertices $\{\vec{r}_\alpha\}$, and the length scale $\mu_i$ considered. The membership oracle acts as
\begin{equation}
O_{m_k}^i \ket{s_k} \ket{a} = \ket{s_k} \ket{a \oplus m(s_k)}
\end{equation}
where we have defined the membership function
\begin{equation}
m(s_k) = \begin{cases}
			1 & \text{if $s_k \in S_k^{i}$}  \\
            0 & \text{if $s_k \notin S_k^{i}$} 
\end{cases}.
\end{equation}
We discuss below how the membership oracle is implemented for both the compact and direct mappings.\\

\subsubsection{Compact mapping}\label{AppSubSub:MembershipCompact}
The membership oracle in the compact encoding checks that the vertices present in the simplex have the correct ordering, and that the distances between these vertices are less than the length scale $\mu_i$. To implement the latter step, we will load the positions of the vertices from a quantum lookup table. The quantum lookup table requires only read-only capabilities, and can be implemented using slow loads (and few ancilla qubits) or fast loads (with many ancilla qubits). These extremes have been referred to as `QROM'~\cite{babbush2018encoding} and `QRAM'~\cite{giovannetti2007QuantumRAM} elsewhere in the literature, but for clarity we do not use this terminology here, referring to the extremes as `slow' and `fast' instead. The quantum lookup table implements the following transformation
\begin{equation}
    \frac{1}{\sqrt{D}} \sum_{t=0}^{D-1} \ket{t} \ket{0}^{\otimes b} \rightarrow \frac{1}{\sqrt{D}} \sum_{t=0}^{D-1} \ket{t} \ket{X_t}
\end{equation}
where $X_t$ are $b$-bit data values to be loaded. Loading $D$ pieces of data with the slow lookup table requires $\mathcal{O}(D\log(b))$ time (with only $\mathcal{O}(D)$ non-Clifford gates~\cite{babbush2018encoding}), and $\mathcal{O}(\log(D))$ ancilla qubits. To load $D$ pieces of $b$-bit data with the fast lookup table requires $\mathcal{O}(\log(D))$ time, and $\mathcal{O}(Db)$ ancilla qubits. It is also possible to consider intermediate space-time tradeoffs~\cite{hann2021qram}. 

The quantum lookup table is used to load the position of the vertices in the compactly encoded simplex. As a result, for a set of points in $\mathbb{R}^d$ with each coordinate stored with $b_d$ bits, each piece of loaded data requires $db_d$ bits. We choose $b_d$ large enough to keep over/underflow errors introduced when calculating the squared distance between points sufficiently small. The distance between the datapoints is then calculated coherently, and used to mark any edges in the simplex that are above the length scale $\mu_i$. If such an edge exists, the simplex does not belong in the complex. The distance checking part of the membership oracle is implemented as follows:
\begin{enumerate}
    \item Load the coordinates of all vertices using the quantum lookup table.
    \item Coherently calculate the distances between all pairs of coordinates, and verify that they are less than the length-scale $\mu_i$.
    \item Check that all distances are below threshold using multi-controlled Toffoli gates, and use to flag if the simplex is in the complex.
    \item Uncompute the distances and unload the coordinates.
\end{enumerate}

We first discuss how to implement steps 2 and 3. We have the following state
\begin{equation}
    \ket{V_0} ... \ket{V_{k}} \ket{\vec{r}_0} ... \ket{\vec{r}_k}
\end{equation}
on $(k+1)\log(N+1) + (k+1)d b_d$ qubits. There are $k(k-1)/2 \in \bigO{k^2}$ distances between pairs of vertices to be checked, which we parallelize into $k$ rounds of $k$ checks. We use $\bigO{k b_d}$ ancilla qubits to store the pairwise distances between $k$ pairs of vertices, and if each distance is below the length-scale $\mu_i$. We use a $k$-controlled Toffoli to flag if any of the checked distances is larger than the length-scale. We then uncompute the pairwise distances, and repeat for a different set of pairings. We require $k$ rounds. Using the primitives shown in Table~\ref{tab:DistancePrimitives}, we can calculate the squared distance using $\mathcal{O}(\log(d)\log(b_d) + b_d)$, using $\mathcal{O}(db_d^2)$ ancilla qubits, and compare it to the length scale using $\mathcal{O}(\log(b_d))$ depth and $\mathcal{O}(b_d)$ ancilla qubits. To do $k$ such checks in parallel, we thus require $\bigO{\log(d)\log(b_d) + b_d}$ depth, and $\bigO{k d b_d^2}$ ancilla qubits. The overall complexity for this step is thus $\bigO{k(\log(d)\log(b_d) + b_d)}$ depth, and it uses $\bigO{k d b_d^2}$ ancilla qubits (including the $k$-controlled Toffoli). Once all $k$ rounds of pairwise checks have been performed, we will have $k$-bits that verify if all rounds of checks have passed. A $k$-controlled Toffoli gate on these bits flips the flag of the membership oracle.

\begin{table}[!ht]
    \centering
    \begin{tabular}{c|c|c}
        Primitive & Depth & Ancilla qubits  \\ \hline
        \makecell{Addition/Subtraction \\(two $a$-bit numbers)~\cite{draper2004logarithmic}} & $\log(a)$ & $a$ \\ \hline
        \makecell{Squaring \\ 
        ($a$-bit number)~\cite{Chakrabarti2021thresholdquantum}} & $a$ & $a^2$ \\ \hline
        \makecell{Comparison \\(two $a$-bit numbers)} & $\log(a)$ & $a$ \\ \hline
    \end{tabular}
    \caption{Costs of primitives used for coherently calculating the squared distances between vertices.}
    \label{tab:DistancePrimitives}
\end{table}

To load the state 
\begin{equation}
    \ket{V_0} ... \ket{V_{k}} \ket{\vec{r}_0} ... \ket{\vec{r}_k}
\end{equation}
on $(k+1)\log(N+1) + (k+1)d b_d$ qubits, we use the quantum lookup table. Using a fast load structure, the lookup table load can be performed with $\mathcal{O}(N k d b_d)$ gates, in $\mathcal{O}(\log(N))$ depth, and $\mathcal{O}(N k d b_d)$ ancilla qubits. If instead we opted to use a slow load structure, there are two options available. We could either use $\bigO{k\log(N)}$ ancilla qubits to perform the lookup table load of each coordinate in parallel, with a cost of $\mathcal{O}(N k d b_d)$ gates~\cite{babbush2018encoding} and $\mathcal{O}(N \log(d b_d))$ depth~\cite{low2018trading}. A more qubit efficient approach is to loop through each vertex value ($i=1$ to $N$), and on each iteration:
\begin{itemize}
    \item XOR the value of $i$ with each of the $(k+1)$ vertex registers.
    \item Using a sequence of $d b_d$ many $\log(N+1)$-controlled Toffoli gates (anti-controlled on the vertex register), write the coordinate $\vec{r}_i$ into the corresponding coordinate register. Do this in parallel for each vertex/coordinate.
    \item XOR the value of $i$ with each of the $(k+1)$ vertex registers.
\end{itemize}
This approach uses no ancilla qubits (dirty qubits can be used to implement the multicontrolled Toffoli gates) and has a gate complexity of $\bigO{Nk\log(N) b_d d}$. This complexity can be further reduced to $\bigO{N k \log\log(N+1) b_d d}$ by observing that on average between $i$ and $i+1$ only two bits change, and so
the XOR only needs to be done with these differing bits. This can also speed up the Toffoli implementation (which acts on $\log(N+1)$ qubits) by computing the OR using a tree and then using a CNOT instead of a Toffoli. The advantage is that in the OR tree updating an input bit requires only $\log\log(N+1)$ gates as opposed to the multicontrol Toffoli which requires $\bigO{\log(N)}$ gates to implement. We do not explicitly use this approach in this work, as it is necessary to reformulate other subroutines to use fewer ancilla qubits, in order to make the space saving worthwhile. \\

To verify that the vertices have the correct ordering, we can introduce $\mathcal{O}(k)$ ancilla qubits. We use these to perform comparisons between $V_i$ and $V_{i+1}$ for $i$ even. We perform all $\lceil (k+1)/2 \rceil$ comparisons in parallel. We then do the same for $V_i$ and $V_{i+1}$ for $i$ odd. This requires a circuit depth of $\mathcal{O}(\log\log(N))$, and $\mathcal{O}(k\log(N))$ ancilla qubits. We can use a single $\mathcal{O}(k)$-controlled Toffoli gate to flag if the vertices have the correct ordering. This requires a circuit depth of $\mathcal{O}(k)$. As the order checking can be performed after the distance checking, we can use the same ancilla qubits for both protocols. The overall complexity for the membership oracle, factoring in both steps, is shown in Table~\ref{tab:App:MembershipOracleCosts}.

\begin{table}[!ht]
    \centering
    \begin{tabular}{c|c|c}
         Data loading & Depth & Ancilla qubits  \\ \hline
         Slow & $\bigO{N \log(db_d) + k(\log(d)\log(b_d) + b_d)} $ & $\bigO{k\log(N) + k d b_d^2}$ \\
         Fast & $\bigO{\log(N) + k(\log(d)\log(b_d) + b_d)} $ & $\bigO{N k d b_d + k d b_d^2}$ 
    \end{tabular}
    \caption{Asymptotic complexities of implementing the compact mapped membership oracle using either slow or fast quantum lookup tables to load the positions of the vertices.}
    \label{tab:App:MembershipOracleCosts}
\end{table}

\subsubsection{Direct mapping}\label{AppSubSub:MembershipDirect}
Our implementation of the membership oracle for the direct mapping follows the approach in Ref.~\cite{metwalli2020cliquefinding} for finding cliques in graphs. We classically store a list of edges in the complex, and then iterate through this list, checking if each edge is present in the given $k$-simplex. This can be checked using a Toffoli gate controlled on the vertices of the edge, targeting an ancilla qubit. We then increment a counter, controlled on the value of the ancilla qubit, and then uncompute the ancilla qubit value. After all edges have been sequentially checked, we verify that the counter register contains the correct number of edges for a $k$-simplex ($0.5(k+1)k$), and use this to flag if the simplex is in the complex or not, and uncompute the counter register. In the worst case there are $0.5N(N-1)$ edges in the complex. By introducing $\mathcal{O}(N)$ ancilla qubits, we can check $N/2$ edges in parallel, and add their values (in a tree-like structure) to increment the counter. This reduces the gate depth of the circuit to $\mathcal{O}(N \log(N))$. For edge-sparse complexes, the complexity may be further reduced. A slightly more optimized implementation of the direct mapped membership oracle is presented in Ref.~\cite{berry2022quantifying}.

\subsection{Boundary operator implementation}\label{AppSub:BoundaryOp}
In this section we determine the resource costs to block-encode $\partial_k^i$ and $\partial_{k+1}^j$. A summary of our results is presented in Table~\ref{tab:App:BoundaryOperatorCosts}. Our compact mapped approach proceeds by swapping the vertex to be deleted into the final position of the register, and then setting it to $\ket{\bar{0}}$ to represent the absence of a vertex. Our direct mapped approach relies on a previously observed correspondence between the boundary operator and second quantized fermionic operators~\cite{cade2021complexity, akhalwaya2022efficient, kerenidis2022quantum, ubaru2021quantum}, and efficient circuits for implementing such operators~\cite{akhalwaya2022efficient, kerenidis2022quantum, wan2021exponentially}. Both approaches make calls to the membership oracle, to ensure that they are restricted to simplices of the desired dimension that are present at the specified length scale. 

We restate the definition of a block-encoding here for convenience:
\begin{definition}\label{def:blockDefRepeat}
	We say that a unitary matrix $U$ is an $(\alpha,(a,b),\eps)$-block-encoding of the matrix $A$ if
	\begin{equation}\label{Appeq:blockDef}
		\nrm{A-\alpha (\bra{0}^{\otimes a}\otimes I)U(\ket{0}^{\otimes b}\otimes I)}\leq \eps.
	\end{equation}
	We also use this notion in cases where $(\bra{0}^{\otimes a}\otimes I)U(\bra{0}^{\otimes b}\otimes I)$ acts on larger subspaces than $A$ itself, in which in \Cref{Appeq:blockDef} case by $A$ we mean its trivial embedding\footnote{By the trivial embedding we mean extending $A$ with $0$ matrix elements, so that the non-zero singular values and the corresponding singular-vector pairs are unchanged.} into these larger subspaces. Setting $c=\max\{a,b\}$ we might call it an $(\alpha,c,\eps)$-block-encoding or $c$-qubit block-encoding for brevity. 
\end{definition}

\begin{table*}[t]
    \centering
\begin{tabular}{c|c|c}
            & $(\alpha, m, \epsilon)$ & Gate depth  \\ \hline
    Compact & $(\sqrt{(N+1)(k+1)}, \log(N+1) + \log(k+1) + 1, 0)$ & $\mathcal{O}(k\log\log(N+1))$ \& $1\times O_{m_k^i}$ \\ \hline
    Direct & $(\sqrt{N}, 2, 0)$ & $\mathcal{O}(\log(N))$ \& $1\times O_{m_k^i}, O_{m_{k-1}^i}$ 
\end{tabular}
    \caption{Parameters for the block-encoding $V_{\partial_k^i}$ of $\partial_k^i$ in the compact and direct mappings. The parameters for $V_{\partial_{k+1}^j}$ follow from replacing $k \rightarrow k+1$ and $i \rightarrow j$. The implementation of the membership oracles differ from the compact and direct encodings, as discussed in Sec.~\ref{Subsec:MembershipOracle}.}
    \label{tab:App:BoundaryOperatorCosts}
\end{table*}

\subsubsection{Compact mapping}\label{AppSubSub:BoundaryCompact}

To simplify the analysis, we choose $N$ to be expressible as $2^m - 1$ for integer $m$. We reserve the state $\ket{\bar{0}}$ to express an absence of a vertex. For example, if we had vertices $A, B, C$, we would map these as $A = \ket{01}, B=\ket{10}, C=\ket{11}$.

We focus first on the $k$-th boundary operator, as the $(k+1)$-th will follow as a simple modification of the former. Under the compact mapping, we represent the $k$-th boundary operator at scale $i$ as
\begin{equation}
    \partial_{k}^i = \sum_{\substack{V_0 < ... < V_{k} \\ s_k \in S_k^i}} \sum_{j=0}^{k} (-1)^j \bigg{(} \ket{V_0} ... \ket{V_{j-1}} \ket{V_{j+1}} ... \ket{V_{k}} \bigg{)} \bigg{(} \bra{V_0} ... \bra{V_j} ... \bra{V_{k}} \bigg{)}
\end{equation}
where the ket (output) registers act on $k\log(N+1)$ qubits and the bra (input) act on $(k+1)\log(N+1)$ qubits. This operator projects from a $k$-simplex in the complex at scale $i$ onto the resulting $(k-1)$-chain.

We will show how to implement an $(\alpha, (\log(N+1) + \log(k+1) + 1 , \log(k+1) + 1), 0)$ block-encoding $V_{\partial_{k}^i}$ such that
\begin{equation}
    \left( I_{k\log(N+1)} \otimes \bra{0}^{\otimes \log(N+1)} \otimes \bra{0}  \otimes \bra{0}^{\otimes \log(k+1)}  \right) V_{\partial_{k}^i} \left( I_{(k+1)\log(N+1)} \otimes \ket{0} \otimes \ket{0}^{\otimes \log(k+1)}  \right) = \frac{\partial_{k}^i}{\alpha}
\end{equation}
where $\alpha$ is a normalization factor determined below. Note that this ordering of the qubits means that we don't have the usual interpretation of the matrix being block-encoded in the upper-left block of $V$. This is not an issue, as long as we keep track of which ancilla qubits are being used for the block-encoding. We first note that we can express
\begin{align*}
    & I_{(k+1)\log(N+1)} \otimes \ket{0} \\
    &= \sum_{V_0, ... V_k} \ket{V_0} ... \ket{V_k} \bra{V_0} ... \bra{V_k} \otimes \ket{0} \\
    &= \left( \sum_{\substack{V_0 < ... < V_k \\ s_k \in S_k^i}} \ket{V_0} ... \ket{V_k} \bra{V_0} ... \bra{V_k} + \sum \ket{V_0} ... \ket{V_k} \bra{V_0} ... \bra{V_k} \right)  \otimes \ket{0}
\end{align*}
where the latter sum accounts for $k$-simplices with incorrectly ordered vertices, or that are not present in the complex at scale $i$. Hence, applying the membership oracle $O_{m_k^i}$ and an $X$ gate to this state, we obtain
\begin{align*}
    & X O_{m_k^i} I_{(k+1)\log(N+1)} \otimes \ket{0} \\
    &= \left( \sum_{\substack{V_0 < ... < V_k \\ s_k \in S_k^i}} \ket{V_0} ... \ket{V_k} \bra{V_0} ... \bra{V_k} \right) \otimes \ket{0} + A \otimes \ket{1}
\end{align*}
where we have used the operator $A$ to denote the states that do not represent simplices in the complex at scale $i$. The subsequent operations will not act on the ancilla qubit, and so we can use this operation to restrict the block-encoding to simplices present in the complex. As a result, to show the implementation of $V_{\partial_{k}^i}$, we consider the following operations applied to the (correctly ordered, present in the complex) state $\ket{V_0} ... \ket{V_j} ... \ket{V_{k}} \otimes \ket{0}^{\otimes \log(k+1)}$, and step through the circuit. \\

We first prepare a uniform superposition over the ancilla register, which can be done with $\log(k+1)$ Hadamard gates
\begin{align}
    \frac{1}{\sqrt{k+1}} \sum_{j=0}^{k} \ket{V_0} ...  \ket{V_{k}} \otimes \ket{j}.
\end{align}
Using a work ancilla qubit, we perform a $j$-conditioned permutation that shuffles the $j$-th state into the final register
\begin{align}
    \frac{1}{\sqrt{k+1}} \sum_{j=0}^{k} \ket{V_0} ... \ket{V_{j-1}} \ket{V_{j+1}} ... \ket{V_{k}} \ket{V_j} \otimes \ket{j}.
\end{align}
This circuit requires $k$ controlled-SWAP gates, and $2k$-comparisons of the $\log(k+1)$-bit number $j$. We apply a $Z$ gate to the final bit of $\ket{j}$
\begin{align}
    \frac{1}{\sqrt{k+1}} \sum_{j=0}^{k} (-1)^j \ket{V_0} ... \ket{V_{j-1}} \ket{V_{j+1}} ... \ket{V_{k}} \ket{V_j} \otimes \ket{j}.
\end{align}
We then compute the value of $j$ into another work register, by coherently checking the value of the final register against the values of each neighbouring pairs of registers (e.g. we check $\ket{V_j}$ against $\ket{V_{l+1}}, \ket{V_{l}}$, and so on). Because the registers store simplices in ascending order, we seek the only pair $\ket{V_l}, \ket{V_{l+1}}$ such that $V_l < V_j < V_{l+1}$. We require $\mathcal{O}(k)$ comparisons of $\log(N+1)$ qubit registers, which takes $\mathcal{O}(k\log\log(N+1))$ depth, and $\log(N+1)$ ancilla qubits. This enables us to uniquely identify $j$, which lets us uncompute the initial superposition over $\ket{j}$. We then uncompute the 2nd register $\ket{j}$ using inverse of the inequality checking outlined above. This yields the state
\begin{align}
    \frac{1}{\sqrt{k+1}} \sum_{j=0}^{k} (-1)^j \ket{V_0} ... \ket{V_{j-1}} \ket{V_{j+1}} ... \ket{V_{k}} \ket{V_j} \otimes \ket{0}^{\otimes \log(k+1)}.
\end{align}
We then apply $\log(N+1)$ Hadamard gates to the $\ket{V_j}$ register, resulting in the state
\begin{align}
    \frac{1}{\sqrt{(N+1)(k+1)}} \sum_{j=0}^{k} (-1)^j \ket{V_0} ... \ket{V_{j-1}} \ket{V_{j+1}} ... \ket{V_{k}} \left( \ket{\bar{0}} + \sum_{x=1}^{N} \pm \ket{x} \right) \otimes \ket{0}^{\otimes \log(k+1)}.
\end{align}
We define the above sequence of operations as the circuit $V_{\partial_{k}^i}$. We can then consider the matrix element prescribed by the block-encoding definition
\begin{align*}
    &\left( I_{k\log(N+1)} \otimes \bra{0}^{\otimes \log(N+1)} \otimes \bra{0}  \otimes \bra{0}^{\otimes \log(k+1)}  \right) V_{\partial_{k}^i} \left( I_{(k+1)\log(N+1)} \otimes \ket{0} \otimes \ket{0}^{\otimes \log(k+1)}  \right) \\
    = & \frac{1}{\sqrt{(N+1)(k+1)}} \sum_{\substack{V_0 < .. V_k \\ s_k \in S_k^i}} \sum_{j=0}^{k} (-1)^j \left( I_{k\log(N+1)} \otimes \bra{0}^{\otimes \log(N+1)} \otimes \bra{0}  \otimes \bra{0}^{\otimes \log(k+1)}  \right)\\
    &\times \left( \ket{V_0} ... \ket{V_{j-1}} \ket{V_{j+1}} ... \ket{V_{k}} \left( \ket{\bar{0}} + \sum_{x=1}^{N} \pm \ket{x} \right) \bra{V_0} ... \bra{V_k} \otimes \ket{0} \otimes \ket{0}^{\otimes \log(k+1)} + A \otimes \ket{1} \otimes \ket{\phi} \right) \\
    =&  \frac{1}{\sqrt{(N+1)(k+1)}} \sum_{\substack{V_0 < ... < V_{k} \\ s_k \in S_k^i}} \sum_{j=0}^{k} (-1)^j  \ket{V_0} ... \ket{V_{j-1}} \ket{V_{j+1}} ... \ket{V_{k}} \bra{V_0} ... \bra{V_j} ... \bra{V_{k}}
\end{align*}
which shows that $V_{\partial_{k}^i}$ is a $\left(\sqrt{(N+1)(k+1)}, (\log(N+1) + \log(k+1) + 1 , \log(k+1) + 1\right), 0)$ block-encoding of $\partial_{k}^i$. Implementing this block-encoding requires a gate-depth of $\mathcal{O}(k\log\log(N+1))$ plus one call to the membership oracle $O_{m_k^i}$. \\

To implement a block-encoding of $\partial_{k+1}^j$
\begin{equation}
    \left( I_{(k+1)\log(N+1)} \otimes \bra{0}^{\otimes \log(N+1)} \otimes \bra{0}  \otimes \bra{0}^{\otimes \log(k+2)}  \right) V_{\partial_{k+1}^j} \left( I_{(k+2)\log(N+1)} \otimes \ket{0} \otimes \ket{0}^{\otimes \log(k+2)}  \right) = \frac{\partial_{k+1}^j}{\alpha'}
\end{equation}
we reuse the analysis above to show that $V_{\partial_{k+1}^j}$ is a $\left( \sqrt{(N+1)(k+2)}, (\log(N+1) + \log(k+2) + 1 , \log(k+2) + 1), 0 \right)$ block-encoding of $\partial_{k+1}^j$, which uses a gate-depth of $\mathcal{O}(k\log\log(N+1))$ and one call to the membership oracle $O_{m_{k+1}^j}$.

\subsubsection{Direct mapping}\label{AppSubSub:BoundaryDirect}
To implement the block-encoding of the boundary operators in the direct mapping, we exploit a previously observed link to second quantized fermionic creation and annihilation operators~\cite{cade2021complexity,akhalwaya2022efficient, ubaru2021quantum, kerenidis2022quantum}. Specifically, the unrestricted boundary operator $\partial$ (which acts on all possible simplices, of all dimensions) can be expressed as
\begin{equation}
    \partial + \partial^\dag = \sum_{i=0}^{N-1} a_i + a_i^\dag
\end{equation}
where $a_i^\dag, a_i$ are second quantized fermionic creation and annihilation operators, respectively. We can take two routes to implementing a block-encoding of this operator. The first, as outlined in Ref.~\cite{kerenidis2022quantum}, uses a unitary circuit with no additional ancilla qubits to implement a scaled version of this operator. This circuit has a depth of $\mathcal{O}(\log(N))$ (composed of $2(N-1)$ single-qubit rotations), and introduces a scaling factor of $\alpha = \sqrt{N}$. Alternatively, one could use techniques for implementing block-encodings of fermionic operators introduced in the literature on quantum chemistry. For example, the approach of Ref.~\cite{wan2021exponentially} would use $\log(N)$ ancilla qubits, and a T gate depth of $24 \lceil \mathrm{log}(N) \rceil$ (with a T count of $24(N-1)$) to implement a block-encoding of $\partial + \partial^\dag$ with a scaling factor of $\alpha=N$. As a result of the reduced ancilla count and improved scaling factor, we expect the first approach to have better performance. \\

We will implement a $(\sqrt{N}, 2, 0)$ block-encoding of $\partial_{k}^i$ 
\begin{equation}
    \left( I_{N} \otimes \bra{00} \right) V_{\partial_{k}^i} \left( I_{N} \otimes \ket{00} \right) = \frac{\partial_{k}^i}{\sqrt{N}}
\end{equation}
with 
\begin{equation}
    \partial_{k}^i = \sum_{s_k \in S_k^i} \sum_{j=0}^k (-1)^j \ket{s_{k}(\hat{v}_j)} \bra{s_k}
\end{equation}
where $s_{k}(\hat{v}_j)$ denotes the Hamming-weight $k+1$ simplex $s_k$ with the $j$th 1 value set to 0. We first observe that
\begin{align*}
    & X_{a_1} O_{m_k^i} \left( I_N \otimes \ket{0}_{a_1} \right) \\
    &= X_{a_1} O_{m_k^i} \left( \sum_{s_k \in S_k^i} \ket{s_k}\bra{s_k} + \sum_{x \notin S_k^i} \ket{x}\bra{x} \right) \otimes \ket{0}_{a_1}  \\
    &= \left( \sum_{s_k \in S_k^i} \ket{s_k}\bra{s_k} \right) \otimes \ket{0}_{a_1} +  \left( \sum_{x \notin S_k^i} \ket{x}\bra{x} \right) \otimes \ket{1}_{a_1}.
\end{align*}
Similarly
\begin{align*}
    & \left( I_N \otimes \bra{0}_{a_2} \right) O_{m_{k-1}^i} X_{a_2} \\
    &= \left( \sum_{s_{k-1} \in S_{k-1}^i} \ket{s_{k-1}}\bra{s_{k-1}} + \sum_{x \notin S_{k-1}^i} \ket{x}\bra{x} \right) \otimes \bra{0}_{a_2} O_{m_{k-1}^i} X_{a_2} \\
    &= \left( \sum_{s_{k-1} \in S_{k-1}^i} \ket{s_{k-1}}\bra{s_{k-1}} \right) \otimes \bra{0}_{a_2} +  \left( \sum_{x \notin S_{k-1}^i} \ket{x}\bra{x} \right) \otimes \bra{1}_{a_2}.
\end{align*}
Sandwiching the circuit from Ref.~\cite{kerenidis2022quantum} between these operations provides the resulting block-encoding of $\partial_k^i$. This approach has a gate depth of $\log(N)$ plus one call each to the membership oracles $O_{m_k^i}$ and $O_{m_{k-1}^i}$. We implement an equivalent block-encoding of $\partial_{k+1}^j$. In Table.~\ref{tab:App:DirectBoundary} we compare the costs of the approach outlined above to the method used to block-encode $\partial_k^i$ in Ref.~\cite{hayakawa2021quantum}.

\begin{table}[!ht]
    \centering
    \begin{tabular}{c|c|c}
            & Here~\cite{kerenidis2022quantum} & Ref~\cite{hayakawa2021quantum}  \\ \hline
         Ancilla qubits & 2 & $\sim \left( \lceil \mathrm{log}(N) \rceil + 2\lceil \mathrm{log}(k) \rceil + 5 \right)$ \\
         $\alpha$ & $\sqrt{N}$ & $Nk$ \\
         Non-Clifford gates & $2(N-1)$ single-qubit rotations & $\mathcal{O}(N^2)$ Toffoli gates  \\
        Gate depth & $\mathcal{O}(\log(N))$ & $\Omega(N)$ \\
        Calls to membership oracle & 2 & 1 \\
    \end{tabular}
    \caption{A comparison of the approach outlined here for implementing a block-encoding of $\partial_k^i$ in the direct mapping against the approach of Ref.~\cite{hayakawa2021quantum}.}
    \label{tab:App:DirectBoundary}
\end{table}

\subsection{Block encodings of subspace projectors}\label{AppSub:SubspaceProjectors}
In this section we discuss how to implement the following block encodings:
\begin{itemize}
    \item $V_{\Pi_{\mathrm{Ker}}}$: A block encoding of $\Pi_{\mathrm{Ker}}$, the projector onto the kernel of $\partial_k^i$.
    \item $V_{\Pi_{\mathrm{Im}}}$: A block encoding of $\Pi_{\mathrm{Im}}$, the projector onto the image of $\partial_{k+1}^j$.
    \item $V_{\Pi_{\mathrm{Ker}(\partial_k^i) \cap \mathrm{Im}(\partial_{k+1}^j)}}$: A block encoding of $\Pi_{\mathrm{Ker} \cap \mathrm{Im} } $, the projector onto $\mathrm{Ker}(\partial_k^i) \cap \mathrm{Im}(\partial_{k+1}^j)$.
    \item $V_{\Pi_\beta}$: A block encoding of $\Pi_{\mathrm{Ker}} - \Pi_{\mathrm{Ker} \cap \mathrm{Im} } $, the projector onto $\mathrm{Ker}(\partial_k^i) - \mathrm{Ker}(\partial_k^i) \cap \mathrm{Im}(\partial_{k+1}^j)$.
\end{itemize}

These will be constructed from QSVT circuits applied to the block encodings of $\partial_k^i$ and $\partial_{k+1}^j$. We use the same approach for both the compact and direct mappings.

\subsubsection{Projector onto kernel}\label{AppSubSub:KernelProjection}
The QSVT circuit for implementing $V_{\Pi_{\mathrm{Ker}}}$ uses singular value threshold projectors~\cite{gilyen2019quantum}, i.e., it (approximately) transforms the block-encoding of $\partial_{k}^i$ into a block-encoding of the projector $\Pi_{\mathrm{Ker}}$ by mapping $0$ singular values to $1$ and non-zero singular values (that are at least $\Lambda_{\partial_k^i}$) to approximately $0$. A technical detail is that we need to use a polynomial of even degree for singular value transformation~\cite{gilyen2019quantum} ensuring that the left and right singular vectors coincide, yielding an (approximate) orthonormal projector. Suitable definite-parity polynomials can be found in Refs.~\cite{gilyen2019quantum, lin2019OptimalQEigenstateFiltering, lin2020near, martyn2021QSVT}. We need to filter out singular values above a threshold $\Lambda_{\partial_k^i}$. Using the results of \cite[Theorem 31]{gilyen2019quantum} given an $(\alpha, m, 0)$ block encoding of an operator $A$, we can implement $V_{\Pi_{\mathrm{Ker}(A)}}$, which is a $(1, m+1, \epsilon_k)$ block encoding of $\Pi_{\mathrm{Ker}(A)}$. We require $\mathcal{O}\left( \frac{\alpha}{\Lambda} \log(\epsilon_k^{-1}) \right)$ calls to an $(\alpha, m, 0)$ block encoding of $A$, and its inverse (where $\Lambda$ is the gap between the zero and lowest non-zero singular values).

As a result, in the compact mapping we can implement a $(1, \log(N+1) + \log(k+1)+2, \epsilon_k)$ block encoding of $\Pi_{\mathrm{Ker}}$ using $\mathcal{O}\left( \frac{\sqrt{(N+1)(k+1)}}{\Lambda_{\partial_k^i}} \log(\epsilon_k^{-1}) \right)$ calls to $V_{\partial_k^i}$, and its inverse. We also require $\mathcal{O}\left( \frac{\sqrt{(N+1)(k+1)}}{\Lambda_{\partial_k^i}} \log(\epsilon_k^{-1}) \right)$ calls to a $(\log(N+1) + \log(k+1)+1)$-controlled-Toffoli gate (and the same number of calls to a $(\log(k+1)+1)$-controlled-Toffoli gate), which can be implemented in $\mathcal{O}(\log(N+1) + \log(k+1)+1)$ depth.

For the direct mapping, we can implement a $(1, 3, \epsilon_k)$ block encoding of $\Pi_{\mathrm{Ker}}$ using $\mathcal{O}\left( \frac{\sqrt{N}}{\Lambda_{\partial_k^i}} \log(\epsilon_k^{-1}) \right)$ calls to $V_{\partial_k^i}$, and its inverse, as well as a similar number of calls to a Toffoli gate. 

Note that the block-encoding of $\partial_k^i$ makes calls to the membership oracle, which dominates the cost of the QSVT circuit, compared to the multicontrol Toffoli gates. As a result, we ignore the costs of the multicontrol Toffoli gates going forwards.

\subsubsection{Projector onto image}\label{AppSubSub:ImageProjection}
To implement a projector onto the image of a matrix $A$, consider its singular value decomposition $A = \sum_i \sigma_i \ket{L_i} \bra{R_i}$. A projector onto the image of $A$ is given by $\Pi_{\mathrm{Im}(A)} = \sum_{j; \sigma_j > 0} \ket{L_j} \bra{L_j}$. We can implement a block encoding of this projector by using QSVT to apply an even polynomial approximating a threshold function to the singular values of the encoded operator $A^\dag$ (which can be implemented using $V_A^\dag$, when $V_A$ is a block encoding of $A$). This ensures that all singular values above the threshold $\Lambda/\alpha$ are set to 1, and all values below are set to zero. The use of an even polynomial ensures that we only keep the left singular vectors of $A$~\cite{gilyen2019quantum}. We can implement $V_{\Pi_{\mathrm{Im}(A)}}$, which is a $(1, m+1, \epsilon_i)$ block encoding of $\Pi_{\mathrm{Im}(A)}$ using $\mathcal{O}\left( \frac{\alpha}{\Lambda} \log(\epsilon_i^{-1}) \right)$ calls to an $(\alpha, m, 0)$ block encoding of $A$, and its inverse, and the same number of calls to $m$-controlled-Toffoli gates.

As a result, in the compact mapping we can implement a $(1, \log(N+1) + \log(k+2)+2, \epsilon_i)$ block encoding of $\Pi_{\mathrm{Im}}$ using $\mathcal{O}\left( \frac{\sqrt{(N+1)(k+2)}}{\Lambda_{\partial_{k+1}^j}} \log(\epsilon_i^{-1}) \right)$ calls to $V_{\partial_{k+1}^j}$, and its inverse, and the same number of calls to $(\log(N+1) + \log(k+2)+1)$- and $( \log(k+2)+1)$-controlled Toffoli gates. 

For the direct mapping, we can implement a $(1, 3, \epsilon_i)$ block encoding of $\Pi_{\mathrm{Im}}$ using $\mathcal{O}\left( \frac{\sqrt{N}}{\Lambda_{\partial_{k+1}^j}} \log(\epsilon_i^{-1}) \right)$ calls to $V_{\partial_{k+1}^j}$, and its inverse, and the same number of calls to additional Toffoli gates.

As described above, we ignore the costs of the multicontrol Toffoli gates going forwards.

\subsubsection{Projector onto kernel and image }\label{AppSubSub:KernelImageProjection}
Given the above methods for implementing $V_{\Pi_{\mathrm{Ker}}}$ and $V_{\Pi_{\mathrm{Im}}}$, we can implement $V_{\Pi_{\mathrm{Ker}(\partial_k^i) \cap \mathrm{Im}(\partial_{k+1}^j)}}$. We cannot simply take a product of the two projectors, as in general they do not commute with each other (because of new simplices that enter the complex at scale $j$). We can implement $V_{\Pi_{\mathrm{Ker}(\partial_k^i) \cap \mathrm{Im}(\partial_{k+1}^j)}}$ by first using a product of block encodings~\cite{gilyen2019quantum} to obtain a product of the projectors $\Pi_{\mathrm{Ker}} \cdot \Pi_{\mathrm{Im}}$. We can use Ref.~\cite[Lemma 53]{gilyen2019quantum}, which states that: \\
\textit{If $U$ is an $(\alpha, a, \delta)$ block encoding of $A$ and $V$ is a $(\beta, b, \epsilon)$ block encoding of $B$, then $(I_b \otimes U)(I_a \otimes V)$ is an $(\alpha \beta, a+b, \alpha\epsilon + \beta\delta)$ block encoding of $AB$.}

As a result, in the compact mapping we can implement a
\begin{equation}
\left( 1, 2\log(N+1) + \log(k+1) + \log(k+2) + 4,  \epsilon_k + \epsilon_i \right)
\end{equation}
block encoding of $\Pi_{\mathrm{Ker}} \Pi_{\mathrm{Im}}$ using one call each to of $V_{\Pi_{\mathrm{Ker}}}$ and $V_{\Pi_{\mathrm{Im}}}$.

In the direct mapping we can implement a
\begin{equation}
\left( 1, 6,  \epsilon_k + \epsilon_i \right)
\end{equation}
block encoding of $\Pi_{\mathrm{Ker}} \Pi_{\mathrm{Im}}$ using one call each to of $V_{\Pi_{\mathrm{Ker}}}$ and $V_{\Pi_{\mathrm{Im}}}$.

We then apply QSVT to this block encoding. We apply an even function that sends all non-unity singular values to zero. In reality, we can only send singular values below a threshold $(1 - \Lambda_{\Pi\Pi})$ to zero. We require $\mathcal{O}\left( \Lambda_{\Pi\Pi}^{-0.5} \log\left(\epsilon_{p}^{-1} \right) \right)$ (see \cite[Lemma 35]{gilyen2018QSingValTransfArXiv} for an explanation of the quadratic improvement) calls to $V_{\Pi_{\mathrm{Ker}}}$ and $V_{\Pi_{\mathrm{Im}}}$ (and their inverses) to implement $V_{\Pi_{\mathrm{Ker}(\partial_k^i) \cap \mathrm{Im}(\partial_{k+1}^j)}}$ in this way. The error in the above block encoding is scaled according to the robustness of QSVT \cite[Lemma 22]{gilyen2019quantum}, by $\mathcal{O}\left( 4\Lambda_{\Pi\Pi}^{-0.5} \log\left(\epsilon_{p}^{-1} \right) \sqrt{\epsilon_k + \epsilon_i} \right)$. The QSVT circuit requires $\mathcal{O}\left( \Lambda_{\Pi\Pi}^{-0.5} \log\left(\epsilon_{p}^{-1} \right) \right)$ calls to either $(2\log(N+1) + \log(k+1) + \log(k+2) + 4)$-controlled-Toffoli gates (compact) or 6-controlled-Toffoli gates (direct). We neglect the cost of these multicontrol Toffoli gates going forwards, as they are dominated by the cost of $V_{\Pi_{\mathrm{Ker}}}$ and $V_{\Pi_{\mathrm{Im}}}$.  

We can prepare the block encoding $V_{\Pi_\beta}$ of $\Pi_{\mathrm{Ker}(\partial_k^i)} - \Pi_{\mathrm{Ker}(\partial_k^i) \cap \mathrm{Im}(\partial_{k+1}^j)}$ using a linear combination of the block encodings $V_{\Pi_{\mathrm{Ker}}}$ and $V_{\Pi_{\mathrm{Ker}(\partial_k^i) \cap \mathrm{Im}(\partial_{k+1}^j)}}$~\cite[Lemma 52]{gilyen2018QSingValTransfArXiv}. This adds a control to each of the operations $V_{\Pi_{\mathrm{Ker}}}$ and $V_{\Pi_{\mathrm{Ker}(\partial_k^i) \cap \mathrm{Im}(\partial_{k+1}^j)}}$, which can be achieved by adding a control qubit to each of the QSVT rotations in their constituent circuits. The overall cost to implement  $V_{\Pi_\beta}$ of $\Pi_{\mathrm{Ker}(\partial_k^i)} - \Pi_{\mathrm{Ker}(\partial_k^i) \cap \mathrm{Im}(\partial_{k+1}^j)}$ is shown in Table~\ref{Apptab:ProjectorEncodingsKerIm}.

\begin{table*}[t]
    \centering
    \begin{tabular}{|c|c|c|} \hline
          & $(\alpha, m, \epsilon)$ & Costs \\ \hline
         Compact & \makecell{$\left(2, 2\log(N+1) + \log(k+1) + \log(k+2) + 6, \right. $ \\ $ \left. \frac{8}{\Lambda_{\Pi\Pi}^{0.5}} \log\left( \epsilon_{p}^{-1} \right) \sqrt{\epsilon_k + \epsilon_i} + \epsilon_{p} + \epsilon_{k} \right)$} &
         \multirow{2}{*}{\makecell{$\mathcal{O}\left( \frac{1}{\Lambda_{\Pi\Pi}^{0.5}} \log\left(\frac{1}{\epsilon_{p}}\right) \right) \times$ \\ $V_{\Pi_{\mathrm{Ker}}}$, $V_{\Pi_{\mathrm{Ker}}}^\dag$, $V_{\Pi_{\mathrm{Im}}}$, $V_{\Pi_{\mathrm{Im}}}^\dag$}}
         \\ \cline{1-2}
        Direct & $\left(2, 8, \frac{8}{\Lambda_{\Pi\Pi}^{0.5}}\log\left( \epsilon_{p}^{-1} \right) \sqrt{\epsilon_{k} + \epsilon_{i}} + \epsilon_{p} + \epsilon_{k} \right) $ &  \\ \hline
    \end{tabular}
    \caption{A summary of the costs to implement the block encoding $V_{\Pi_\beta}$ of $\Pi_{\mathrm{Ker}(\partial_k^i)} - \Pi_{\mathrm{Ker}(\partial_k^i) \cap \mathrm{Im}(\partial_{k+1}^j)}$.}
    \label{Apptab:ProjectorEncodingsKerIm}
\end{table*}

\subsection{State preparation unitary \texorpdfstring{$V_{\psi}$}{V\_psi}}\label{AppSub:StatePrep}
In this section we construct the state preparation unitary $V_\psi$ that approximately prepares the state $\ket{\psi_{S_k^i}}$, the purification of the maximally mixed state over $k$-simplices in the complex at scale $i$. Our approach uses fixed-point amplitude amplification, analyzed from a QSVT perspective~\cite{gilyen2019quantum}.

\subsubsection{Preparing purification of maximally mixed state over all possible \texorpdfstring{$k$}{k}-simplices}\label{AppSubSub:UniformSuperposition}

In this section we consider unitary circuits $U_{\mathrm{uni}}$ that prepare the purification of the maximally mixed state over all $\binom{N}{k+1}$ possible $k$-simplices in the complex. We will discuss how to implement unitaries that first prepare an equal superposition over all possible $k$-simplices in the complex. We can then introduce a second register of equal size, and apply CNOT gates between the corresponding qubits in each register, to generate the purification.\\

The direct encoding stores $k$-simplices as Hamming weight $(k+1)$ computational basis states of $N$ qubits. An equally weighted superposition of states with fixed Hamming weight is known as a Dicke state. We can prepare Dicke states using the approaches in Refs.~\cite{bartschi2019Dicke,bartschi2022DickeImproved}, the most efficient of which requires $\mathcal{O}(k\log(N/k))$ depth. \\

Preparing an equal superposition of $k$-simplices in the compact encoding proceeds as follows. We will consider $N+1 = 2^m$ for some integer $m$ to simplify the analysis. The uniform superposition over $k$-simplices built from $N$ datapoints is
\begin{equation}
    \frac{1}{\sqrt{\binom{N}{k+1}}} \sum_{s_{k}} \ket{s_{k}} = \frac{1}{\sqrt{\binom{N}{k+1}}} \sum_{V_1 = 1}^{[N-k]} \sum_{V_2 > V_1}^{[N-k+1]} ... \sum_{V_{k+1} > V_{k}}^{N} \ket{V_1} \ket{V_2} ... \ket{V_{k+1}}
\end{equation}
where each register acts on $m = \log(N+1)$ qubits. We can prepare this state by first placing each register in an equal superposition of $N$ states
\begin{equation}
    \left( \frac{1}{\sqrt{N}} \sum_{V_i=1}^{N} \ket{V_i} \right)^{\otimes (k+1)}.
\end{equation}
The dimension of the state above is $N^{k+1}$. Preparing the superpositions on each register can be done (for each register in parallel) using $\mathcal{O}(\log(N))$ gates~\cite{babbush2018encoding} (see also Refs.~\cite{gilyen2014mastersthesis,web:GidneyStackExchange,web:PallisterStackExchange}). We need to isolate the components of this state that are permutations of our desired orderings (i.e. we need to remove branches of the superposition with repeated vertices)
\begin{equation}
    \frac{1}{\sqrt{(k+1)! \binom{N}{k+1}}} \sum_{V_1 \neq V_2 ... \neq V_{k+1}}^{N} \ket{V_1} \ket{V_2} ... \ket{V_{k+1}}
\end{equation}
We check all pairs of registers using $k$ parallel rounds of $(k+1)/2$ equality checks. Each in-place equality check uses $\log(N+1)$ CNOT gates followed by a $\log(N+1)$-controlled Toffoli gate targeting an ancilla qubit. The gate depth can be reduced by introducing $\log(N+1)$ ancilla qubits (for each register pairing), and using a tree structure with Toffoli depth $\log\log(N)$ to replace the $\log(N+1)$-controlled Toffoli gate. We then use a $(k+1)/2$-controlled Toffoli to flag any failed equality check in that round. Once again we can replace the $(k+1)/2$-controlled Toffoli with $(k+1)/2$ ancilla qubits and a tree structure with with Toffoli depth $\bigO{\log(k)}$. After all the rounds we require a final $k$-controlled Toffoli to trigger the main flag qubit (which can again be replaced by ancilla qubits and a tree structure). Overall the repeated vertex detection requires $\bigO{k(\log\log(N) + \log(k))}$ gates and $\bigO{k\log(N)}$ ancilla qubits. The dimension of the desired `permutation state' is $N \times ... \times (N-k) = \frac{N!}{(N-k-1)!}$ so the probability of not seeing a duplicate vertex is $\prod_{j=0}^{k}\frac{N-j}{N}=\prod_{j=1}^{k}(1-\frac{j}{N})=\exp\left(\sum_{j=1}^{k}\ln(1-\frac{j}{N})\right)$. 
We can use exact amplitude amplification, because we can classically exactly compute the above probability of not getting repeated vertices. As long as $j\leq k+1 \leq \frac{N}{2}$ we have $\ln(1-\frac{j}{N})\geq -\frac{2j}{N}$ and the above probability can be lower bounded by $\exp\left(-\sum_{j=1}^{k}\frac{2j}{N}\right)=\exp\left(-k(k+1)/N\right)$, upper bounding the required number of amplitude amplification rounds by $\bigO{\exp\left(2\binom{k+1}{2}/N\right)}$.

Once the permutation state is prepared it can be sorted into the correct order using a reversible quantum sorting network, as described in Sec.~6.3.1 of Ref.~\cite{gilyen2014MScThesis}. The sorting network requires $\mathcal{O}\left( k \log(k) \log\log(N)  \right)$ depth and $\mathcal{O}\left( k (\log(k) + \log(N))  \right)=\mathcal{O}\left( k \log(N)  \right)$ additional ancilla qubits. At the end of the sorting network, the ancilla register is unentangled with the desired state, and so can be treated as part of the purifying register. Therefore, assuming that $k\leq \frac{N}{\log(N)}$ (otherwise we would advise using the direct encoding anyway), the overall complexity of preparing the equal superposition over simplices in the compact mapping is 
\begin{equation}
    \mathcal{O}\left( ( \log(N) + k\log(k) + k\log\log(N)) \exp\left(2\binom{k+1}{2}/N\right) + k \log(k) \log\log(N)  \right)
\end{equation}
and requires
\begin{equation}
    \mathcal{O}\left( k \log(N)  \right)
\end{equation}
additional ancilla qubits.

\subsubsection{Preparing the purified maximally mixed state over \texorpdfstring{$k$}{k}-simplices in the complex via fixed-point amplitude amplification}\label{AppSubSub:FixedPointAmpAmp}

Given the above unitary circuits that prepare the purification of the maximally mixed state over all possible $k$-simplices in the complex, we can use fixed point amplitude amplification (implemented using QSVT) to generate $V_{\psi}$. A thorough exposition of this technique is given in Ref.~\cite[Section 3]{martyn2021QSVT}. 

Our presentation makes use of a generalization of block-encodings, known as projected unitary encodings~\cite{gilyen2019quantum}. The necessary details are captured by the following Definitions and Lemma, which are less formal restatements of the results of Ref.~\cite{gilyen2019quantum}.
\begin{definition}
We say an $n+m$ qubit operator $U$ is an $(\alpha, m, \epsilon)$ projected unitary encoding of the $n$ qubit operator $A$ if $||A - \alpha \Pi_L U \Pi_R|| \leq \epsilon$, where $\Pi_L, \Pi_R$ are orthogonal projectors.
\end{definition}
\begin{definition}
We define a C$_\Pi$NOT gate as 
\begin{equation}
    \mathrm{C}_\Pi\mathrm{NOT} = (I-\Pi)\otimes I +  \Pi \otimes X. 
\end{equation}
For example, if $\Pi = \ket{1}\bra{1}$, C$_\Pi$NOT is a regular CNOT gate, while if $\Pi = \ket{11}\bra{11}$, C$_\Pi$NOT is a Toffoli gate.
\end{definition}

Assume we are given an $(\alpha, m, \epsilon)$ projected unitary encoding $U$ of an operator $A$, which has singular value decomposition $A/\alpha = W \Sigma V^\dag $. Assume we are also given the associated C$_{\Pi_L}$NOT and C$_{\Pi_R}$NOT gates, as well as a degree $d$ real polynomial $P(x)$, $|P(x)| \leq 1$ for $x \in [-1,1]$. QSVT provides the following Lemma
\begin{lemma}
We can implement a projected unitary encoding $U_o$ for a degree $d$ odd polynomial, such that $ \Pi_L U_o \Pi_R = P_o(A/\alpha) := W P_o(\Sigma) V^\dag $ (and equivalently $U_e$ for a degree $d$ even polynomial such that $\Pi_R U_e \Pi_R = P_e(A/\alpha) := V P_e(\Sigma) V^\dag $) using a QSVT circuit. The QSVT circuit makes $d$ calls to $U$, $U^\dag$, $2d$ calls to C$_{\Pi_L}$NOT, C$_{\Pi_R}$NOT, and uses $\mathcal{O}(d)$ other single- and two-qubit gates.
\end{lemma}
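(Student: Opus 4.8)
The plan is to reduce the claim to the single-qubit \emph{quantum signal processing} (QSP) characterization by exhibiting a common invariant-subspace structure shared by $U$ and the two projectors. First I would write the SVD $A/\alpha = W\Sigma V^\dagger$ and perform a cosine--sine (qubitization) decomposition of the Hilbert space relative to $\Pi_L$ and $\Pi_R$. Concretely, each right singular vector $v_i$ (a column of $V$, lying in the range of $\Pi_R$) together with its image direction $w_i$ (a column of $W$, in the range of $\Pi_L$) spans a two-dimensional subspace that is invariant under $U$, $U^\dagger$, and both reflections $2\Pi_L - I$ and $2\Pi_R - I$. Within each such block $U$ acts as a rotation whose angle is fixed by the singular value $\sigma_i = \cos\theta_i$. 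Singular values equal to $0$ or $1$, together with the parts of the space annihilated by the projectors, yield one-dimensional invariant blocks that must be tracked separately but behave trivially.

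Next I would assemble the QSVT circuit as an alternating product of $d$ copies of $U$ and $U^\dagger$ interleaved with projector-controlled phase rotations $e^{i\phi_k(2\Pi-I)}$, where $\Pi \in \{\Pi_L,\Pi_R\}$ is chosen to match the parity of the current layer. Each such phase rotation is implemented on a single ancilla using two $\mathrm{C}_\Pi\mathrm{NOT}$ gates and one single-qubit $Z$-rotation. Restricting the global circuit to a fixed $2\times 2$ invariant block, the alternation of the rotation-by-$\theta_i$ (the restriction of $U$) with the ancilla $Z$-phases reproduces exactly the standard QSP iterate, with $x = \sigma_i$ playing the role of the QSP signal parameter. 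Thus the problem decouples across singular values and reduces, block by block, to the one-variable QSP problem.

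I would then invoke the QSP polynomial characterization from Ref.~\cite{gilyen2019quantum}: for any real degree-$d$ polynomial $P$ with $|P(x)|\le 1$ on $[-1,1]$ and parity equal to $d \bmod 2$, there exist phase angles $\phi_1,\dots,\phi_d$ realizing $P$ as the relevant matrix entry. Crucially these angles are independent of $\sigma_i$, so a single choice implements $P$ on every block simultaneously. The parity bookkeeping is what distinguishes the two cases: an odd number of $U$/$U^\dagger$ applications maps right singular vectors to left singular vectors, giving $\Pi_L U_o \Pi_R = W P_o(\Sigma) V^\dagger$, whereas an even number returns to the right basis, giving $\Pi_R U_e \Pi_R = V P_e(\Sigma) V^\dagger$.

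Finally, counting resources is immediate from the layered construction: $d$ uses of $U$ and $U^\dagger$ combined, $2d$ instances of $\mathrm{C}_{\Pi_L}\mathrm{NOT}$ and $\mathrm{C}_{\Pi_R}\mathrm{NOT}$ (two per phase layer), and $\mathcal{O}(d)$ further single-qubit $Z$-rotations (one per layer). The main obstacle I anticipate is establishing the invariant-subspace decomposition cleanly---verifying that $\Pi_L$, $\Pi_R$, and $U$ are genuinely simultaneously block-diagonalizable into two- and one-dimensional pieces---and then handling the degenerate $\sigma_i \in \{0,1\}$ blocks and the parity argument with enough care that the global identities hold \emph{exactly} rather than merely on the generic blocks.
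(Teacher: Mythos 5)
Your proposal is a faithful sketch of the standard proof of QSVT---the cosine--sine decomposition into $2\times 2$ invariant blocks paired across the ranges of $\Pi_L$ and $\Pi_R$, the reduction to single-qubit QSP with phases independent of $\sigma_i$, the parity bookkeeping that distinguishes the odd and even cases, and the per-layer resource count---which is precisely the argument given in Ref.~\cite{gilyen2019quantum}. The paper itself offers no proof of this lemma: it is presented explicitly as a ``less formal restatement of the main results of Ref.~\cite{gilyen2019quantum}'', so your argument reproduces the proof in the source that the paper defers to.
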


We define the following operations: $U_{\mathrm{uni}}$ (as above) prepares the purified maximally mixed state over all possible $k$-simplices, $\Pi_0 = \ket{\bar{0}}\bra{\bar{0}}$, $\Pi_{\psi_{k}^i} = \ket{\psi_{S_k^i}}\bra{\psi_{S_k^i}}$. Recall that 
\begin{equation}
    \ket{\psi_{S_k^i}} := \frac{1}{\sqrt{|S_k^i|}} \sum_{s_k \in S_k^i} \ket{s_k} \ket{s_k}.
\end{equation}
By analogy with Grover's algorithm, we note that the membership oracle $O_{m_k^i}$ acts as a $C_{\Pi_{\psi_k^i}}NOT$ gate, with $\Pi_{\psi_{k}^i} = \ket{\psi_{S_k^i}}\bra{\psi_{S_k^i}}$, when we restrict to the 2D subspace spanned by $\ket{\psi_{S_k^i}}, \ket{\psi_{S_k^i}^\perp}$, which we remain in during the course of this subroutine. We then see that
\begin{equation}
    \norm{\Pi_{\psi_k^i} U_{\mathrm{uni}} \Pi_0 - \sqrt{\frac{|S_k^i|}{\binom{N}{k+1}}} \ket{\psi_{S_k^i}}\bra{\bar{0}}} = 0
\end{equation}
i.e. $U_{\mathrm{uni}}$ provides a $(1, 0, 0)$ projected unitary encoding of $\sqrt{\frac{|S_k^i|}{\binom{N}{k+1}}} \ket{\psi_{S_k^i}}\bra{\bar{0}}$. We can use QSVT to map the singular value $\sqrt{\frac{|S_k^i|}{\binom{N}{k+1}}}$ to 1. A layer of the resulting QSVT circuit in shown in Fig.~\ref{fig:App:StatePrepCircuit}. In this circuit, the membership oracle fulfils the role of the C$_{\Pi_L}$NOT gates.

\begin{figure}[!ht]
	\begin{align*}
\Qcircuit @C=0.6em @R=.4em {
\lstick{} & \multigate{1}{U_{uni}} & \qw & \qw & \qw & \multigate{1}{U_{uni}^\dag} & \ctrlo{1} & \qw & \ctrlo{1} & \multigate{1}{U_{uni}} & \qw & \\ 
\lstick{} & \ghost{U_{uni}} &  \multigate{1}{O_{m_k^i}} & \qw & \multigate{1}{O_{m_k^i}} & \ghost{U_{uni}^\dag} & \ctrlo{1} & \qw & \ctrlo{1} & \ghost{U_{uni}} & \qw & \\ 
\lstick{} & \qw & \ghost{O_{m_k^i}} & \gate{R_z} & \ghost{O_{m_k^i}} & \qw & \targ & \gate{R_z} & \targ & \qw & \qw &  \\
}
\end{align*}
\caption{A layer of the QSVT implementation of fixed-point amplitude amplification used to implement the state preparation unitary $V_{\psi}$.}\label{fig:App:StatePrepCircuit}
\end{figure}

The resulting QSVT circuit $V_{\psi}$ is a
\begin{equation}
\left(1, 1, \epsilon_\psi \right)
\end{equation}
projected unitary encoding of $\ket{\psi_{S_k^i}}\bra{\bar{0}}$, and uses $\mathcal{O}\left(\sqrt{\frac{\binom{N}{k+1}}{|S_k^i|}}\log(\epsilon_\psi^{-1})\right)$ calls to $O_{m_k^i}$, $U_{\mathrm{uni}}$, $U_{\mathrm{uni}}^\dag$, and Toffoli gates controlled on all register qubits of the purification ($(2(k+1)\log(N+1))$ qubits for the compact mapping, $2N$ qubits for the direct mapping).

%------------------------------------------------------------------------------------------------

\section{Determining the overall complexity of the algorithm}\label{AppSec:OverallComplexity}

In this section we determine the overall complexity of our algorithm. We invoke Corollary~\ref{Coro:PersistentBetti}, which we restate here:\\
\textit{
To estimate $\beta_k^{i,j}$ to additive error $\Delta$ with success probability $\geq 1- \eta$, we solve two instances of normalized projector rank estimation. The first encodes the value of $|S_k^i|/\binom{N}{k+1}$, and the second encodes the value of $\beta_k^{i,j}/|S_k^i|$. Each instance uses $\mathcal{O}\left( \log(\eta^{-1})\right)$ repetitions of a quantum circuit that makes $\mathcal{O}\left( \delta_x^{-1} \right)$ calls to the state preparation unitaries $V_{\psi_x}$ (acting on $b_x$ qubits) and the $(\alpha_x, a_x, \delta_x/4)$ block-encodings $V_{\Pi_x}$ of projector $\Pi_x$ shown in Table~\ref{tab:CorollaryInstanceData2}. Each circuit also uses $\bigO{(a_x + b_x)\frac{\alpha_x}{\delta_x}}$ additional two-qubit gates. 
}

\begin{table*}[ht]
    \centering
    \begin{tabular}{c|c|c}
        Instance $x$  & 1 & 2 \\ \hline
        Estimates & $X:= \sqrt{|S_k^i|/\binom{N}{k+1}}$ & $Y:=\sqrt{\beta_k^{i,j}/|S_k^i|}$ \\ \hline
        $\delta_x$ & $\frac{\Delta}{4 \beta_k^{i,j}} \sqrt{\frac{|S_k^i|}{\binom{N}{k+1}}}$ & $\frac{\Delta}{4\sqrt{ |S_k^i| \beta_k^{i,j} }}$ \\ \hline
        $\Pi_x$ & \makecell{$\Pi_{S_k^i}$: Projects onto \\ $k$-simplices in complex at scale $i$} & \makecell{$\Pi_{\mathrm{Ker}} - \Pi_{\mathrm{Ker} \cap \mathrm{Im} }$: Projects onto \\ $\mathrm{Ker}(\partial_k^i) - \mathrm{Ker}(\partial_k^i) \cap \mathrm{Im}(\partial_{k+1}^j)$} \\ \hline
        $\ket{\psi_x}$ & \makecell{$\ket{\psi_s}$: Uniform superposition of \\ all possible $k$-simplices} & \makecell{$\ket{\psi_{S_k^i}}$: Purification of maximally mixed state \\ over all $k$-simplices in complex at scale $i$ }
    \end{tabular}
    \caption{Details of the error parameters and block encoded operators for each of the instances of normalized projector rank estimation used to estimate the persistent Betti number to additive error $\Delta$ using the algorithm of Theorem~\ref{Theorem:SubspaceDimensionEst}.}
    \label{tab:CorollaryInstanceData2}
\end{table*}

We will make use of the following facts and assumptions throughout:
\begin{itemize}
    \item Instance 2 is more costly than instance 1, and hence we use the cost of instance 2 to bound the asymptotic cost of the algorithm.
    \item We assume the cost of calls to membership oracles in the algorithm have a cost equal to $O_{m_k^i}$ (for $k-1, k, k+1$ and $i,j$) (this is true for the direct mapping, and asymptotically true for the compact mapping).
    \item We assume the inverse of a block-encoding unitary $V^\dag$ has the same cost as the unitary $V$.
\end{itemize}

The cost of the quantum circuit is then given by:
\begin{align}
    \mathcal{O}&\left( \frac{\alpha_2}{\delta_2} \times \left(V_{\psi} + V_{\Pi_\beta} \right) \right).
\end{align}
In the following subsections, we will determine these costs in terms of the building blocks introduced in the preceding sections. We will do this for both the direct mapping, and the compact mapping with a slow-load lookup table. We will not treat the compact mapping with a fast-load lookup table explicitly, as it is a small modification of the slow-load results, and the large increase in ancilla qubits comes at the expense of a modest improvement in the gate complexity.

\subsection{Direct mapping}\label{AppSubSec:OverallComplexityDirect}
        
We focus first on the direct mapping. We have the following costs:
\begin{itemize}
    \item $O_{m_k^i} : \bigO{N\log(N)}$ depth and $\bigO{N}$ additional qubits.
    \item $V_{\psi}$ : uses $\bigO{\sqrt{\frac{\binom{N}{k+1}}{|S_k^i|}} \log\left( \frac{1}{\epsilon_\psi} \right)} \times [U_{\mathrm{Uni}} + O_{m_k^i}]$.
    \begin{itemize}
        \item $U_{\mathrm{Uni}}: \bigO{k\log(N)}$ depth and no additional qubits.
    \end{itemize}
    \item $V_{\Pi_\beta}$ : $\left(2, 8, \epsilon_p + \epsilon_k + \frac{8}{\Lambda_{\Pi\Pi}^{0.5}}\log\left( \epsilon_{p}^{-1} \right) \sqrt{\epsilon_{k} + \epsilon_{i}} \right)$ block-encoding, using $\bigO{\frac{1}{\Lambda_{\Pi\Pi}^{0.5}} \frac{\sqrt{N}}{\mathrm{Min}\left(\Lambda_{\partial_k^i}, \Lambda_{\partial_{k+1}^j} \right)} \log\left(\frac{1}{\epsilon_{p}}\right) \log\left( \frac{1}{\epsilon_{i/k}} \right)} \times [V_{\partial_k^i}, V_{\partial_{k+1}^j}]$.
    \begin{itemize}
        \item $V_{\partial_k^i}, V_{\partial_{k+1}^j}: \bigO{\log(N)} + O_{m_k^i}$ depth, and only the additional qubits required for $O_{m_k^i}$.
    \end{itemize}
\end{itemize}
We refer to the error $\epsilon_{i/k}$ because as we shall show below, $\epsilon_{i}$ and $\epsilon_{k}$ can be chosen to be the same.\\

Substituting these costs into the overall algorithmic cost and dropping subleading terms ($U_{\mathrm{Uni}}$, $\log(N)$ depth in $V_\partial$) gives:
\begin{align*}
    \bigO{\frac{N\log(N)}{\delta_2}\times \left(\sqrt{\frac{\binom{N}{k+1}}{|S_k^i|}} \log\left( \frac{1}{\epsilon_\psi} \right) + \frac{1}{\Lambda_{\Pi\Pi}^{0.5}} \frac{\sqrt{N}}{\mathrm{Min}\left(\Lambda_{\partial_k^i}, \Lambda_{\partial_{k+1}^j} \right)} \log\left(\frac{1}{\epsilon_{p}}\right) \log\left( \frac{1}{\epsilon_{i/k}} \right) \right)  }
\end{align*}
We must determine suitable values for the errors $\epsilon_\psi, \epsilon_p, \epsilon_{i/k}$. As stated in Corollary~\ref{Coro:PersistentBetti}, the error in $V_{\Pi_\beta}$ must be equal to $\delta_2/4$. Hence
\begin{equation}
    \epsilon_p + \epsilon_k + \frac{8}{\Lambda_{\Pi\Pi}^{0.5}}\log\left( \epsilon_{p}^{-1} \right) \sqrt{\epsilon_{k} + \epsilon_{i}} \sim \delta_2.
\end{equation}
We can thus choose
\begin{align}
    \epsilon_p &\sim \delta_2 \\
    \epsilon_{i/k} &\sim \frac{\delta_2^2 \Lambda_{\Pi\Pi}}{\log^2\left(\frac{1}{\delta_2} \right)}
\end{align}
Similarly $\epsilon_\psi = \delta_2/4$. Substituting these values into the cost of the algorithm, and hiding constant terms using the big-$\mathcal{O}$ notation yields
\begin{equation}
        \mathcal{O}\left( \frac{N\log(N)}{\delta_2} \log\left( \frac{1}{\delta_2} \right) \times \left( \sqrt{\frac{\binom{N}{k+1}}{|S_k^i|}}  + \frac{\sqrt{N}}{ \Lambda_{\Pi \Pi}^{0.5} \mathrm{Min}\left(\Lambda_{\partial_k^i}, \Lambda_{\partial_{k+1}^j} \right)} \log\left(\frac{1}{\delta_2^2 \Lambda_{\Pi\Pi}} \right) \right) \right).
\end{equation}
Substituting in the value of $\delta_2$ yields
\begin{equation}
        \mathcal{O}\left( \frac{N\log(N) \sqrt{\beta_k^{i,j}}}{\Delta} \log\left( \frac{\sqrt{\beta_k^{i,j} |S_k^i|}}{\Delta} \right) \times \left( \sqrt{\binom{N}{k+1}}  + \frac{\sqrt{N|S_k^i|}}{ \Lambda_{\Pi \Pi}^{0.5} \mathrm{Min}\left(\Lambda_{\partial_k^i}, \Lambda_{\partial_{k+1}^j} \right)} \log\left(\frac{\beta_k^{i,j} |S_k^i|}{\Delta^2 \Lambda_{\Pi\Pi}} \right) \right) \right).
\end{equation}
We consider $|S_k^i| \sim \binom{N}{k+1}$, as this is the regime in which classical algorithms are least efficient. The resulting complexity is:
\begin{equation}
        \mathcal{O}\left( \frac{N^{3/2}\log(N) \sqrt{\binom{N}{k+1} \beta_k^{i,j}}}{\Delta   \Lambda_{\Pi \Pi}^{0.5} \mathrm{Min}\left(\Lambda_{\partial_k^i}, \Lambda_{\partial_{k+1}^j} \right)} \log\left(\frac{ \sqrt{\binom{N}{k+1} \beta_k^{i,j}}}{\Delta} \right) \log\left(\frac{ \binom{N}{k+1} \beta_k^{i,j}}{\Delta^2 \Lambda_{\Pi\Pi}} \right) \right)
\end{equation}
depth. We require
\begin{equation}
    \tilde{\mathcal{O}}\left( \log\left(\frac{1}{\eta} \right)  \right)
\end{equation}
incoherent repetitions of this circuit. Hence, the overall complexity is
\begin{equation}
        \tilde{\mathcal{O}}\left( \frac{N^{3/2} \sqrt{\binom{N}{k+1} \beta_k^{i,j} }}{\Delta   \Lambda_{\Pi\Pi}^{0.5} \mathrm{Min}\left(\Lambda_{\partial_k^i}, \Lambda_{\partial_{k+1}^j} \right)}\right).
\end{equation}
The quantum circuit used acts on $\mathcal{O}(N)$ qubits.

\subsection{Compact mapping}\label{AppSubSec:OverallComplexityCompact}
For the compact mapping using a slow-load lookup table, we have the following costs:
\begin{itemize}
    \item $O_{m_k^i} : \bigO{N + k(\log(d)\log(b_d) + b_d)}$ depth and $\bigO{k\log(N) + kdb_d^2}$ additional qubits.
    \item $V_{\psi}$ : uses $\bigO{\sqrt{\frac{\binom{N}{k+1}}{|S_k^i|}} \log\left( \frac{1}{\epsilon_\psi} \right)} \times [U_{\mathrm{Uni}} + O_{m_k^i}]$.
    \begin{itemize}
        \item $U_{\mathrm{Uni}}: \bigO{\log(N)} + k\log(k)\log\log(N)$ depth (when $k \in \mathcal{O}(\sqrt{N})$) and $\bigO{k\log(N)}$ additional qubits.
    \end{itemize}
    \item $V_{\Pi_\beta}$ : $\left(2, \log\left((N+1)^2(k+1)(k+2) \right) + 6, \epsilon_p + \epsilon_k + \frac{8}{\Lambda_{\Pi\Pi}^{0.5}}\log\left( \epsilon_{p}^{-1} \right) \sqrt{\epsilon_{k} + \epsilon_{i}} \right)$ block-encoding, using $\bigO{\frac{1}{\Lambda_{\Pi\Pi}^{0.5}} \frac{\sqrt{Nk}}{\mathrm{Min}\left(\Lambda_{\partial_k^i}, \Lambda_{\partial_{k+1}^j} \right)} \log\left(\frac{1}{\epsilon_{p}}\right) \log\left( \frac{1}{\epsilon_{i/k}} \right)} \times [V_{\partial_k^i}, V_{\partial_{k+1}^j}]$.
    \begin{itemize}
        \item $V_{\partial_k^i}, V_{\partial_{k+1}^j}: \bigO{k\log\log(N)} + O_{m_k^i}$ depth, and $\bigO{\log(N)}$ additional qubits, plus those required for $O_{m_k^i}$.
    \end{itemize}
\end{itemize}
We refer to the error $\epsilon_{i/k}$ because as we shall show below, $\epsilon_{i}$ and $\epsilon_{k}$ can be chosen to be the same.\\

We work in regime with $\bigO{k\log(k)\log\log(N)} \in \bigO{N}$. Substituting these costs into the overall algorithmic cost and dropping subleading terms gives:
\begin{align*}
    \bigO{\frac{N + k(\log(d)\log(b_d) + b_d)}{\delta_2}\times \left(\sqrt{\frac{\binom{N}{k+1}}{|S_k^i|}} \log\left( \frac{1}{\epsilon_\psi} \right) + \frac{1}{\Lambda_{\Pi\Pi}^{0.5}} \frac{\sqrt{Nk}}{\mathrm{Min}\left(\Lambda_{\partial_k^i}, \Lambda_{\partial_{k+1}^j} \right)} \log\left(\frac{1}{\epsilon_{p}}\right) \log\left( \frac{1}{\epsilon_{i/k}} \right) \right)  }
\end{align*}
Repeating the error analysis performed for the direct mapping, 
\begin{align}
    \epsilon_p &\sim \delta_2 \\
    \epsilon_{i/k} &\sim \frac{\delta_2^2 \Lambda_{\Pi\Pi}}{\log^2\left(\frac{1}{\delta_2} \right)} \\
    \epsilon_\psi &\sim \delta_2
\end{align}
Substituting these values into the cost of the algorithm, and hiding constant terms using the big-$\mathcal{O}$ notation yields
\begin{equation}
        \mathcal{O}\left( \frac{N + k(\log(d)\log(b_d) + b_d)}{\delta_2} \log\left( \frac{1}{\delta_2} \right) \times \left( \sqrt{\frac{\binom{N}{k+1}}{|S_k^i|}}  + \frac{\sqrt{Nk}}{ \Lambda_{\Pi \Pi}^{0.5} \mathrm{Min}\left(\Lambda_{\partial_k^i}, \Lambda_{\partial_{k+1}^j} \right)} \log\left(\frac{1}{\delta_2^2 \Lambda_{\Pi\Pi}} \right) \right) \right).
\end{equation}
Substituting in the value of $\delta_2$ yields
\begin{equation}
        \mathcal{O}\left( \frac{(N + k(\log(d)\log(b_d) + b_d)) \sqrt{\beta_k^{i,j} |S_k^i|}}{\Delta} \log\left( \frac{\sqrt{\beta_k^{i,j} |S_k^i|}}{\Delta} \right) \times \left( \sqrt{\frac{\binom{N}{k+1}}{|S_k^i|}}  + \frac{\sqrt{Nk}}{ \Lambda_{\Pi \Pi}^{0.5} \mathrm{Min}\left(\Lambda_{\partial_k^i}, \Lambda_{\partial_{k+1}^j} \right)} \log\left(\frac{\beta_k^{i,j} |S_k^i|}{\Delta^2 \Lambda_{\Pi\Pi}} \right) \right) \right).
\end{equation}
We consider $|S_k^i| \sim \binom{N}{k+1}$, as this is the regime in which classical algorithms are least efficient. Noting that the Johnson-Lidenstrauss lemma ensures we can choose $d \in \bigO{\log(N)/\epsilon_{JL}^2}$, we assume that $\bigO{k(\log(d)\log(b_d) + b_d)} \in \bigO{N}$. The resulting complexity is:
\begin{equation}
        \mathcal{O}\left( \frac{N^{3/2}\sqrt{k} \sqrt{\binom{N}{k+1} \beta_k^{i,j}}}{\Delta   \Lambda_{\Pi \Pi}^{0.5} \mathrm{Min}\left(\Lambda_{\partial_k^i}, \Lambda_{\partial_{k+1}^j} \right)} \log\left(\frac{ \sqrt{\binom{N}{k+1} \beta_k^{i,j}}}{\Delta} \right) \log\left(\frac{ \binom{N}{k+1} \beta_k^{i,j}}{\Delta^2 \Lambda_{\Pi\Pi}} \right) \right)
\end{equation}
depth. We require
\begin{equation}
    \tilde{\mathcal{O}}\left( \log\left(\frac{1}{\eta} \right)  \right)
\end{equation}
incoherent repetitions of this circuit. Hence, the overall complexity (circuit depth) is
\begin{equation}
        \tilde{\mathcal{O}}\left( \frac{N^{3/2} \sqrt{k \beta_k^{i,j} \binom{N}{k+1} }}{\Delta   \Lambda_{\Pi\Pi}^{0.5} \mathrm{Min}\left(\Lambda_{\partial_k^i}, \Lambda_{\partial_{k+1}^j} \right)}\right).
\end{equation}
The quantum circuit used acts on $\bigO{k\log(N)}$ qubits.

% ------------------------------------------------------------------------------------------------

\section{Nuances of persistent Betti numbers}\label{AppSub:BettiNuances}

\subsection{Performing a change of basis}\label{AppSubSub:RestrictedChainGroup}
As discussed in the main text, there are a number of nuances associated with instantiating operators that encode persistent Betti numbers. In this work, we compute $\beta_k^{i,j}$ as
\begin{align}
    \beta_k^{i,j} = \mathrm{dim}\left( \mathrm{Ker}(\partial_k^i) \right) - \mathrm{dim}\left( \mathrm{Ker}(\partial_k^i) \cap \mathrm{Im}(\partial_{k+1}^j) \right).
\end{align}
Nevertheless, there are other ways this problem can be formulated. For example, the task of computing persistent Betti numbers has been framed in terms of a persistent combinatorial Laplacian~\cite{wang2020persistent, memoli2020persistent}. This is achieved by first defining a subgroup $C^{i,j}_{k+1}(S^j)$ containing $(k+1)$-chains in the complex at scale $j$ whose images under the boundary operator $\partial^j_{k+1}$ are $k$-chains at scale $i$. Formally,
\begin{equation}
    C^{i,j}_{k+1}(S^j) = \{c \in C_{k+1}(S^j) : \partial^j_{k+1}(c) \in C_k(S^i) \}.
\end{equation}
Intuitively, the chains in $C^{i,j}_{k+1}(S^j)$ are mapped to $k$-cycles in $C_k(S^i)$, which can be divided into the $k$-boundaries already present in $C_k(S^i)$, and a subset of the $k$-holes in $C_k(S^i)$. In other words, the additional $(k+1)$-chains in $C^{i,j}_{k+1}(S^j)$ (beyond those already present in $C_{k+1}(S^i)$) have the action of filling-in $k$-holes in $C_k(S^i)$. Defining $\partial^{i,j}_{k+1}$ as $\partial_{k+1}$ restricted to the chains (not just the simplices) in $C^{i,j}_{k+1}(S^j)$ it follows that~\cite{memoli2020persistent}
\begin{equation}
    \mathrm{Im}(\partial^{i,j}_{k+1}) \cong \mathrm{Ker}(\partial_k^i) \cap \mathrm{Im}(\partial_{k+1}^j).
\end{equation}
Moving from the simplex basis to the chain basis is the main subject of this section.

As discussed in Appendix~\ref{AppSec:MathsBackgroundPedagogy}, $\beta_k^{i,j} = \mathrm{dim}\left(\mathrm{Ker}(\Delta^{i,j}_k)\right)$. The persistent combinatorial Laplacian matrix $\Delta^{i,j}_k$ can be built from the operator $\partial^{i,j}_{k+1}$ (c.f. Eq.~(\ref{Eq:PersistentLaplacian})). Expressing $\partial^{i,j}_{k+1}$ in terms of matrices is more complex than expressing $\partial_k^i$, as discussed in Ref.~\cite{memoli2020persistent}, and as we shall show below. This is because the $(k+1)$-simplices in $S_{k+1}^j$ do not form an appropriate basis for the group $C^{i,j}_{k+1}(S^j)$. Even if a chain $\sigma_1 + \sigma_2 \in C^{i,j}_{k+1}(S^j)$ (such that $\partial_{k+1}^j (\sigma_1 + \sigma_2) \in C_k(S^i)$), it can be the case that $\partial_{k+1}^j(\sigma_1), \partial_{k+1}^j(\sigma_2) \notin C_k(S^i)$. The required change of basis adds complexity to both the classical and quantum algorithms for persistent Betti numbers. We elaborate more on this point below. \\

In this Appendix we provide a worked example for constructing the restricted boundary operator $\partial_{k+1}^{i,j}$, in order to build intuition for the increased complexity. Below, we show a pair of simplicial complexes at different scales $i$ and $j$. Clearly there are zero $1$-holes that persist from scale $i$ to scale $j$, as the hole is filled by the new $2$-simplices at scale $j$.

\begin{center}
\begin{tikzpicture}
[align=center,node distance=3cm]

\node [above] at (1.5, 0.5) {$S^i_0 = \{A, B, C, D\}$ \\
$S^i_1 = \{AB, BC, CD, AD\}$ \\
$S^i_2 = \{ \}$};

\filldraw [black]  (0,0) circle (3pt);
\node [left] at (0,0) {A};

\filldraw [black]  (3,0) circle (3pt); 
\node [right] at (3,0) {B};

\filldraw [black]  (3,-2) circle (3pt); 
\node [right] at (3,-2) {C};

\filldraw [black]  (0,-2) circle (3pt);
\node [left] at (0,-2) {D};

\draw (0,0) --(3,0);
\draw (0,0) --(0,-2);
\draw (3,0) --(3,-2);
\draw (0,-2) --(3,-2);

\node [above] at (11.5, 0.5) {$S^j_0 = \{A, B, C, D\}$ \\
$S^j_1 = \{AB, BC, CD, AD, AC, BD\}$ \\
$S^j_2 = \{ABC, ACD, ABD, BCD \}$};

\filldraw [black]  (10,0) circle (3pt);
\node [left] at (10,0) {A};

\filldraw [black]  (13,0) circle (3pt); 
\node [right] at (13,0) {B};

\filldraw [black]  (13,-2) circle (3pt); 
\node [right] at (13,-2) {C};

\filldraw [black]  (10,-2) circle (3pt);
\node [left] at (10,-2) {D};

\draw [black, fill=gray] (10,0) -- (13,0) -- (13,-2);
\draw [black, fill=gray] (10,0) -- (10,-2) -- (13,-2);

\draw (10,0) --(10,-2);
\draw (13,0) --(13,-2);
\draw (10,0) -- (13,0);
\draw (10,0) -- (13, -2);
\draw (10,-2) -- (13,0);
\draw (10,-2) -- (13,-2);

\end{tikzpicture}
\end{center}

We are interested in the elements of the restricted chain group
\begin{equation}
    C^{i,j}_{k+1}(S^j) = \{c \in C_{k+1}(S^j) : \partial^j_{k+1}(c) \in C_k(S^i) \}.
\end{equation}
In this example, it is straightforward to find the elements of this group manually. First, observe that
\begin{align}
    \partial_2^j [ABC] &= BC - AC + AB \\
    \partial_2^j [ACD] &= CD - AD + AC \nonumber \\
    \partial_2^j [ABD] &= BD - AD + AB \nonumber \\
    \partial_2^j [BCD] &= CD - BD + BC. \nonumber
\end{align}
The $1$-chains $AC$ and $BD$ are present at scale $j$, but are not present at scale $i$. Consequently, we need to take linear combinations of the $2$-simplices in $j$ that will cancel out these errant $1$-chains to form the elements of $C^{i,j}_{2}(S^j)$. These are given by
\begin{align}
    c_1 = ABC + ACD; &\quad \partial_2^j [c_1] = AB + BC + CD - AD \\
    c_2 = ABD + BCD; &\quad \partial_2^j [c_2] = AB + BC + CD - AD, \nonumber
\end{align}
or any linear combination of these chains. We see that while $c_1, c_2 \in C^{i,j}_{2}(S^j)$, their constituent simplices are not, as mentioned above. This is why we must represent $\partial_{2}^{i,j}$ in the basis of elements of $C^{i,j}_{2}(S^j)$, rather than in the basis of simplices. We can see that the dimension of $\partial_{2}^{i,j}$ is two, and its rank is one.
The persistent Betti number is given by
\begin{align}
    \beta_1^{i,j} &= \mathrm{dim}\left(\mathrm{Ker}(\partial_1^i)\right) - \mathrm{dim}\left(\mathrm{Im}(\partial_{2}^{i,j})\right) \\
    &= 1 - 1 \nonumber \\
    &= 0 \nonumber
\end{align}
as expected.\\

We now consider what the matrix representation of $\partial_{2}^{i,j}$ should be. We closely follow the steps outlined in Ref.~\cite{memoli2020persistent}, in particular the proof of Lemma 3.4 in that work.

The dimension 2 boundary matrix at scale $j$, projected onto simplices present at scale $j$ is given by
\begin{equation*}
\partial_2^j P_2^j = 
\left(\begin{array}{rrrr}
{\color{blue}ABC} & {\color{blue}ACD} & {\color{blue}ABD} & {\color{blue}BCD} \\
1   & 0   & 1   & 0   \\
1   & 0   & 0   & 1   \\
0   & 1   & 0   & 1   \\
0   & -1   & -1   & 0   \\
-1   & 1   & 0   & 0   \\
0   & 0   & 1   & -1   \\
\end{array}\right)\!\begin{array}{l}
\\ {\color{blue}AB} \\ {\color{blue}BC} \\ {\color{blue}CD} \\ {\color{blue}AD} \\ {\color{blue}AC} \\ {\color{blue}BD}
\end{array}
\end{equation*}

As an aside, we note that the quantum algorithm in Ref.~\cite{ameneyro2022quantum} defines the restricted boundary operator as $P_1^i \partial_2^j P_2^j$. We have been unable to verify that this quantum algorithm is able to correctly obtain the persistent Betti numbers. The matrix representation of this operator for our example is
\begin{equation*}
P_1^i \partial_2^j P_2^j =
\left(\begin{array}{rrrr}
{\color{blue}ABC} & {\color{blue}ACD} & {\color{blue}ABD} & {\color{blue}BCD} \\
1   & 0   & 1   & 0   \\
1   & 0   & 0   & 1   \\
0   & 1   & 0   & 1   \\
0   & -1   & -1   & 0   \\
\end{array}\right)\!\begin{array}{l}
\\ {\color{blue}AB} \\ {\color{blue}BC} \\ {\color{blue}CD} \\ {\color{blue}AD}
\end{array}
\end{equation*}
This operator has rank 3, and would therefore give $\beta_1^{i,j} = -2$, which is not possible. Another issue with this operator can be seen by considering its action on $ABC$ (or any of the $2$-simplices). The operator maps $ABC$ to $AB + BC$, and we have that $ABC \in C_2(S^j)$, $AB, BC \in C_1(S^i)$, from which we would conclude that $ABC \in C_2^{i,j}(S^j)$. However, this is not the case, as $\partial_2^j(ABC) = BC - AC + AB$, and $AC \notin C_1(S^i)$. As discussed above, the chain group $C_2^{i,j}(S^j)$ is made up of a linear combination of the $2$-simplices in $S^j$, and therefore we need a change of basis from simplices to chains, which is not present in the above expression for $\partial_2^{i,j}$. We provide additional discussion on issues encountered when using this expression for computing persistent Betti numbers at the end of this section.

To obtain the correct expression for the restricted boundary operator, we follow the procedure of Ref.~\cite{memoli2020persistent}, and consider the projection onto the $1$-chains that are not in $S_1^i$
\begin{equation*}
(I - P_1^i) \partial_2^j P_2^j = 
\left(\begin{array}{rrrr}
{\color{blue}ABC} & {\color{blue}ACD} & {\color{blue}ABD} & {\color{blue}BCD} \\
-1   & 1   & 0   & 0   \\
0   & 0   & 1   & -1   \\
\end{array}\right)\!\begin{array}{l}
\\ {\color{blue}AC} \\ {\color{blue}BD}
\end{array}
\end{equation*}
This operator maps $2$-chains in $j$ to $1$-chains in $j$ that are not in $i$. Therefore objects in its kernel are $2$-chains in $j$ that are in $i$. This coincides with the definition of $C_2^{i,j}(S^j)$, and therefore the elements of this group are in the kernel of $(I - P_1^i) \partial_2^j P_2^j := D_2^j$. We can identify these chains by performing column reduction on the matrix $D_2^j$. We need to find a non-singular matrix $Y$ such that $R_2^j = D_2^j Y$ is column-reduced. Such a matrix $Y$ can be found by doing a singular value decomposition of $D_2^j$. In this example, we use the matrix
\begin{equation*}
Y = 
\left(\begin{array}{rrrr}
{\color{blue}\alpha} & {\color{blue}\beta} & {\color{blue}\gamma} & {\color{blue}\delta} \\
1   & 0   & 0   & 0   \\
1   & 1   & 0   & 0   \\
0   & 0   & 1   & 1   \\
0   & 0   & 0   & 1   \\
\end{array}\right)\!\begin{array}{l}
\\ {\color{blue}ABC} \\ {\color{blue}ACD} \\ {\color{blue}ABD} \\ {\color{blue}BCD}
\end{array}
\end{equation*}
such that
\begin{equation*}
R_2^j =
\left(\begin{array}{rrrr}
{\color{blue}\alpha} & {\color{blue}\beta} & {\color{blue}\gamma} & {\color{blue}\delta} \\
0  & 1   & 0   & 0   \\
0   & 0   & 1   & 0   \\
\end{array}\right)\!\begin{array}{l}
\\ {\color{blue}AC} \\ {\color{blue}BD}
\end{array}
\end{equation*}
We can see from $R_2^j$ and $Y$ that the columns indexed by $\alpha, \delta$ are in the kernel of $D_2^j$, and correspond to the chains $\alpha = c_1 = ABC + ACD$, $\delta = c_2 = ABD + BCD$, as found earlier. As a result, $Y$ has acted as a change of basis, from the simplicial basis, to a basis of chains that has, as a subset, a basis of $C_2^{i,j}(S^j)$. 

The matrix representation for $\partial_2^{i,j}$ is then obtained by changing the column space of $\partial_2^j$ using $Y$, and projecting into the columns corresponding to elements of $C_2^{i,j}(S^j)$, and the rows corresponding to $1$-simplices in $S^i$:
\begin{align*}
\partial_2^{i,j} &= \left( (\partial_2^j P_2^j)\cdot Y \right) [AB:AD][\alpha, \delta]\\
&= \left(
\begin{array}{r}
\\ {\color{blue}AB} \\ {\color{blue}BC} \\ {\color{blue}CD} \\ {\color{blue}AD} \\ {\color{blue}AC} \\ {\color{blue}BD}
\end{array}\!\left(\begin{array}{rrrr}
{\color{blue}ABC} & {\color{blue}ACD} & {\color{blue}ABD} & {\color{blue}BCD} \\
1   & 0   & 1   & 0   \\
1   & 0   & 0   & 1   \\
0   & 1   & 0   & 1   \\
0   & -1   & -1   & 0   \\
-1   & 1   & 0   & 0   \\
0   & 0   & 1   & -1   \\
\end{array}\right)
\cdot
\left(\begin{array}{rrrr}
{\color{blue}\alpha} & {\color{blue}\beta} & {\color{blue}\gamma} & {\color{blue}\delta} \\
1   & 0   & 0   & 0   \\
1   & 1   & 0   & 0   \\
0   & 0   & 1   & 1   \\
0   & 0   & 0   & 1   \\
\end{array}\right)\!\begin{array}{l}
\\ {\color{blue}ABC} \\ {\color{blue}ACD} \\ {\color{blue}ABD} \\ {\color{blue}BCD}
\end{array}
\right)[AB:AD][\alpha, \delta] \\
&= \begin{array}{r}
\\ {\color{blue}AB} \\ {\color{blue}BC} \\ {\color{blue}CD} \\ {\color{blue}AD} \\ {\color{blue}AC} \\ {\color{blue}BD}
\end{array}\!\left(\begin{array}{rrrr}
{\color{blue}\alpha} & {\color{blue}\beta} & {\color{blue}\gamma} & {\color{blue}\delta} \\
1   & 0   & 1   & 1   \\
1   & 0   & 0   & 1   \\
1   & 1   & 0   & 1   \\
-1   & -1   & -1   & -1   \\
0   & 1   & 0   & 0   \\
0   & 0   & 1   & 0   \\
\end{array}\right)[AB:AD][\alpha, \delta] \\
&= \begin{array}{r}
\\ {\color{blue}AB} \\ {\color{blue}BC} \\ {\color{blue}CD} \\ {\color{blue}AD}
\end{array}\!\left(\begin{array}{rr}
{\color{blue}\alpha} & {\color{blue}\delta} \\
1     & 1   \\
1     & 1   \\
1     & 1   \\
-1    & -1   \\
\end{array}\right)
\end{align*}
This matrix maps $2$-chains in $C_2(S^j)$ into $1$-chains in $C_1(S^i)$, and has a rank of one. As a result, it correctly gives $\beta_1^{i,j} = 0$ for this filtration.\\

We see that finding the matrix representation of $\partial_{k+1}^{i,j}$ is more complex than finding the matrices $\partial_k^i, \partial_k^j,$ etc. This results from the need to identify the subspace $C_{k+1}^{i,j}(S^j)$, and change to a basis that spans this subspace. In this example, this was achieved using the column reduction of $D_2^j$, and finding the rotation matrix $Y$. A more algorithmic approach was also introduced in Ref.~\cite{memoli2020persistent}, which builds the matrix representation of the persistent combinatorial Laplacian by expressing the matrix representation of $\partial_{k+1}^{i,j} \circ \left( \partial_{k+1}^{i,j} \right)^\dag$ in terms of the Schur complement of the submatrix $\partial_{k+1}^{j} \left( \partial_{k+1}^{j} \right)^\dag[\{I_k^j\}][\{I_k^j\}]$ in the matrix $\partial_{k+1}^{j} \left( \partial_{k+1}^{j} \right)^\dag$, where the $\{I_k^j\}$ is the set of $k$-simplices in $S^j$ that are not in $S^i$. The implementation of this Schur complement requires taking four submatrices of $\partial_{k+1}^{j} \left( \partial_{k+1}^{j} \right)^\dag$, multiplying two of the submatrices by the Moore-Penrose pseudo-inverse of the third, and subtracting this product from the first submatrix. This approach was adapted for use in the quantum algorithm for finding persistent Betti numbers in Ref.~\cite{hayakawa2021quantum}. Implementing the Schur complement requires quantum subroutines for block encoding the matrix $\partial_{k+1}^{j} \left( \partial_{k+1}^{j} \right)^\dag$, performing projections into the subspaces $S^i, S^j$ (and their complements), and being able to take the Moore-Penrose pseudo-inverse of block encoded matrices.\\

We now provide an additional discussion of issues encountered when using $P_{k}^i \partial_{k+1} P_{k+1}^j$ as a representation of the restricted boundary operator $\partial_{k+1}^{i,j}$, as was done in Ref.~\cite{ameneyro2022quantum}. We consider the following pair of complexes:

\begin{center}
\begin{tikzpicture}
[align=center,node distance=3cm]

\node [above] at (1, 2.5) {$S^i_0 = \{A, B, C, D, X\}$ \\
$S^i_1 = \{AB, BC, CD, AD\}$ \\
$S^i_2 = \{ \}$};

\filldraw [black]  (0,0) circle (3pt);
\node [left] at (0,0) {A};

\filldraw [black]  (2,0) circle (3pt); 
\node [right] at (2,0) {B};

\filldraw [black]  (2,-2) circle (3pt); 
\node [right] at (2,-2) {C};

\filldraw [black]  (0,-2) circle (3pt);
\node [left] at (0,-2) {D};

\filldraw [black]  (1,2.2) circle (3pt);
\node [left] at (1, 2.2) {X};

\draw (0,0) --(2,0);
\draw (0,0) --(0,-2);
\draw (2,0) --(2,-2);
\draw (0,-2) --(2,-2);

\node [above] at (8, 2.5) {$S^j_0 = \{A, B, C, D, X\}$ \\
$S^j_1 = \{AB, BC, CD, AD, AX, BX\}$ \\
$S^j_2 = \{ABX \}$};

\filldraw [black]  (7,0) circle (3pt);
\node [left] at (7,0) {A};

\filldraw [black]  (9,0) circle (3pt); 
\node [right] at (9,0) {B};

\filldraw [black]  (9,-2) circle (3pt); 
\node [right] at (9,-2) {C};

\filldraw [black]  (7,-2) circle (3pt);
\node [left] at (7,-2) {D};

\filldraw [black]  (8,2.2) circle (3pt);
\node [left] at (8, 2.2) {X};

\draw [black, fill=gray] (7,0) -- (8,2.2) -- (9,0);

\draw (7,0) --(9,0);
\draw (7,0) --(7,-2);
\draw (9,0) --(9,-2);
\draw (7,-2) --(9,-2);
\draw (7,0) -- (8,2.2);
\draw (8, 2.2) -- (9, 0);

\end{tikzpicture}
\end{center}

The persistent Betti number $\beta_1^{i,j}$ is 1 for this system. However, to compute the matrix representation of $P_1^i \partial_2 P_2^j$ we note that
\begin{equation*}
\partial_2 P_2^j = 
\left(\begin{array}{r}
{\color{blue}ABX} \\
1    \\
-1    \\
1    \\
\end{array}\right)\!\begin{array}{l}
\\ {\color{blue}AB} \\ {\color{blue}AX} \\ {\color{blue}BX}
\end{array}
\end{equation*}
and
\begin{equation*}
P_1^i \partial_2 P_2^j =
\left(\begin{array}{r}
{\color{blue}ABX} \\
1    \\
0    \\
0    \\
\end{array}\right)\!\begin{array}{l}
\\ {\color{blue}AB} \\ {\color{blue}AX} \\ {\color{blue}BX}
\end{array}
\end{equation*}
This matrix has a rank of 1. Hence, if we try to compute the persistent Betti number as
\begin{align}
    \beta_1^{i,j} &= \mathrm{dim}\left(\mathrm{Ker}(\partial_1^i)\right) - \mathrm{dim}\left(\mathrm{Im}(\partial_{2}^{i,j})\right) \\
    &= 1 - 1 \nonumber \\
    &= 0 \nonumber
\end{align}
which is not the expected value.

\subsection{Persistent Betti numbers from the quantum Zeno effect}\label{AppSubSub:Zeno}
In Ref.~\cite{web:LloydTalk} it is suggested that it is possible to compute the persistent Betti numbers using only the original quantum algorithm presented in Ref.~\cite{lloyd2016quantum} for computing Betti numbers. The suggested approach is to initially perform phase estimation with the combinatorial Laplacian at scale $\mu_1$, repeating until an eigenvalue of 0 is found. Phase estimation is then performed again with the combinatorial Laplacian at scale $\mu_1 + \delta \mu$, and the hole is considered to persist for at least $\delta \mu$ if an eigenvalue of 0 is measured. Eventually, at scale $\mu_2$ the hole will close fully, at which point the measured eigenvector no longer has eigenvalue 0. It is suggested that if the steps of phase estimation are performed at small enough increments of $\delta \mu$, the probability of projecting into the hole eigenstate stays high, making use of the quantum Zeno effect.\\

We have been unable to recover the suggested result, based on the following example. Consider the system

\begin{center}
\begin{tikzpicture}
[align=center,node distance=3cm]

\node [above] at (1, 2.5) {$S^i_0 = \{A, B, C, D, X\}$ \\
$S^i_1 = \{AB, BC, CD, AD\}$ \\
$S^i_2 = \{ \}$};

\filldraw [black]  (0,0) circle (3pt);
\node [left] at (0,0) {A};

\filldraw [black]  (2,0) circle (3pt); 
\node [right] at (2,0) {B};

\filldraw [black]  (2,-2) circle (3pt); 
\node [right] at (2,-2) {C};

\filldraw [black]  (0,-2) circle (3pt);
\node [left] at (0,-2) {D};

\filldraw [black]  (1,2.2) circle (3pt);
\node [left] at (1, 2.2) {X};

\draw (0,0) --(2,0);
\draw (0,0) --(0,-2);
\draw (2,0) --(2,-2);
\draw (0,-2) --(2,-2);

\node [above] at (8, 2.5) {$S^j_0 = \{A, B, C, D, X\}$ \\
$S^j_1 = \{AB, BC, CD, AD, AX, BX\}$ \\
$S^j_2 = \{ABX \}$};

\filldraw [black]  (7,0) circle (3pt);
\node [left] at (7,0) {A};

\filldraw [black]  (9,0) circle (3pt); 
\node [right] at (9,0) {B};

\filldraw [black]  (9,-2) circle (3pt); 
\node [right] at (9,-2) {C};

\filldraw [black]  (7,-2) circle (3pt);
\node [left] at (7,-2) {D};

\filldraw [black]  (8,2.2) circle (3pt);
\node [left] at (8, 2.2) {X};

\draw [black, fill=gray] (7,0) -- (8,2.2) -- (9,0);

\draw (7,0) --(9,0);
\draw (7,0) --(7,-2);
\draw (9,0) --(9,-2);
\draw (7,-2) --(9,-2);
\draw (7,0) -- (8,2.2);
\draw (8, 2.2) -- (9, 0);

\end{tikzpicture}
\end{center}

The relevant distances in this diagram are: $d_{AB} = 2, d_{AX} \approx 2.42, d_{AC} = 2\sqrt{2}$. At scale $i$ with $\mu_i = d_{AB}$, the hole $AB + BC + CD - AD$ is created. As the value of $\mu$ used to construct the combinatorial Laplacian is increased, there is no change until $\mu_j = d_{AX}$. At scale $j$ the hole still persists. However, 
objects in the kernel of the combinatorial Laplacian are the harmonic representative of the homology group, and so can have unexpected forms~\cite{lim2020hodge}. In this case, the (unnormalized) zero eigenvector of $\Delta_1^j$ is given by $3(AB +BC +CD -AD) - \partial_2(ABX)$. The squared overlap between the (normalized) new and original holes is not equal to one. As a result, there is a non-zero probability of projecting into a different eigenstate, that is not in the kernel of $\Delta_1^j$. Imagine a complex that consists of many copies of this system, separated by a large distance. As the quantum algorithm for Betti numbers works by sampling eigenstates at random, we are only able to count the fraction of eigenstates of $\Delta_1^i$ with zero eigenvalue. If we then try to project these zero eigenstates onto the eigenstates of $\Delta_1^j$, we would erroneously conclude that some non-zero fraction of the holes had closed -- leading us to believe that we have two types of holes, short-lived holes that close by stage $j$, and longer-lived holes that are still open at stage $j$. It appears that the Zeno-like procedure cannot conclusively determine persistent Betti numbers. The issue appears to be that while the parameter $\mu$ changes smoothly, the Laplacian changes discontinuously whenever a new simplex enters the filtration. As a result, the eigenstates undergo sudden jumps to include new basis states, required by the harmonic form. If these basis states were not present in the original eigenstate, then to ensure that probability is conserved, the state must also gain overlap with other (non-zero eigenvalued) eigenstates that also include the new basis states. \\

A possible modification to the idea outlined above would be to consider adiabatic state preparation between the initial combinatorial Laplacian $\Delta_k^i$ and the final combinatorial Laplacian $\Delta_k^j$. For example, we could start in the ground state of $\Delta_k^i$ (whose normalized form is $0.5(AB + BC + CD -AD)$) and then time evolve the state under a time-dependent Hamiltonian $H(s) = (1-s)\Delta_k^i + s\Delta_k^j$, interpolating slowly between $s=0$ and $s=1$. We would then perform phase estimation with $\Delta_k^j$. If we measured an eigenvalue of $0$, we would conclude that the hole is still open. However, if we measured a non-zero eigenvalue we would conclude that the hole has closed. Unfortunately, we have been unable to obtain the correct result with this approach. The initial Laplacian has three relevant degenerate ground states; $0.5(AB + BC + CD -AD)$, $AX$, and $BX$. Only the first of these is present in the complex at stage $i$, and it is this state that we start in. As soon as $s \neq 0$, the degeneracy between the states is broken, and all three states have non-zero eigenvalues (until $s=1$, when one of the states recovers an eigenvalue of 0). To see this rigorously, we expand the initial state in the non-zero eigenstates of $Q_0 (\partial H / \partial s) Q_0 = Q_0 (H(1) - H(0)) Q_0$, where $Q_0$ is the projector onto the 3 relevant zero-valued eigenvectors of $H(0)$. This determines the amplitudes of the final states, assuming we move along the path infinitely slowly. We observe that the initial state has squared overlap of 0.945 and 0.055 with two of these eigenstates (it is also possible to check the squared overlap with the low-lying eigenstates of $H(s \ll 1)$, and our results are in agreement). As a result, there will be a non-zero probability of finding the state in a non-zero eigenstate at the end of the adiabatic path. As with the example above, this would lead us to conclude that some fraction of the holes present in a complex have closed.\\

\end{document}